\newtheorem{corollary}{Corollary}
\newtheorem{definition}{Definition}
\newtheorem{lemma}{Lemma}
\newtheorem{proposition}{Proposition}
\newenvironment{proof}[1][Proof]{\textbf{#1.} }{\ \rule{0.5em}{0.5em}}
\newcommand{\ta}{\theta}
\newcommand{\la}{\lambda}
\begin{document}

\title{Data, Competition, and\ Digital Platforms\thanks{%
We acknowledge financial support through NSF\ Grant SES-1948692, the Omidyar
Network, and the Sloan Foundation. We thank audiences at Bank of Canada,
Berkeley, Indiana, Northwestern, NYU, Paris-Panth\'{e}on, Toulouse, as well
as participants in the CEPR\ Virtual IO\ Seminar, the Luohan Academy
Webinar, the FTC Microeconomics Conference, the Asia-Pacific Industrial
Organization Conference, the 2022 Market Innovation Workshop, and the 2022
NBER\ Economics of Privacy Conference. We are grateful to our discussants
Andrei Hagiu, Alessandro Pavan, and Vasiliki Skreta for helpful comments and
suggestions, and to Lucas Barros, Andrew Koh, and Nick Wu for stellar
research assistance.}}
\author{Dirk Bergemann\thanks{%
Department of Economics, Yale University, New Haven, CT 06511,
dirk.bergemann@yale.edu} \and Alessandro Bonatti\thanks{%
Sloan School of Management, MIT, Cambridge, MA 02142, bonatti@mit.edu}}
\date{\today }
\maketitle

\begin{abstract}
We analyze digital markets where a monopolist platform uses data to match
multiproduct sellers with heterogeneous consumers who can purchase both on
and off the platform. The platform sells targeted ads to sellers that
recommend their products to consumers and reveals information to consumers
about their values. The revenue-optimal mechanism is a managed advertising
campaign that matches products and preferences efficiently. In equilibrium,
sellers offer higher qualities at lower unit prices on than off the
platform. Privacy-respecting data-governance rules such as organic search
results or federated learning can lead to welfare gains for
consumers.\bigskip

\noindent \textsc{Keywords: }Data, Privacy, Data Governance, Digital
Advertising, Competition, Digital Platforms, Digital Intermediaries,
Personal Data, Matching, Price Discrimination, Automated Bidding,
Algorithmic Bidding, Managed Advertising Campaigns, Showrooming.

\noindent \textsc{JEL Classification: }D18, D44, D82, D83.\bigskip \bigskip
\end{abstract}

\newpage

\section{Introduction}

\subsection{Motivation}

The role of data in shaping competition in online markets has become a
critical issue in both economics and policy. Digital platforms such as
Amazon, Facebook, and Google in the United States, and Alibaba, JD, and
Tencent in China have collected increasingly large and precise datasets.
These platforms operate as matching engines that connect viewers and
sellers. They monetize their data by selling sponsored content and targeted
advertising. The quality of a platform's data allows better pairing of
viewers and sellers. Digital platforms not only serve as gatekeepers of
information online but also act as competition managers.\footnote{%
The European Union's Digital Markets Act establishes a set of narrowly
defined objective criteria for qualifying a large online platform as a
so-called \textquotedblleft gatekeeper\textquotedblright . These criteria
will be met if a company: (a) has a strong economic position in the EU\
market, (b) has a strong intermediation position, (c) has (or is about to
have) an entrenched and durable position in the market.}

Regulators fear that platforms may leverage their gatekeeper position to
increase merchants' market power, thereby raising their willingness to pay
for advertising.\footnote{%
The report by \cite{crem19} explicitly warns that \textquotedblleft one
cannot exclude the possibility that a dominant platform could have
incentives to sell \textquotedblleft monopoly positions\textquotedblright\
to sellers by showing buyers alternatives which do not meet their
needs.\textquotedblright} The optimal regulatory response to the current
business practices of digital platforms, if any, depends on the answers to a
number of open questions, including the following: how does the precision of
a digital platform's data affect the creation and distribution of surplus,
both on and off the platform? How do these effects depend on the intensity
of competition among sellers? How do they depend on the mechanisms for
collecting and sharing consumer data?

In this paper, we develop a model of an intermediated online marketplace and
trace how a data-rich platform creates and distributes surplus among market
participants. Our goal is to provide a tractable and flexible framework to
study digital markets where different privacy regimes can be compared. Our
model captures three ubiquitous features of digital platforms. First, nearly
every platform leverages the informational advantage to personalize the 
\emph{sponsored content} at the individual consumer level through managed
advertising campaigns.\footnote{%
Sponsored links are the only source of revenue for pure advertising
platforms, including display advertising networks such as Google, Meta,
Microsoft, Twitter, Tiktok, YouTube, and Criteo. Sponsored content is also a
significant revenue generator for several retail platforms that charge
merchant fees (eBay, Wayfair, Booking, Orbitz, Amazon) and the only source
of revenue for Alibaba's Taobao shopping platform.} Second, while price
discrimination is rare, targeted advertising and personalized
recommendations amount to \emph{product steering}.\footnote{\cite{dokm22}
document the effect of personalized recommendations on a retail platform,
and \cite{raval20} illustrates a recent shift in eBay policy.} Third, most
sellers have parallel sales channels, i.e., consumers can buy their products
both on and off digital platforms.

To capture these features, we consider a market with differentiated
quality-pricing sellers. Consumers have heterogeneous preferences for the
sellers' product lines but are imperfectly informed about their own values.
The platform's data identify the most valuable consumer-seller pair and the
most valuable product within that seller's product line. However, these data
also create the potential for product steering, whereby consumers who are
perceived to be of high value receive offers to buy higher-quality and
higher-priced goods.

A key innovation in our model is that the platform actively manages the
sellers' advertising campaigns. Managed campaigns are emerging as the
predominant mode of selling advertisements in real-world digital markets:
sellers set a fixed advertising budget, specify high-level objectives for
their campaigns, and leave the task of bidding to \textquotedblleft
auto-bidders\textquotedblright\ offered by the platform.\footnote{%
Most of online advertising is traded programatically, often through a
mechanism that maximizes the total number of clicks or conversions subject
to a return on ad spend (ROAS) constraint (\href{https://support.google.com/google-ads/answer/6268637}%
{https://support.google.com/google-ads/answer/6268637}). For recent work on
programmatic or algorithmic bidding, see \cite{agga19}, \cite{bals21}, and 
\cite{deng21}.} In our model, the digital platform receives an advertising
budget from each seller. The advertising budget then generates sponsored
listings for each specific product of that seller. In particular, the
platform advertises to each consumer the product generating the highest
social surplus for that consumer. This mechanism is akin to the managed
advertising campaigns prevalent in sponsored search and other forms of
digital advertising. In turn, these are frequently implemented through an
automated bidding mechanism by platforms such as Google, Facebook, or Tiktok.

Each seller also has a pool of consumers who shop off the platform and face
search costs. In the tradition of the \cite{diamond71} model, these
consumers have zero cost to search the first seller and positive cost to
visit additional sellers. In equilibrium, each consumer only visits the
website of a single seller---the one whose products they value most. The
presence of the off-platform sales channel restrains sellers' ability to
extract consumer surplus on the platform because the on-platform consumers
can seamlessly move from the platform to individual websites off the
platform. In particular, the more the seller wants to trade with the loyal
consumers off the platform, the less flexibility it has to offer targeted
promotions (and prices) on the platform. Thus, the consumer's choice of
sales channel limits the scope for price discrimination.

\subsection{Results}

The platform's informational advantage over consumers and the search
frictions on the platform, no matter how small, give the digital platform
significant bargaining power over sellers. Our first main result shows that
the platform can completely control consumers' shopping behavior and steer
them away from sellers who do not submit an advertising budget (Proposition %
\ref{prop_consider}). Consumers understand the managed-campaign mechanism
and expect that in equilibrium, advertised products generate the highest
value. In the presence of search costs, consumers only consider buying from
the advertised brand, whether on-platform or off-platform. As a result, the
platform restricts competition among sellers, as each seller only faces
consumers who are most interested in their products and competes with their
own off-platform offers only (Proposition \ref{menu}).

This leads to sellers facing an additional opportunity cost of generating
surplus off the platform. Not only must they concede information rents to
off-platform consumers, but they must also lower their prices on the
platform. This has two welfare consequences. First, the equilibrium quality
levels of off-platform products are distorted downward from the efficient
levels even more than in the model proposed by \cite{muro78}. Second, the
platform is able to extract most of the surplus it generates, as it only
needs to compensate sellers for the additional distortions in their
off-platform menus of products (Proposition \ref{optimalmech}).

Next, we examine the platform's dual gatekeeper role by considering two
sources of the platform's bargaining power: its information advantage over
consumers and sellers, and consumers' search costs off the platform. We
first show that it would be against the platform's interest to provide
consumers with information about non-sponsored products. We assume that
consumers on the platform observe their values perfectly, so the platform
cannot steer their behavior away from their favorite seller even if that
seller rejects the platform's offer. Complete information for on-platform
consumers does not change the equilibrium prices or products, but it reduces
the platform's fees (Proposition \ref{prop_knowntype}). We then assume that
the platform offers organic links that advertise all off-platform prices and
products to all on-platform consumers. We show that the provision of price
information introduces menu competition among sellers, reducing all
equilibrium prices, both on and off the platform, as well as the platform's
fees (Proposition \ref{prop_organic}).

We also investigate how data governance, the rules governing how consumer
data can be collected and deployed, influences the creation and distribution
of social surplus. In particular, we discuss the implications of
cohort-based advertising, as the outcome of federated learning, compared to
personalized advertising. We show that cohort-based advertising, which
protects privacy and allows consumers to retain an information advantage
over sellers on the platform, improves consumer surplus (Proposition \ref%
{floc}).

So far, the platform has been using all the additional information for
product steering and pricing recommendations. We then explore whether the
platform can do even better by employing the additional information only
partially and stochastically. When the on-platform market is large compared
to the off-platform market, using complete information indeed maximizes the
platform's advertising revenue (Proposition \ref{prop_fullinfo}). However,
when consumers only know the prior distribution and the off-platform market
is sufficiently large, the platform can increase its revenue by offering a
more limited disclosure policy, which we fully characterize (Proposition \ref%
{prop_id}).

Finally, we examine the effects of the platform size. We show that the
distortions in off-platform quality become more severe as the fraction of
on-platform consumers grows (Proposition \ref{compstat}). Holding prices
fixed, consumers benefit from shopping on a better-informed platform and
obtaining higher-quality matches. However, as more consumers join the
platform, prices rise both on and off the platform.

\subsection{Related Literature}

This paper is most closely related to the literature on information
gatekeepers pioneered by \cite{baye2001} and on the conflict of interest
between intermediaries and the consumers they serve. Many recent
contributions---including \cite{arzh11}, \cite{cosz22}, \cite{deta19}, \cite%
{gopa16}, \cite{gur22}, \cite{haju11}, \cite{inot12a}, \cite{inot12b}, \cite%
{tony22}, \cite{rase10}, and \cite{shi22}---analyze the steering role of
platforms that strategically modify search results, e.g., to match consumers
with the sellers that pay the largest commissions.\footnote{%
The trade-off between value creation through personalization and consumer
surplus extraction is also central in \cite{hive21} and \cite{ichi20aer}.}

The provision of information by a digital platform is central to the model
of \cite{dede16}, who examine a platform's incentives to provide match-value
information to differentiated sellers in a second-price auction model. More
recently, \cite{tewr22} study the signaling role of ranking the search
results, and \cite{zhmosh22} study the impact of a platform's privacy
policies on the downstream competition within and across product categories.
Relative to these papers, we allow the platform to provide information
directly to the consumer, e.g., through product reviews. Moreover, the
multiproduct sellers in our model can use the platform's information to
tailor their quality level to the consumer's preferences. This allows us to
capture surplus creation and product steering \emph{within} a match.
Finally, relative to the papers above, our model focuses on sponsored links
and advertising platforms. Hence, sellers pay fees (or bids) that do not
vary with the prices of their products.\footnote{%
In \cite{gopa22}, a monopolist platform elicits sellers' willingness to pay
for qualified consumer eyeballs through a nonlinear tariff for advertising
space, which is analogous to our managed campaign. In contrast, we
explicitly model consumer search and determine the equilibrium prices for
the consumers.}

A recent body of work including \cite{chjk19}, \cite{ammo21}, \cite{ichi21}, 
\cite{kiph21}, and \cite{bebg22} documented the \emph{data externalities}
that consumers impose on each other when they share their information with a
digital platform. In the present paper, the growth of a platform's database
(e.g., through the participation of other consumers) influences the ability
to match products to tastes but also affects each consumer's outside option.
We trace the implications of these new data externalities for product-line
design under alternative privacy regimes.

The forces at work in our paper are also related to a growing literature on
showrooming, product lines, and multiple sales channels. Prominent
contributions on these topics include \cite{bash20}, \cite{idem20}, \cite%
{mish19}, and \cite{wawr20}. In particular, \cite{anbe21} introduce the
self-preferencing problem by letting the platform choose whether to be
hybrid, i.e., to sell the private label products.\footnote{\cite{guti22}
estimates an equilibrium model of Amazon's hybrid retail platform.} Unlike
in these papers, the sellers in our model are concerned about showrooming
because the opportunity to sell on the platform benefits them through the
added value of making personalized offers.

Our analysis of parallel sales channels is also related to the papers on
\textquotedblleft partial mechanism design,\textquotedblright\ or
\textquotedblleft mechanism design with a competitive
fringe\textquotedblright ,\ e.g., \cite{phsk12}, \cite{tiro12}, \cite{cade15}%
, and \cite{fusk15}. In these papers, the platform is limited in the ability
to monopolize the market since the sellers have access to an outside option.
Our setting shares some of the same features, but in an oligopoly
environment where sellers compete for heterogeneous consumers. Furthermore,
the sellers choose their product menus understanding that customers arrive
through two different channels and that they have distinct information in
each channel.

At a broad level, this paper relates to information structures in
advertising auctions, e.g., \cite{bebm21}, and to nonlinear pricing, market
segmentation, and competition, e.g., \cite{bebm15}, \cite{bona11}, \cite%
{elgk20}, and \cite{yang22}. Finally, our analysis can be easily extended to
discuss self-preferencing by a monopoly platform. In this sense, our paper
also relates to \cite{hatw20}, \cite{muir21}, \cite{lam21}, \cite{lee21}, 
\cite{muso21}, and \cite{papp20}.

Finally, in parallel work, \cite{bebw23} compare auto-bidding through
managed advertising campaigns and data-augmented second-price auctions for
online advertising. They show that the managed campaign mechanism increases
the revenue of the digital platform and, with sufficient competition among
advertisers, it also increases consumer surplus.

\section{Model\label{sec:model}}

\paragraph{Sellers and Consumers}

We consider a digital platform and $J$ differentiated multiproduct sellers.
Each seller $j$ offers a product line (or menu) of quality differentiated
products. As in \cite{muro78}, each seller $j$ can produce a good of quality 
$q_{j}$ at a cost 
\begin{equation*}
c(q_{j})=q_{j}^{2}/2.
\end{equation*}%
There is a unit mass of consumers with single-unit demand. Each consumer is
described by a vector $\theta $\ of willingness-to-pay for quality for each
seller $j$'s products,%
\begin{equation*}
\theta =\left( \theta _{1},...,\theta _{j},...,\theta _{J}\right) \in
\lbrack \theta _{L},\theta _{H}]^{J},
\end{equation*}%
with $0\leq \theta _{L}<\theta _{H}<\infty $. We refer to the vector $\theta 
$ as the value profile of consumer $i$. Given a quality $q_{j}\,$\ offered
by seller $j$, the consumer receives a gross utility: 
\begin{equation*}
u\left( \theta ,q_{j}\right) =\theta _{j}\cdot q_{j}\text{.}
\end{equation*}

\paragraph{Information}

The value $\theta _{j}$\ of each consumer for each seller $j\ $are i.i.d.
across consumers and sellers with marginal distribution $F(\theta _{j})$ and
density $f(\theta _{j})$. Initially, each consumer has private information
with \emph{expectation} $m_{j}$\ (or expected value)\ about their true \emph{%
value} $\theta _{j}$. (We could alternatively refer to $m_{j}$ and $\theta
_{j}$ as interim and ex-post value, respectively.) The expectations $m_{j}$\
are assumed to be i.i.d. with marginal distribution $G\left( m_{j}\right) $
and density $g(m_{j})$. The distribution of values and expectations, $F$ and 
$G$, are implicitly related by an information structure. By \cite{blac51},
Theorem 5, there exists a signal $s$ that induces a distribution $G$\ of
expected values if and only if $F$ is a mean-preserving spread of $G$. We
recall that $F$ is defined to be a mean-preserving spread of $G$ if%
\begin{equation*}
\int_{v}^{\infty }F({t})dt\leq \int_{v}^{\infty }G({t})dt\text{, }\forall {v}%
\in \mathbb{R}_{+},
\end{equation*}%
with equality for $v=0$. If $F$ is a mean-preserving spread of $G$, we write 
$F\succ G$. (Conversely, $G$ is referred to as a mean-preserving contraction
of the given distribution $F$.)

\paragraph{On Platform}

A fraction $\lambda \in \lbrack 0,1]$ of all consumers uses the platform to
find a product.\footnote{%
We can endogenize the fraction of on-platform consumers $\lambda $ by
introducing heterogeneity in the cost and benefits of using the platform
(e.g., in the loss of privacy due to leaving \textquotedblleft
footprints\textquotedblright\ online).} The platform has access to extensive
data and knows each consumer's value profile $\theta $, while the sellers
only know the corresponding prior distribution $F$. The platform offers a
single sponsored link for a product $q_{j}$ sold by a seller $j$. The
platform uses a \emph{managed (advertising) campaign} mechanism to select
which seller and which product to sponsor to each consumer. This mechanism
captures all the salient features of automated bidding on real-world digital
advertising platforms.

In a managed campaign mechanism, the platform requests an advertising budget 
$t\in \mathbb{R}_{+}$ from each seller and announces a \emph{selection rule}%
. Any seller who pays the required budget submits quality and price
functions $q_{j}\left( \theta \right) $ and $p_{j}\left( \theta \right) $
representing the product and price that seller $j$ intends to advertise to
each consumer $\theta $. For each consumer value profile $\theta $, the
platform then advertises a single product $\left( q_{j}\left( \theta \right)
,p_{j}\left( \theta \right) \right) $ according to the selection rule. The
platform then chooses a sponsored seller according to the selection (or
steering) rule: 
\begin{equation}
\sigma :\Theta \times \mathbb{R}_{+}^{\Theta }\times \mathbb{R}_{+}^{\Theta
}\rightarrow \left[ J\right] \text{,}  \label{sigma}
\end{equation}%
and advertises the selected seller $j$'s product and price $q_{j}\left(
\theta \right) $ and $p_{j}\left( \theta \right) $.

Until Section \ref{friction}, we shall maintain the assumption that, upon
advertising product $q_{j}$ to consumer $\theta $ through the sponsored
link, the platform directly provides additional information to the consumer
that fully reveals their value $\theta _{j}$ for the advertised product.

\paragraph{Off Platform}

The remaining $1-\lambda $ consumers buy off the platform, e.g., from the
merchants' own websites or physical stores. Off the platform, the consumers
have their expectation $m$ and the sellers know the corresponding
distribution $G$. On the platform, there is extensive data and the platform
knows each consumer's value profile $\theta .$

The consumers who buy off the platform face positive (and arbitrarily small)
search costs beyond the first search, as in \cite{diamond71} and \cite%
{anre99}. The expectation $m$\ is private information of the consumer.
Therefore, seller $j$ elicits the consumer's private information through a
menu of (price, quality) pairs%
\begin{equation}
\{(\widehat{p}_{j}(m_{j}),\widehat{q}_{j}\left( m_{j}\right) )\}_{m_{j}\in %
\left[ \theta _{L},\theta _{H}\right] }  \label{menu0}
\end{equation}%
as in \cite{muro78} and \cite{MR84}. Throughout the paper, we thus use the
circumflex to distinguish off-platform variables from on-platform variables.
Importantly, the goods being sold are not experience or inspection goods: to
learn the vector $\theta $, consumers and sellers must gain access to the
platform's data.

After receiving seller $j$'s offer on the platform and learning their value $%
\theta _{j}$ for the advertised product $q_{j}$, each consumer can search
off the platform and use the newly gained information to select a product
from any seller. For example, the consumer can buy from the off-platform
schedule (\ref{menu0}) posted by the advertised seller $j$. Figure \ref%
{fig_model} summarizes the interaction between the agents and their actions
in our model. 
\begin{figure}[htbp]\centering%
\includegraphics[width=.57\textwidth]{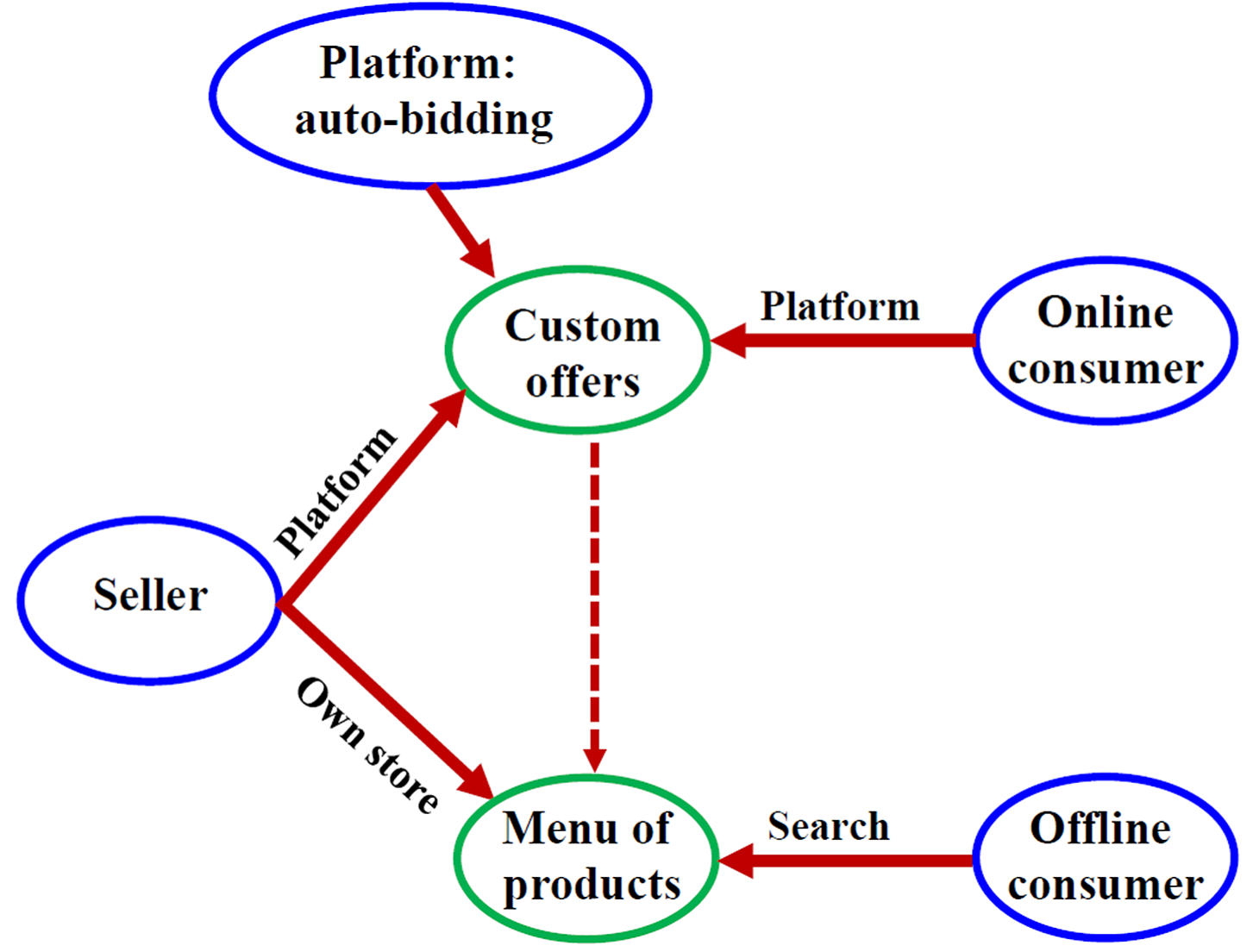}
\caption{On- and Off-Platform Matching}%
\label{fig_model}%
\end{figure}%

\paragraph{Timing and Equilibrium}

We consider simultaneous choices of (on- and off-platform) prices to capture
the great flexibility that algorithmic pricing offers the sellers both on
and off the platform. The timing of our game is as follows:

\begin{enumerate}
\item The platform announces a selection rule $\sigma $\ and requests an
advertising budget $t_{j}$ from each seller $j$.

\item Sellers simultaneously set off-platform products $\widehat{q}%
_{j}\left( m\right) $ and prices $\widehat{p}_{j}\left( m\right) $, choose
whether to submit the budget, and if so, what products $q_{j}\left( \theta
\right) $ and prices $p_{j}\left( \theta \right) $ to advertise.

\item The platform shows a single advertisement---a product $q_{j}(\theta )$
and a price $p_{j}(\theta )$---to each on-platform consumer according to the
announced selection rule.

\item The on-platform consumers learn their value $\theta _{j}$ for the
advertised seller; they can purchase the advertised product on the platform
or search off the platform.
\end{enumerate}

\begin{definition}[Symmetric Perfect Bayesian Equilibrium]
\quad \newline
We consider symmetric Perfect Bayesian Equilibria. The consumers have
symmetric beliefs over the sellers' off-platform menus both on and off the
equilibrium path.
\end{definition}

Thus, consumers expect sellers to play symmetric strategies on the
equilibrium path, and they continue to hold symmetric (though not
necessarily passive) beliefs over every seller's prices and qualities even
when they observe a deviation (either on or off the platform).\footnote{%
Symmetric beliefs off the path of play facilitate our exposition of the
intuition around Propositions \ref{prop_consider} and \ref{prop_knowntype}.
However, the results also hold under the passive beliefs refinement.}

\paragraph{Discussion}

We briefly comment on a few central aspects of the model. The managed
campaign mechanism has several important properties. First, sellers do not
acquire the platform's data, but they condition products and prices on the
platform's information about the consumer. In other words, the consumer's
value $\theta $ acts as a \emph{targeting category}. This corresponds to an
indirect sale of information as discussed in \cite{adpf90} and \cite{bebo19}%
. Second, fixed payments for advertising slots are similar to automated
bidding in ad auctions. Sellers submit a budget and upload the ads for the
products they wish to show to select consumers. Third, because each seller
can tailor the product offer to each consumer value, the platform creates an
opportunity for surplus extraction through \emph{product steering}, without
personalizing prices.

Several assumptions in the above mechanism can be easily relaxed. In
particular, we allow the sellers' schedules $q_{j}(\theta )$ and $%
p_{j}(\theta )$ to condition on the entire value $\theta $, and not just $%
\theta _{j}$. However, this additional flexibility will be redundant in
equilibrium. Our model of a single sponsored link is also simple in that the
platform sells both information and recognition to a single seller and a
specific product. In Section \ref{friction}, we extend our model to allow
for all brands and products to be present on the platform through organic
search results that advertise their off-platform offers. Finally, the direct
revelation of information to consumers captures the rich contextual detail
that some retail platforms provide to their users. Here, we assume that the
platform fully utilizes the informational advantage to generate surplus
through efficient product-consumer matching. In Section \ref{sec_infodesign}%
, we relax the assumption of perfect revelation of $\theta _{j}$, and we
study information design by the platform.

\section{A First Example: Single Seller\label{sec:exam}}

Before we begin with the analysis of the complete model, we illustrate some
of the central implications of our model with a simple example. The example
has a single seller (rather than many sellers) and binary values (rather
than a multidimensional continuum of values). In addition, the distribution
of consumer values and expected values is identical on and off the platform;
thus, $F=G$. The platform retains an informational advantage over the
sellers because it learns the value of the consumer that remains private
information off the platform.

The central result in this section (Proposition \ref{prop_single}) describes
the relationship between on-platform and off-platform pricing and quality
provision. Thus, even this very basic set-up sheds light on the fundamental
interaction between on-platform and off-platform allocations.

We consider a single seller that encounters a mass $\lambda $ of consumers
on the platform and a mass $1-\lambda $ of consumers off the platform.
Consumers can be of two values, $\theta \in \left\{ \theta _{L},\theta
_{H}\right\} $, each with probability $f\left( \theta \right) $. The
platform charges an advertising budget $t$ to the seller. With the provided
budget the seller earns the right to offer a personalized product to each
value of the consumer. However, each consumer on the platform can also shop
from the seller's own website (i.e., buy products the seller offers
off-platform). Thus, the consumer's option to \textquotedblleft
showroom\textquotedblright\ limits the seller's ability to price
discriminate.

If the seller accepts the platform's request of an advertising budget, it
offers a menu of products on the platform, which we describe in terms of the
product qualities $q\left( \theta \right) $\ and information rents $U\left(
\theta \right) $: 
\begin{equation}
\left\{ \left( q\left( \theta \right) ,U\left( \theta \right) \right)
\right\} _{\theta \in \left\{ \theta _{L},\theta _{H}\right\} },  \label{m1}
\end{equation}%
where the information rent is the net utility of the consumer on the
platform in equilibrium:%
\begin{equation*}
U\left( \theta \right) \triangleq \theta q\left( \theta \right) -p\left(
\theta \right) ,\ \ \theta \in \left\{ \theta _{L},\theta _{H}\right\} \text{%
.}
\end{equation*}%
The seller also offers a menu off the platform, denoted by 
\begin{equation}
\{(\widehat{q}\left( \theta \right) ,\widehat{U}\left( \theta \right)
)\}_{\theta \in \left\{ \theta _{L},\theta _{H}\right\} }.  \label{m2}
\end{equation}%
Throughout the paper, we thus use the circumflex to distinguish off-platform
variables from on-platform variables. The seller's profit is:%
\begin{equation*}
\max_{q,U}\sum_{\theta \in \left\{ \theta _{L},\theta _{H}\right\} }f\left(
\theta \right) [\lambda \left( \theta q\left( \theta \right) -q\left( \theta
\right) ^{2}/2-U\left( \theta \right) \right) +\left( 1-\lambda \right)
(\theta \widehat{q}\left( \theta \right) -\widehat{q}\left( \theta \right)
^{2}/2-\widehat{U}\left( \theta \right) )]\text{.}
\end{equation*}%
The seller maximizes profit subject to the individual rationality
constraints on and off the platform and to the incentive compatibility
constraints \emph{off }the platform because the consumers on the platform
receive a single targeted product offer. In addition, if the seller wants a
consumer to accept their targeted offer, the seller must induce the consumer
not to buy off the platform. The seller then faces the following new
\textquotedblleft showrooming\textquotedblright\ constraints:%
\begin{equation*}
U\left( \theta \right) \geq \widehat{U}\left( \theta \right) ,~\theta \in
\left\{ \theta _{L},\theta _{H}\right\} .
\end{equation*}%
In other words, each consumer $\theta $ must prefer to purchase on the
platform rather than to use the platform as a showroom and seek an
alternative quality-price pair off the platform.

It follows that the seller should offer the socially efficient quality
levels on the platform and that the showrooming constraint should bind, 
\begin{equation}
q\left( \theta \right) =\theta ~\text{and~}U\left( \theta \right) =\widehat{U%
}\left( \theta \right) ,~\theta \in \left\{ \theta _{L},\theta _{H}\right\} .
\label{pa}
\end{equation}%
We now characterize the quality levels off the platform. As usual, the
equilibrium menu does not distort at the top $\left( \widehat{q}\left(
\theta _{H}\right) =\theta _{H}\right) $ and offers no rents at the bottom $(%
\widehat{U}\left( \theta _{L}\right) =0)$. Furthermore, the incentive
compatibility constraint binds for the high value. With these preliminary
results, the seller's objective can be written as%
\begin{eqnarray}
&&\max_{q,U}\left[ \lambda (f\left( \theta _{L}\right) \theta
_{L}^{2}/2+f\left( \theta _{H}\right) (\theta _{H}^{2}/2-\widehat{U}\left(
\theta _{H}\right) ))\right.  \label{ss_prob} \\
&&\left. +\left( 1-\lambda \right) (f\left( \theta _{L}\right) \left( \theta
_{L}\widehat{q}\left( \theta _{L}\right) -\widehat{q}\left( \theta
_{L}\right) ^{2}/2\right) +f\left( \theta _{H}\right) (\theta _{H}^{2}/2-%
\widehat{U}\left( \theta _{H}\right) ))\right] \text{ }  \notag
\end{eqnarray}%
subject to the constraint 
\begin{equation}
\widehat{U}\left( \theta _{H}\right) =\left( \theta _{H}-\theta _{L}\right) 
\widehat{q}\left( \theta _{L}\right) .  \label{ICH}
\end{equation}%
From this expression, it is immediate that the provision of quality to the
low value off the platform is doubly costly for the seller: it forces the
seller to lower the price for the high value off the platform, and it also
forces lower prices on the platform.

\begin{proposition}[Single Seller and Binary Values]
\label{prop_single}\qquad \newline
The optimal off-platform menu of products for the seller is:%
\begin{eqnarray}
\widehat{q}\left( \theta _{L}\right) &=&\max \left\{ 0,\theta _{L}-\frac{%
f\left( \theta _{H}\right) }{f\left( \theta _{L}\right) }\left( \theta
_{H}-\theta _{L}\right) \left( 1+\frac{\lambda }{1-\lambda }\right) \right\}
,  \label{qlow} \\
\widehat{q}\left( \theta _{H}\right) &=&\theta _{H}.  \notag
\end{eqnarray}
\end{proposition}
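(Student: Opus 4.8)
The plan is to treat Proposition \ref{prop_single} as the solution of a one-dimensional concave program. The text has already reduced the seller's problem to maximizing \eqref{ss_prob} subject to \eqref{ICH}, using the preliminary observations that on-platform qualities are efficient with binding showrooming constraints ($q(\theta)=\theta$, $U(\theta)=\widehat U(\theta)$), that there is no distortion at the top ($\widehat q(\theta_H)=\theta_H$) and no rent at the bottom ($\widehat U(\theta_L)=0$), and that the downward incentive constraint for the high type off the platform binds. Granting these, the only free variable is $x:=\widehat q(\theta_L)$, and substituting \eqref{ICH} into \eqref{ss_prob} turns the objective, up to terms independent of $x$, into
\[
\Phi(x)=(1-\lambda)f(\theta_L)\Big(\theta_L x-\tfrac{x^{2}}{2}\Big)-\big[\lambda+(1-\lambda)\big]f(\theta_H)(\theta_H-\theta_L)\,x .
\]
The two occurrences of the rent term collapse to the single coefficient $\lambda+(1-\lambda)=1$, which is the analytic counterpart of the ``doubly costly'' remark preceding the proposition: raising $\widehat q(\theta_L)$ inflates the rent left to the high type both off the platform and, via the binding showrooming constraint, on the platform.

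First I would observe that $\Phi$ is strictly concave in $x$, so the first-order condition is necessary and sufficient on $\mathbb{R}$; solving $\Phi'(x)=0$ gives
\[
x^{\ast}=\theta_L-\frac{f(\theta_H)}{(1-\lambda)f(\theta_L)}(\theta_H-\theta_L)=\theta_L-\frac{f(\theta_H)}{f(\theta_L)}(\theta_H-\theta_L)\Big(1+\frac{\lambda}{1-\lambda}\Big),
\]
using $\tfrac{1}{1-\lambda}=1+\tfrac{\lambda}{1-\lambda}$. Since quality must be nonnegative and $\Phi$ is concave, the maximizer over $[0,\infty)$ is $\max\{0,x^{\ast}\}$, which is exactly \eqref{qlow}; combined with the no-distortion-at-the-top condition this is the asserted menu.

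Second, I would close the argument by checking that the constraints suppressed in passing to the relaxed program are slack at this candidate. Monotonicity $\widehat q(\theta_L)\le \widehat q(\theta_H)$ holds trivially because $x^{\ast}\le\theta_L<\theta_H$; the standard two-type argument then shows that the neglected upward incentive constraint for the low type is implied by monotonicity together with the binding downward constraint. The high type's off-platform participation constraint follows from $\widehat U(\theta_H)=(\theta_H-\theta_L)\widehat q(\theta_L)\ge 0=\widehat U(\theta_L)$, and the on-platform participation and showrooming constraints hold by construction since $U(\theta)=\widehat U(\theta)\ge 0$.

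The main obstacle is not the univariate calculus but justifying the reduction itself, i.e. the set of preliminary claims inherited from the text: that efficient on-platform quality with binding showrooming is optimal, and that off the platform only $\widehat U(\theta_L)=0$ and the high type's downward incentive constraint bind. I would argue the first by noting that, with $\widehat U(\theta)$ pinned down by the off-platform menu, the seller's on-platform profit from type $\theta$ is $\theta q(\theta)-q(\theta)^{2}/2-U(\theta)$, which is maximized at $q(\theta)=\theta$ and is decreasing in $U(\theta)$, so the seller drives $U(\theta)$ down until $U(\theta)=\widehat U(\theta)$; and the second by the usual monotone-ironing logic for a two-type screening menu, with an ex post verification of the omitted constraints as above. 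Once those building blocks are in place, the proposition follows from the concave program displayed above.
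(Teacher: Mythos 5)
Your proposal is correct and follows essentially the same route as the paper's (very terse) proof: substitute the binding incentive constraint (\ref{ICH}) into the objective (\ref{ss_prob}), differentiate with respect to $\widehat q(\theta_L)$, and truncate at zero, which yields (\ref{qlow}). Your additional verification that the omitted constraints (monotonicity, the low type's upward incentive constraint, participation) are slack is standard two-type bookkeeping that the paper leaves implicit, and your computation of the coefficient $\lambda+(1-\lambda)=1$ on the rent term correctly reproduces the ``doubly costly'' feature discussed in the text.
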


We relegate the formal proof of all our results to the Appendix. Let us now
compare the optimal menu with the classic nonlinear pricing solution as in 
\cite{muro78}, which corresponds to the case $\lambda =0$. In that case, we
would have%
\begin{eqnarray}
\widehat{q}\left( \theta _{L}\right) &=&\max \left\{ 0,\theta _{L}-\frac{%
f\left( \theta _{H}\right) }{f\left( \theta _{L}\right) }\left( \theta
_{H}-\theta _{L}\right) \right\} ,  \label{qold} \\
\widehat{q}\left( \theta _{H}\right) &=&\theta _{H}.  \notag
\end{eqnarray}%
Proposition \ref{prop_single} indicates an additional opportunity cost of
serving the low value off the platform. Indeed, it is not difficult to find
parameters (e.g., $\lambda $ large enough) for which the quality level
on-platform as in (\ref{qold}) is strictly positive but the quality
off-platform as in (\ref{qlow}) is zero. Thus, without a platform, the
seller would offer a low-quality product to the low value. However, for a
sufficiently large platform, the low value is only offered a product on the
platform, where the seller can make a different personalized offer to the
high value. This is not the case off the platform, where the seller prefers
to forego sales of the low product, to sell product $q\left( \theta
_{H}\right) =\theta _{H}$ at a higher price on both channels. Indeed, when $%
\widehat{q}\left( \theta _{L}\right) =0$, no consumer value receives any
rent on or off the platform.

Finally, to determine the optimal advertising budget $t^{\ast }$, we need to
consider what the on-platform consumers would do if the seller did not
advertise. If these consumers can buy off the platform, the optimal
advertising budget extracts the seller's extra profit relative to offering
the menu in (\ref{qold}) to all consumers. If these consumers were not to
buy at all, the seller's outside option is scaled by a factor $1-\lambda $,
and the optimal advertising budget is correspondingly higher.

The environment with many sellers and many values that we consider next
requires a richer analysis. With many sellers, we must consider how the
information on the platform impacts the search behavior off the platform
(Proposition \ref{prop_consider}). In turn, this determines the shape of the
menu offered by the sellers in the presence of competition (Proposition \ref%
{menu}) and the nature of the revenue-maximizing mechanism for the platform
(Proposition \ref{optimalmech}).

\section{Managed Campaign and Showrooming\label{sec:search}}

We now analyze the environment with many sellers and a multidimensional
continuum of values as introduced in Section \ref{sec:model}. Our objective
is to establish the equilibrium patterns of consumer search induced by the
informational advantage of the platform.\ (This advantage is captured by the
distinction between values $\theta _{j}$ with distribution $F$ on the
platform and expected values $m_{j}$ with distribution $G$ off the platform.)

We uncover the following tradeoff: off the platform, each seller $j$ faces
those\ consumers who value their product the most based on their expected
value $m$. However, trade takes place under asymmetric information, i.e.,
the seller must elicit the consumers' willingness to pay. In contrast, the
platform enables consumers and sellers to interact under symmetric
information. Thus, the sellers are willing to pay for the right to make a
personalized offer to each consumer they are matched to under the platform's
managed campaign mechanism. However, if the winning seller wants a consumer
to accept their personalized offer, this seller must induce the consumer not
to buy from the off-platform store, i.e., not to use the platform for
\textquotedblleft showrooming.\textquotedblright\ Thus, the seller's ability
to product steer and price discriminate on the platform is limited by the
presence of the off-platform channel.

\subsection{Managed Campaign}

As we introduced in Section \ref{sec:model}, the platform designs a managed
advertising campaign to select a sponsored product. This mechanism is
defined formally as follows.

\begin{definition}[Managed Advertising Campaign]
\quad \newline
In a managed campaign mechanism:

\begin{enumerate}
\item the platform requests an advertising budget $t\in \mathbb{R}_{+}$ from
each seller as a take-it-or-leave-it offer;

\item each participating seller $j$ submits schedules $q_{j}:\Theta
\rightarrow \mathbb{R}_{+}$ and $p_{j}:\Theta \rightarrow \mathbb{R}_{+}$,
which represent the quality and price seller $j$ intends to advertise to
each consumer $\theta $;

\item for each consumer value profile $\theta \in \left[ \theta _{L},\theta
_{H}\right] ^{J}$, the platform chooses a sponsored seller according to the
selection mapping 
\begin{equation*}
\sigma :\Theta \times \mathbb{R}_{+}^{\Theta }\times \mathbb{R}_{+}^{\Theta
}\rightarrow \left[ J\right] \text{,}
\end{equation*}%
and advertises the selected seller $j$'s product and price $q_{j}\left(
\theta \right) $ and $p_{j}\left( \theta \right) .$\bigskip
\end{enumerate}
\end{definition}

Throughout the paper, we focus on a specific selection rule, namely the one
that matches consumers and products efficiently. Indeed, we establish the
revenue optimality of this mechanism in Proposition \ref{optimalmech}.

\begin{definition}[Efficient Steering]
\quad \newline
For each value profile $\theta $, the platform chooses the seller that
maximizes the social value of the match 
\begin{equation*}
\sigma ^{\ast }\left( \theta ,q\left( \theta \right) ,p\left( \theta \right)
\right) =\arg \max_{j}\left[ \theta _{j}q_{j}\left( \theta \right)
-c(q_{j}\left( \theta \right) )\right] \text{,}
\end{equation*}%
among all the sellers that participate in the mechanism.
\end{definition}

We now derive the equilibrium choice patterns when the platform steers
consumers to products efficiently.

\subsection{Choice Patterns}

We begin with the off-platform consumers. These consumers (which have mass $%
1-\lambda $) face positive search costs beyond the first seller. As a
result, in any symmetric equilibrium, a consumer with expected value $m$
visits only the seller who offers the highest expected value:%
\begin{equation*}
j^{\left( 1\right) }=\arg \max_{j}m_{j}.
\end{equation*}%
This result does not depend on the magnitude of the search costs as
established famously by \cite{diamond71}. Moreover, if the platform has a
strict informational advantage ($F\succ G$), the on-platform consumers
(which have mass $\lambda $) infer that the advertised seller $j^{\ast }$
maximizes their willingness to pay, i.e., $\theta _{j^{\ast
}}=\max_{j}\theta _{j}.$ Because these consumers expect symmetric menus off
the platform and the information rent function associated with those menus
is strictly increasing, these consumers consider products offered by the
advertised seller $j^{\ast }$ only.

\begin{proposition}[Consideration Sets]
\label{prop_consider}\qquad \newline
Every on-platform consumer $\theta $ compares the advertised seller's
on-platform offer $(p_{j^{\ast }}(\theta ),q_{j^{\ast }}(\theta ))$ only
with the corresponding off-platform offer $(\widehat{p}_{j^{\ast }}(\theta
_{j^{\ast }}),\widehat{q}_{j^{\ast }}(\theta _{j^{\ast }}))$.
\end{proposition}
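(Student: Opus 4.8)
The plan is to show that, given the symmetric equilibrium strategies and beliefs -- efficient on-platform qualities $q_j(\theta)=\theta_j$, a common off-platform menu whose information-rent function $\hat U(x):=\max_{m'}\left[\,x\,\hat q(m')-\hat p(m')\,\right]$ is nondecreasing, and the platform's Efficient Steering rule -- the optimal behavior of any on-platform consumer collapses to a binary comparison. The argument has three ingredients: a \cite{diamond71}-style no-further-search step; the inference $\theta_{j^{\ast}}=\max_j\theta_j$ implied by Efficient Steering; and a conditional-expectation comparison identifying $j^{\ast}$ as the consumer's best off-platform target.

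\emph{Off-platform search.} I would first record how a consumer behaves off the platform. In a symmetric equilibrium all sellers post the same off-platform menu, so a consumer who has already visited one seller there anticipates that every other seller's menu is identical; since the goods are not inspection goods (visiting reveals nothing about $\theta$) and every search beyond the first is costly, a further off-platform search carries zero marginal benefit and positive cost. Exactly as in \cite{diamond71} and \cite{anre99}, each off-platform shopper therefore visits a single seller, independently of the size of the search cost. A consumer who reaches seller $k$ holding only a conditional mean value $\bar\theta_k$ for it optimally selects the item designed for type $\bar\theta_k$ and earns $\hat U(\bar\theta_k)$; a consumer who knows the value is exactly $v$ selects the item for type $v$ and earns $\hat U(v)$. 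In particular, a consumer who has learned $\theta_{j^{\ast}}$ and reaches seller $j^{\ast}$ off the platform selects precisely $(\hat p_{j^{\ast}}(\theta_{j^{\ast}}),\hat q_{j^{\ast}}(\theta_{j^{\ast}}))$ -- the ``corresponding off-platform offer'' in the statement.

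\emph{On-platform inference and the comparison.} Next I would invoke Efficient Steering together with efficiency of on-platform quality: with $q_j(\theta)=\theta_j$ the social value of the match with seller $j$ is $\theta_j^2/2$, strictly increasing in $\theta_j$, so $\sigma^{\ast}$ selects $j^{\ast}\in\arg\max_j\theta_j$ and a consumer who understands the mechanism infers $\theta_{j^{\ast}}=\max_j\theta_j$ upon seeing $j^{\ast}$ advertised (this is where $F\succ G$ enters: the inference is informative relative to the consumer's prior expectations $m$). Consequently $\theta_k\le\theta_{j^{\ast}}$ surely for every $k\neq j^{\ast}$; conditioning on the consumer's information -- which determines $\theta_{j^{\ast}}$ and $m$ but, by independence across sellers, leaves $\theta_k$ random -- its conditional mean obeys $\bar\theta_k=\mathbb{E}[\theta_k\mid\text{info}]\le\theta_{j^{\ast}}$. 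Now compare the consumer's options if it shops off the platform: by the Diamond step it reaches at most one seller; visiting $j^{\ast}$ gives $\hat U(\theta_{j^{\ast}})$, while visiting any $k\neq j^{\ast}$ gives at most $\hat U(\bar\theta_k)\le\hat U(\theta_{j^{\ast}})$ since $\hat U$ is nondecreasing. Hence the consumer (weakly, and strictly for almost every $\theta$) prefers to reach $j^{\ast}$ off the platform.

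\emph{Conclusion and main obstacle.} Combining the pieces: on the platform the consumer is shown only seller $j^{\ast}$'s ad, and if it goes off the platform it reaches only seller $j^{\ast}$ and takes item $(\hat p_{j^{\ast}}(\theta_{j^{\ast}}),\hat q_{j^{\ast}}(\theta_{j^{\ast}}))$; since individual rationality makes both offers deliver nonnegative surplus, the no-purchase option is dominated, and the consumer's choice problem reduces to comparing $(p_{j^{\ast}}(\theta),q_{j^{\ast}}(\theta))$ with $(\hat p_{j^{\ast}}(\theta_{j^{\ast}}),\hat q_{j^{\ast}}(\theta_{j^{\ast}}))$, as claimed. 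I expect the main obstacle to be the logical bookkeeping around beliefs and potential circularity: the inference $\theta_{j^{\ast}}=\max_j\theta_j$ requires consumers to correctly anticipate efficient on-platform qualities, a property pinned down jointly with the rest of the equilibrium in Propositions \ref{menu}--\ref{optimalmech}, so the proposition should be read as a property of best responses under equilibrium beliefs; and the no-search step for an on-platform consumer who \emph{deviates} to off-platform search must use the symmetric off-path beliefs built into the equilibrium definition, so that the consumer still treats every seller's off-platform menu as the common one.
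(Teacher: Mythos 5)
Your proposal is correct and follows essentially the same route as the paper's proof: the inference $\theta_{j^{\ast}}=\max_{j}\theta_{j}$ from efficient steering, the monotonicity of the common off-platform rent function making $j^{\ast}$ the unique seller worth visiting off the platform, incentive compatibility reducing that visit to the single item $(\widehat{p}_{j^{\ast}}(\theta_{j^{\ast}}),\widehat{q}_{j^{\ast}}(\theta_{j^{\ast}}))$, and the symmetric off-path beliefs handling nonparticipating sellers. Your conditional-expectation step ($\bar{\theta}_{k}\leq \theta_{j^{\ast}}$ for $k\neq j^{\ast}$) just makes explicit what the paper leaves implicit when it asserts the consumer does not search past the advertised seller.
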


The platform augments the expectation of each consumer with additional data
that lead to a revision from the expected value $m$ to the (true) value $%
\theta $. Figure \ref{fig_pattern} below illustrates the choice behavior by
a consumer whose expected value $m$\ rank the sellers differently relative
to (true) value $\theta $. The consumer has two possible choices, thus $J=2$
and the expected value $m$ suggests that seller $1$ offers a higher value,
thus $m_{1}>m_{2}$. Now suppose that on the platform the consumer is shown
an advertisement by seller, $j=2$. Indeed, the platform reveals to the
consumer the value $\theta _{2}$, and thus they will infer that $\theta
_{2}>\theta _{1}$. Therefore, by the above Proposition, the consumer either
accepts seller $2$'s offer or shops off the platform from seller $2$, but
now with full knowledge of their value $\theta _{2}$.

\begin{figure}[htbp]\centering%
\includegraphics[width=.8\textwidth]{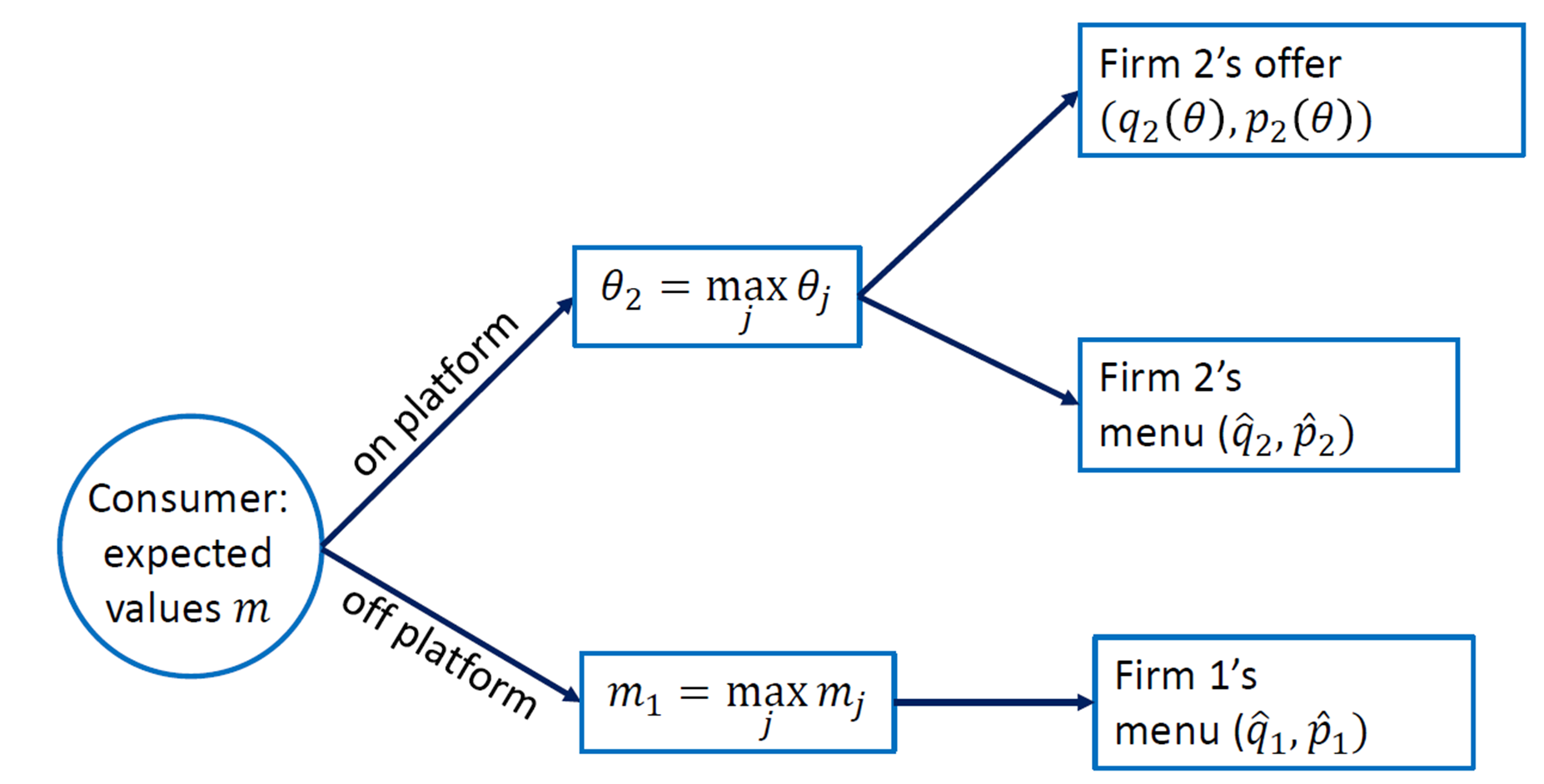}
\caption{Possible Consumer Consideration Sets}%
\label{fig_pattern}%
\end{figure}%

An important implication of Proposition \ref{prop_consider} is that every
consumer will (possibly incorrectly) buy from a competing seller if they do
not see their favorite seller's ad. Thus, every seller realizes that
participating in the platform's mechanism is necessary to access \emph{any}
of the on-platform consumers.

Indeed, as the platform has better information than the consumers, any
symmetric equilibrium of the game is outcome-equivalent to a simpler model
where each seller has a group of $(1-\lambda )/J$ customers who only
consider their brand. These consumers buy a product variety that depends on
their expected value $m_{j}$, which is distributed according to $G^{J}$. The
remaining $\lambda $ consumers are currently not loyal shoppers for any
brand (i.e., no seller is in their consideration set), but they become aware
of a buying opportunity upon seeing an ad. In this case, they can buy from
the only seller in their consideration set---the seller with the sponsored
link or advertisement.\footnote{\cite{memupa23} consider the impact of
exogenous information at each stage of a consumer's sequential search
process, while \cite{chen22} studies the evolution of a consumers'
consideration set under the equilibrium advertising levels.}

This alternative interpretation in terms of endogenous consideration sets
requires the platform to hold an (arbitrarily small) informational advantage
relative to the consumers. In Section \ref{friction}, we show that, without
an informational advantage, the platform does not control the consumers'
outside options. Instead, the consumers' expected value fully determines
which seller they would visit off the platform.

\subsection{Showrooming}

The platform generates surplus by matching each $\theta $
to the product $q_{j}\left( \theta \right) $ that generates the largest
match value. As information is symmetric between the consumer and the
selected seller, the seller can extract a substantial share of the created
social surplus. To wit, the extraction of the surplus does not occur through
personalized price discrimination but through product steering (thus a form
of second-degree price discrimination). The only limit on surplus extraction
by the advertising seller is given by the \textquotedblleft showrooming
constraint,\textquotedblright\ which is a necessary condition for seller $j$
to make a sale on the platform:%
\begin{equation}
\theta _{j}\cdot q_{j}(\theta )-p_{j}(\theta )\geq \max_{m_{j}}\left[ \theta
_{j}\cdot \widehat{q}_{j}(m_{j})-\widehat{p}_{j}(m_{j})\right] \text{ for
all }\theta _{j}.  \label{show1}
\end{equation}%
Because seller $j$ offers an incentive-compatible menu off the platform,
each on-platform consumer would also report their value truthfully if
shopping off-platform. By Proposition \ref{prop_consider}, we know the
consumer chooses between two products by the same seller, and the
showrooming constraint (\ref{show1}) reduces to 
\begin{equation}
U_{j}\left( \theta \right) :=\theta _{j}q_{j}(\theta )-p_{j}(\theta ))\geq
\theta _{j}\widehat{q}_{j}(\theta _{j})-\widehat{p}_{j}(\theta _{j}))=:%
\widehat{U}_{j}\left( \theta \right) .  \label{show2}
\end{equation}

The showrooming constraint prevents the selected seller from extracting the
entire surplus of the on-platform consumers. Because the on-platform
transaction takes place under symmetric information, it is optimal for each
seller to offer a single product to each consumer $\theta $ at the socially
efficient quality level%
\begin{equation*}
q_{j}^{\ast }\left( \theta \right) =\theta _{j}.
\end{equation*}%
The socially efficient quality provision maximizes both the profit from the
ad and the probability of being chosen by the platform. Similarly, it is
optimal for each seller to offer the consumer a discount that satisfies the
showrooming constraint (\ref{show2}) with equality. Thus, despite the
flexibility awarded by the platform, the quality and utility offered
on-platform by firm $j$ are a function of $\theta _{j}$ only.

Finally, if seller $j$ offers the off-platform menu $(\widehat{p}_{j},%
\widehat{q}_{j})$ with the associated rent function $\widehat{U}_{j}$, the
on-platform profit from a consumer with value profile $\theta $ is given by%
\begin{equation}
\pi _{j}(\theta ,\widehat{U}_{j})=\left\{ 
\begin{tabular}{ll}
$\theta _{j}^{2}/2-\widehat{U}_{j}(\theta _{j}),$ & if $\theta
_{j}>\max_{k\neq j}\theta _{k};$ \\ 
$0,$ & otherwise.%
\end{tabular}%
\right.  \label{prof_onplat}
\end{equation}

\section{Equilibrium Product Lines\label{sec:epl}}

We now characterize the symmetric equilibrium menus off the platform and
trace their implications for on-platform quantities and prices. We can then
analyze the expected consumer surplus on the platform and off the platform.
Finally, we establish that the socially efficient steering mechanism is the
revenue-maximizing managed campaign for the platform.

By Proposition \ref{prop_consider}, in any symmetric equilibrium of our
model, no consumer (off-platform or on-platform) searches past the first
seller on the equilibrium path. Combining the off-platform profit with the
on-platform profit (\ref{prof_onplat}), each seller's maximization problem
can be written as follows.

\begin{align}
\Pi _{j}^{\ast }=\max_{\widehat{q}_{j},\widehat{U}_{j}}\,& (1-\lambda
)\int_{\theta _{L}}^{\theta _{H}}[m_{j}\widehat{q}_{j}(m_{j})-\widehat{q}%
_{j}(m_{j})^{2}/2-\widehat{U}_{j}(m_{j})]G^{J-1}(m_{j})g(m_{j})\text{d}m_{j}
\label{obj1} \\
& +\lambda \int_{\theta _{L}}^{\theta _{H}}[\theta _{j}^{2}/2-\widehat{U}%
_{j}(\theta _{j})]F^{J-1}(\theta _{j})f(\theta _{j})\text{d}\theta _{j}\text{%
,}  \notag \\
\text{s.t. \qquad }& \widehat{U}_{j}(m_{j})\geq 0\text{,}  \tag{IR} \\
& \widehat{U}_{j}^{\prime }(m_{j})=\widehat{q}_{j}(m_{j})\text{.}  \tag{IC}
\end{align}

We observe that the objective function of each seller takes into account the
competition across the seller's own sales channels. Each seller $j$\
generates an expected value $m_{j}$ for the consumer off the platform with
density $g\left( m_{j}\right) $ but makes a sale only if $j$ has a higher
expected value than the remaining $J-1$ sellers, which happens with
probability $G^{J-1}(m_{j})$. A similar expression involving $f\left( \theta
_{j}\right) $ and $F^{J-1}(\theta _{j})$ holds for the on-platform revenue.
Thus, the above pricing and revenue formulas take into account the
competition among the $J$ sellers by taking expectation with respect to the
highest-order statistics.

Because we have assumed $m_{j}\in \left[ \theta _{L},\theta _{H}\right] ,$
we state all our results with $\theta _{j}$ as the argument, letting the
distributions $F$ and $G$ indicate whether we refer to on- or off-platform
variables. Maximizing (\ref{obj1}) over rent and quality functions $\widehat{%
U}_{j}$ and $\widehat{q}_{j}$, we obtain the following characterization of
the optimal menus.\footnote{%
As we have ignored the monotonicity constraint in problem (\ref{obj1}),
Proposition \ref{menu} applies to the case where the distributions $F$ and $%
G $ are sufficiently regular. If they are not, i.e., if the function $\hat{q}%
_{j}^{\ast }\left( \theta _{j}\right) $ in (\ref{q0s}) is not monotone, then
the equilibrium quality schedule is given by the \textquotedblleft
ironed\textquotedblright\ version of (\ref{q0s}).}

\begin{proposition}[Equilibrium Menus with Efficient Steering]
\label{menu}\strut \vspace*{-0.1in}

\begin{enumerate}
\item The unique symmetric equilibrium quality levels are given by: 
\begin{align}
q_{j}^{\ast }(\theta _{j})& =\theta _{j},  \label{q00s} \\
\widehat{q}_{j}^{\ast }(\theta _{j})& =\max \left\{ 0,\theta _{j}-\frac{%
1-\lambda F^{J}(\theta _{j})-(1-\lambda )G^{J}(\theta _{j})}{(1-\lambda
)JG^{J-1}(\theta _{j})g(\theta _{j})}\right\} .\vspace*{-0.1in}  \label{q0s}
\end{align}

\item The consumer's information rents are identical on- and off-platform:%
\begin{equation}
U_{j}^{\ast }(\theta _{j})=\widehat{U}_{j}^{\ast }(\theta _{j})=\int_{\theta
_{L}}^{\theta _{j}}\widehat{q}_{j}^{\ast }(x)\text{d}x.\vspace*{-0.1in}
\label{U0s}
\end{equation}
\end{enumerate}
\end{proposition}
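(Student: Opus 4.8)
The plan is to solve the relaxed problem in which the monotonicity constraint on $\widehat q_j$ is dropped and only (IR) and (IC) are imposed, then argue the solution is automatically monotone in the regular case (and ironed otherwise, per the footnote). First I would use (IC) to substitute $\widehat U_j(m_j)=\widehat U_j(\theta_L)+\int_{\theta_L}^{m_j}\widehat q_j(x)\,\de x$, so that the rent function is pinned down by the quality schedule and a single constant $\widehat U_j(\theta_L)$. Since $\widehat U_j$ appears with a negative sign in both integrals of the objective~(\ref{obj1}) and (IR) requires $\widehat U_j(m_j)\ge 0$, optimality forces $\widehat U_j(\theta_L)=0$, which immediately gives the integral representation~(\ref{U0s}) for the off-platform rent; the on-platform equality $U_j^\ast=\widehat U_j^\ast$ is then just the binding showrooming constraint~(\ref{show2}) established in the previous section.

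Next I would substitute the rent expression into the objective and exchange the order of integration (Fubini) in both the off-platform term $-(1-\lambda)\int \widehat U_j(m_j)G^{J-1}g\,\de m_j$ and the on-platform term $-\lambda\int \widehat U_j(\theta_j)F^{J-1}f\,\de\theta_j$. Writing $\int_{\theta_L}^{\theta_H}\!\!\big(\int_{\theta_L}^{m}\widehat q_j(x)\,\de x\big)\phi(m)\,\de m=\int_{\theta_L}^{\theta_H}\widehat q_j(x)\big(\int_x^{\theta_H}\phi(m)\,\de m\big)\de x$ with $\phi$ equal to $(1-\lambda)G^{J-1}g$ and $\lambda F^{J-1}f$ respectively, and using $\int_x^{\theta_H}JG^{J-1}g\,\de m=1-G^J(x)$ and likewise for $F$, the whole objective collapses to a single integral $\int_{\theta_L}^{\theta_H}\Psi(\theta_j,\widehat q_j(\theta_j))\,\de\theta_j$ in which the integrand is separable pointwise in the value of $\widehat q_j(\theta_j)$. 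The on-platform quality $q_j^\ast(\theta_j)=\theta_j$ is not a choice variable at all at this stage — it was already fixed in Section~\ref{sec:search} as the efficient level that both maximizes the ad profit $\theta_j^2/2-\widehat U_j$ and maximizes the probability of winning the efficient-steering selection — so~(\ref{q00s}) requires no further argument here.

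Then I would perform the pointwise maximization over $\widehat q_j(\theta_j)\ge 0$: the relevant term is $(1-\lambda)\big(m_j\widehat q_j-\widehat q_j^2/2\big)G^{J-1}g$ minus the combined "virtual-cost" term $\widehat q_j\big[\,(1-\lambda)(1-G^J(\theta_j))+\lambda(1-F^J(\theta_j))\,\big]$ coming from the two rent integrals. Setting the derivative to zero gives $(1-\lambda)(\theta_j-\widehat q_j)JG^{J-1}(\theta_j)g(\theta_j)=1-\lambda F^J(\theta_j)-(1-\lambda)G^J(\theta_j)$, which rearranges exactly to the interior formula in~(\ref{q0s}); the $\max\{0,\cdot\}$ truncation is the Kuhn–Tucker condition for the nonnegativity constraint, and one checks the bracketed numerator $1-\lambda F^J(\theta_j)-(1-\lambda)G^J(\theta_j)$ is nonnegative so the truncation binds only from below. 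Uniqueness follows because the pointwise objective is strictly concave in $\widehat q_j(\theta_j)$ (coefficient $-(1-\lambda)G^{J-1}g<0$ on the interior), and symmetry of the equilibrium was imposed at the outset.

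The main obstacle is bookkeeping the two rent integrals correctly through the change of order of integration and making sure the common coefficient on $\widehat q_j(\theta_j)$ in the virtual-surplus expression assembles into precisely $1-\lambda F^J(\theta_j)-(1-\lambda)G^J(\theta_j)$ — in particular tracking that the off-platform demand weight is $G^{J-1}g$ (not $F$) while the rent is conceded on \emph{both} channels, so the "information-cost" term mixes $F$ and $G$. A secondary point needing care is the justification that this relaxed solution is the actual equilibrium: one must confirm no seller can profitably deviate, using the symmetric off-path beliefs of Definition~1 and Proposition~\ref{prop_consider} to rule out deviations that would change consumers' consideration sets, and invoke the footnote's ironing caveat when $\widehat q_j^\ast$ from~(\ref{q0s}) fails to be monotone.
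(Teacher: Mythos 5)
Your proposal is correct and follows essentially the same route as the paper: the paper solves the same relaxed problem via Pontryagin's maximum principle, and its costate variable $\hat{\gamma}(\theta _{j})=\frac{1}{J}\left( (1-\lambda )G^{J}(\theta _{j})+\lambda F^{J}(\theta _{j})-1\right) $ is exactly the combined rent-cost coefficient you obtain by Fubini, so the two calculations coincide term for term and yield the same first-order condition. Your handling of (\ref{q00s}) and (\ref{U0s}) (deferring to the efficiency and binding-showrooming arguments of Section \ref{sec:search}) also matches what the paper does.
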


The equilibrium quality provision on and off the platform have several
intuitive properties. First, the efficient quality is sold to each consumer $%
i$ on the platform, on the basis of their favorite seller, i.e., $%
\max_{j}\{\theta _{j}\}$. Second, matching is inefficient off the platform
because it is based on imperfect information, i.e., on expected value $m$
instead of value $\theta $.

The consumer's private information off the platform requires the sellers to
resolve the efficiency vs. rent extraction trade-off. The information rents
of each value $m_{j}$ are as usual increasing in the quality level provided
to all lower values. To resolve this trade-off, each seller $j$ could offer
the \cite{muro78} tariff for the distribution of off-platform consumer
values $G^{J}(m_{j})$, which is the distribution of the highest order
statistic out of $J$ variables $m_{j}$. However, any information rent $%
\widehat{U}(m)$ provided to the off-platform consumers has an additional
shadow cost: it makes buying off-platform more attractive for the
on-platform consumers too. As we saw, by leaving positive rents for the
consumers off the platform, each seller must also provide rents on the
platform: 
\begin{equation*}
U_{j}^{\ast }(\theta _{j})=\widehat{U}_{j}^{\ast }(\theta _{j})>0\text{ iff }%
\widehat{q}_{j}^{\ast }(\theta _{j})>0.
\end{equation*}%
Conversely, by limiting the off-platform rents, the seller is able to
capture a greater share of the efficient social surplus that personalized
on-platform offers generate.

Because of the shadow cost of showrooming, the off-platform quality schedule 
$\widehat{q}$ is further distorted downward. In particular, we can rewrite
the equilibrium off-platform qualities (\ref{q0s}) in Proposition \ref{menu}
as%
\begin{equation}
\widehat{q}_{j}^{\ast }(\theta _{j})=\underset{\text{\cite{muro78} quality}}{%
~\underbrace{\theta _{j}-\frac{1-G^{J}(\theta _{j})}{JG^{J-1}(\theta
_{j})g(\theta _{j})}}}-~\frac{\lambda }{1-\lambda }\frac{1-F^{J}(\theta _{j})%
}{JG^{J-1}(\theta _{j})g(\theta _{j})},  \label{q0ss}
\end{equation}%
where the first two terms identify the optimal quality level for the
distribution of values $G^{J}(\theta _{j})$. The last term captures the
intuition that any rent given off-platform to value $\theta _{j}$ must also
be given to all higher values \emph{on }the platform.

\begin{figure}[htbp]\centering%
\includegraphics[width=.5\textwidth]{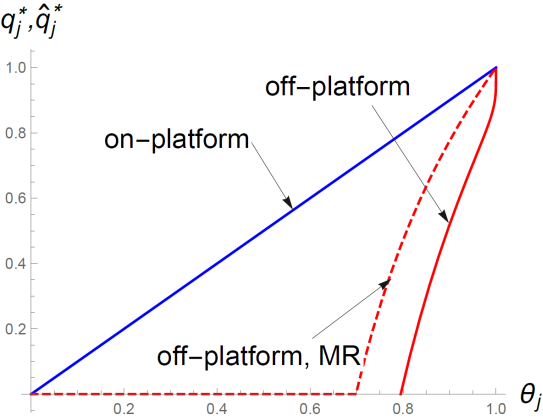}
\caption{Match Values and Qualities On-Platform vs Off-Platform, $\la=1/2,J=5,G(m_j)=m_j,F(\ta_j)=\textrm{Beta}(\ta_j,1/4,1/4)$}%
\label{qmr}%
\end{figure}%

Figure \ref{qmr} displays the equilibrium off-platform quality schedule, the
socially efficient allocation, and the monopoly benchmark of \cite{muro78}.
The consumer's expected values $m_{j}$ are uniformly distributed, and their
values $\theta _{j}$ follow a Beta distribution.

The formulation of the optimal off-platform menu (\ref{q0ss}) allows us to
establish several intuitive properties of the equilibrium. First, each value 
$\theta $ receives a higher quality level, namely the socially efficient
allocation and pays a higher price on the platform than off the platform.
However, while each value receives a better product at a higher price, each
quality level $q$ is sold at a lower price on the platform. Thus, let us
define the equilibrium price for a given quality $q$ on- and off-platform as 
$p\left( q\right) $ and $\widehat{p}\left( q\right) $, respectively. We then
find that $p\left( q\right) \leq \widehat{p}\left( q\right) $ for all $q$.
In other words, each seller is forced to introduce \textquotedblleft
on-platform only\textquotedblright\ discounts due to the threat of
showrooming.

Figure \ref{nonlin} displays the nonlinear pricing schedules, namely the
price $p_{j}\left( q_{j}\right) $ for every offered quality $q_{j}$ under
the same parameters as in Figure \ref{qmr}. Note that for a set of low
values, namely those below $8/10$, the nonlinear tariff is equal to the
gross surplus generated by the efficient quality (i.e., $p_{j}\left(
q_{j}\right) =q_{j}^{2}$). By contrast, values above $8/10$ receive a
positive rent off-platform, and hence on the platform the price is below the
gross surplus.

\begin{figure}[htbp]\centering%
\includegraphics[width=.5\textwidth]{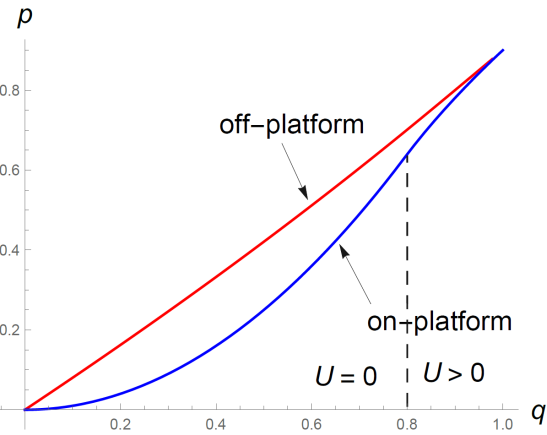}
\caption{Nonlinear tariffs: every variety $q$ is sold at a lower price on-platform.}%
\label{nonlin}%
\end{figure}%

Our results illustrate how advertising platforms that run managed campaigns
face very different incentives than retail platforms that charge
proportional transaction fees. In the case of Amazon, for example, merchants
may want consumers to showroom to avoid the platform's variable fees. In the
case of our advertising platform, ex-ante, fixed-price contracts with the
sellers eliminate the need for most-favored-nation clauses.

\paragraph{Consumer Surplus}

An implication of Proposition \ref{menu} is that, on aggregate, consumer
surplus is higher on the platform than off the platform. Indeed, for each
value $\theta $, we have%
\begin{equation*}
\widehat{U}_{j}^{\ast }(\theta _{j})=U_{j}^{\ast }(\theta _{j}).
\end{equation*}%
However, we also know that $F\succ G$. This implies%
\begin{equation*}
\mathbb{E}_{F^{J}}[U_{j}^{\ast }(\theta _{j})]>\mathbb{E}_{G^{J}}[\widehat{U}%
_{j}^{\ast }(\theta _{j})],
\end{equation*}%
because the highest order statistics satisfy $\mathbb{E}_{F^{J}}\left[
\theta _{j}\right] >\mathbb{E}_{G^{J}}\left[ \theta _{j}\right] $ and
incentive compatibility requires the function $\widehat{U}_{j}^{\ast }$ to
be increasing and convex. Thus, at the equilibrium prices, every consumer
would rather be on the platform (ex ante) than off the platform. A stronger
result is that, holding prices fixed, the consumer would like the platform
to have as precise information as possible about their value, which enables
better matching of products to preferences. However, the consumer does not
necessarily benefit from the presence of the platform \emph{in equilibrium}.
Indeed, Proposition \ref{compstat} considers the effects of a larger
platform $\left( \lambda \right) $ and finds that \emph{all }consumers are
worse off as the share $\lambda $ of consumers who shop on the platform
increases.

\paragraph{Platform Revenue}

To examine the implications for the sellers' profit and the platform's
revenue, we characterize the advertising budgets the platform can demand in
equilibrium under a managed campaign mechanism.

We first define each seller $j$'s outside option $\overline{\Pi }_{j}\left(
\sigma \right) $ as the profit seller $j$ can obtain if they do not
participate in a managed campaign with selection rule $\sigma $. Similarly,
define each seller's equilibrium profits (gross of the advertising budget)
as $\Pi _{j}^{\ast }\left( \sigma \right) .$ Because sellers are homogeneous
ex ante, the platform can then request an advertising budget equal to%
\begin{equation*}
t^{\ast }\left( \sigma \right) \triangleq \Pi _{j}^{\ast }\left( \sigma
\right) -\overline{\Pi }_{j}\left( \sigma \right) .
\end{equation*}

Now consider the efficient selection rule $\sigma ^{\ast }$ we have examined
so far. In the efficient steering managed campaign, consumers follow the
platform's recommendation on and off the equilibrium path (Proposition \ref%
{prop_consider}). Each seller's outside option then consists of the profit
level achievable from the off-platform consumers only, i.e.,%
\begin{equation}
\overline{\Pi }_{j}\left( \sigma ^{\ast }\right) =\max_{\widehat{q},\widehat{%
U}}\,(1-\lambda )\int_{\theta _{L}}^{\theta _{H}}\left[ \theta _{j}\widehat{q%
}(\theta _{j})-\widehat{q}(\theta _{j})^{2}/2-\widehat{U}(\theta _{j})\right]
G^{J-1}(\theta _{j})\text{d}G(\theta _{j}).  \label{eq_ou}
\end{equation}

Given the equilibrium profit levels $\Pi _{j}^{\ast }\left( \sigma ^{\ast
}\right) $ defined in (\ref{obj1}), the platform then requests the following
advertising budget:%
\begin{equation}
t^{\ast }\left( \sigma ^{\ast }\right) =\Pi _{j}^{\ast }\left( \sigma ^{\ast
}\right) -\overline{\Pi }_{j}\left( \sigma ^{\ast }\right) .  \label{tstar}
\end{equation}%
Another way to interpret the advertising budget is the following: the
sellers are willing to give up all the on-platform profit to participate,
but need to be compensated for distorting the off-platform menus away from
the monopoly benchmark.

We now consider the platform's problem when announcing a selection rule,
i.e.,%
\begin{equation*}
\max_{\sigma }\left[ t^{\ast }\left( \sigma \right) \right] ,
\end{equation*}
and we show that the efficient-steering mechanism indeed solves this problem.%
\footnote{%
Consistent with this result, \cite{bebw23} show that the optimal managed
campaign improves the platform's revenue relative to running auctions for
targeted advertising slots. Furthermore, the experimental analysis of \cite%
{decaro23} shows that, when a platform provides less detailed information to
the bidders' algorithms, its revenues are \textquotedblleft substantially
and persistently higher.\textquotedblright}

\begin{proposition}[Optimality of Efficient Steering]
\label{optimalmech}\qquad \newline
The efficient-steering managed campaign mechanism maximizes the platform's
profit.
\end{proposition}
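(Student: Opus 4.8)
The plan is to show that no selection rule $\sigma$ can yield an advertising budget exceeding $t^{\ast}(\sigma^{\ast})$. The key is to bound the platform's revenue $t^{\ast}(\sigma) = \Pi_j^{\ast}(\sigma) - \overline{\Pi}_j(\sigma)$ from above by the value under efficient steering, using two observations: first, that efficient steering maximizes the \emph{gross on-platform surplus} extracted in equilibrium, and second, that any alternative rule weakly \emph{raises} each seller's outside option. For the first point, I would argue that under any selection rule, a participating seller still faces the showrooming constraint (\ref{show2}), so the most the seller can earn on a consumer it is matched with is $\theta_j^2/2 - \widehat{U}_j(\theta_j)$ when quality is set efficiently at $q_j^{\ast}(\theta)=\theta_j$; moreover the total on-platform match surplus $\sum_j \mathbf{1}\{\sigma=j\}\,\theta_j^2/2$ is pointwise maximized (over the choice of which seller to sponsor) exactly by $\sigma^{\ast}$, and seller $j$ is sponsored precisely when $\theta_j > \max_{k\neq j}\theta_k$, which is the event used in (\ref{prof_onplat}). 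Thus $\sum_j \Pi_j^{\ast}(\sigma) \le \sum_j \Pi_j^{\ast}(\sigma^{\ast})$ for the on-platform component, and by symmetry the per-seller equilibrium profit is maximized at $\sigma^{\ast}$.

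For the second point, I would observe that $\overline{\Pi}_j(\sigma)$ is the profit seller $j$ secures by \emph{rejecting} the budget. Under efficient steering, a rejecting seller loses all on-platform consumers (Proposition \ref{prop_consider}), so $\overline{\Pi}_j(\sigma^{\ast})$ equals the pure off-platform monopoly profit (\ref{eq_ou}) — the smallest possible outside option, since under any other rule a deviating seller may still capture some on-platform consumers (e.g., those who would be steered to it even after deviation, or those whose beliefs make its off-platform offer attractive). Hence $\overline{\Pi}_j(\sigma) \ge \overline{\Pi}_j(\sigma^{\ast})$ for every $\sigma$. Combining, $t^{\ast}(\sigma) = \Pi_j^{\ast}(\sigma) - \overline{\Pi}_j(\sigma) \le \Pi_j^{\ast}(\sigma^{\ast}) - \overline{\Pi}_j(\sigma^{\ast}) = t^{\ast}(\sigma^{\ast})$, which is the claim.

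The main obstacle is making the outside-option comparison rigorous, because $\overline{\Pi}_j(\sigma)$ depends on off-path beliefs and on how $\sigma$ treats a non-participating seller. I would handle this by exploiting the symmetric-beliefs assumption in the equilibrium definition: when seller $j$ contemplates rejecting, consumers retain symmetric beliefs about the \emph{other} sellers' menus, and the on-platform consumers who would have been steered to $j$ now see some other seller's ad and buy accordingly, so $j$ cannot profit from them on the platform; the only channel left to $j$ is its off-platform store with the $G^{J}$-distributed loyal base, giving exactly (\ref{eq_ou}). One subtlety is that a clever $\sigma$ might \emph{reward} non-participation (e.g., by degrading the ads of participating rivals), but such a rule only lowers $\Pi_j^{\ast}(\sigma)$ at least as fast as it raises $\overline{\Pi}_j(\sigma)$, so the difference cannot increase; I would note this is where the pointwise-maximality of efficient steering for the on-platform surplus does the real work, since it caps $\Pi_j^{\ast}(\sigma)$ regardless of how rivals are treated. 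A second technical point — whether the equilibrium off-platform menu under $\sigma$ could itself generate more revenue — is absorbed into the first step, since the per-seller objective (\ref{obj1}) with the efficient on-platform term is already the upper envelope.
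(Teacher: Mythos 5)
Your proposal is correct and follows essentially the same route as the paper's proof: you decompose $t^{\ast}(\sigma)=\Pi_j^{\ast}(\sigma)-\overline{\Pi}_j(\sigma)$, bound the equilibrium profit above by the efficient-steering value (the paper phrases this as the seller--platform coalition's, i.e.\ vertically integrated, maximum, while you argue it via pointwise maximization of the on-platform match surplus under the showrooming constraint --- the same bound), and bound the outside option below by the off-platform-only Mussa--Rosen profit in (\ref{eq_ou}) using Proposition \ref{prop_consider}. Your added discussion of off-path beliefs and of rules that reward non-participation is a careful elaboration of points the paper leaves implicit, not a different argument.
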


The proof of this result establishes that the efficient-steering managed
campaign attains an upper bound on the platform's profit. In particular, the
equilibrium profits of the sellers $\Pi _{j}^{\ast }\left( \sigma ^{\ast
}\right) $ coincide with the vertically integrated benchmark where the
platform controls all sellers' menus on both sales channels. Thus, this
mechanism maximizes the sellers' and the platform's joint profits across the
on- and off-platform markets. Moreover, the seller's outside option $%
\overline{\Pi }_{j}\left( \sigma \right) $ is bounded from below, for all
managed campaigns $\sigma $, by the profits $\overline{\Pi }_{j}\left(
\sigma ^{\ast }\right) $ a seller can obtain through the optimal menu for
the off platform consumers only. Because the advertising budget extracts
each seller's surplus over and above their exogenous outside option, this
mechanism maximizes the platform's revenue. Importantly, the
efficient-steering managed campaign relies only on advertising and steering
through sponsored products. In particular, the optimal revenue can be
attained without levying commission or transaction fees on sellers or
consumers.\footnote{%
In the presence of complete information by the platform, it suffices that
the platform offers a single sponsored product rather many sponsored product
slots. We suspect that in richer environments where consumers have some
independent private information, many sponsored links would optimally screen
for this additional information.}

\section{Value of Information and Privacy \label{friction}}

In this section, we explore the platform's bargaining power by examining the
role of its informational advantage and the implications of privacy
policies. We start by removing the informational advantage of the platform
in Section \ref{symmetric}. Instead, we assume that every on-platform
consumer learns their entire value profile $\theta $, not just their value
for the sponsored seller. One possible reason for this could be that reviews
and recommendations are available on the platform and online more generally.

Next, in Section \ref{sec_compdesign}, we examine the role of price
information. We consider the provision of organic search links by the
platform that enable consumers to learn about all off-platform prices and
products.

In Section \ref{privacy}, we introduce privacy policies that safeguard the
consumers' information from the sellers. We consider cohort-based privacy
protection where the platform informs the sellers only about the consumer's
ranking of the sellers, while disclosing the exact value for the sponsored
seller to the consumer. Consequently, the platform targets ads at the level
of a \emph{cohort }of consumers, and each consumer within a cohort have the
same preference ranking over the $J$ sellers.\footnote{%
This is in line with the recent Google Privacy Sandbox proposals to replace
third-party cookies.}

Lastly, in Section \ref{sec_infodesign}, we examine whether revealing the
full value for the sponsored seller to the consumer maximizes the platform's
revenue. We introduce information design in our managed campaign mechanism
and provide conditions under which full or partial information revelation is
optimal.

\subsection{Symmetric Information\label{symmetric}}

To assess the value of the platform's information advantage, we now assume
all consumers who visit the platform learn their \emph{entire value profile }%
$\theta $ (i.e., not just their value for the sponsored seller). The
consumers off the platform remain imperfectly informed with expected value
profile $m$. In Proposition \ref{prop_knowntype}, we establish that the
ensuing symmetric information limits the platform's ability to steer the
consumers' search behavior and reduces the advertising budget the platform
can request from the sellers.

\begin{proposition}[Symmetric Information]
\label{prop_knowntype}\qquad \newline
With complete information about $\theta $ for all on-platform consumers, the
equilibrium quality levels on and off the platform remain as in Proposition %
\ref{menu}. But the equilibrium advertising budget $t^{\ast }$ is strictly
lower relative to when the platform has exclusive information about $\theta $%
.
\end{proposition}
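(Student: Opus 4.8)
The plan is to establish the two halves of Proposition \ref{prop_knowntype} separately: first that the equilibrium quality schedules are unchanged, then that the advertising budget strictly falls. For the first part, I would argue that the on-platform profit expression (\ref{prof_onplat}) derived earlier depends only on the structure of the showrooming constraint, which in turn depends only on the consumer learning their value $\theta_{j^*}$ for the advertised seller. Giving the consumer additional information about $\theta_k$ for $k\neq j^*$ does not relax or tighten the showrooming constraint \eqref{show2}, since on the equilibrium path $\theta_{j^*}=\max_k\theta_k$ and the consumer's relevant comparison is still between the advertised seller's on- and off-platform offers by the same seller $j^*$ (the off-platform menu is incentive compatible, so reporting $\theta_{j^*}$ truthfully off-platform remains optimal). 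Hence each seller's maximization problem is still exactly (\ref{obj1}) with the same (IR) and (IC) constraints, and Proposition \ref{menu} applies verbatim, yielding (\ref{q00s})--(\ref{U0s}) unchanged.

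For the second part — the strict drop in $t^*$ — I would use the decomposition $t^*(\sigma^*)=\Pi_j^*(\sigma^*)-\overline{\Pi}_j(\sigma^*)$. The equilibrium profit $\Pi_j^*(\sigma^*)$ is unchanged by the first part. So the claim reduces to showing that the seller's \emph{outside option} $\overline{\Pi}_j$ strictly increases when on-platform consumers have complete information. Under the original model, a seller who rejects the platform's offer loses access to all $\lambda$ on-platform consumers (by Proposition \ref{prop_consider}, consumers only consider the advertised seller, so a non-advertising seller is steered out entirely), and $\overline{\Pi}_j(\sigma^*)$ is given by (\ref{eq_ou}) — profit from off-platform consumers only. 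Under complete information, however, an on-platform consumer whose favorite seller is $j$ knows $\theta_j=\max_k\theta_k$ even when seller $j$ does \emph{not} advertise; since these consumers expect a symmetric increasing off-platform menu from seller $j$, they will search off-platform and buy from $j$ rather than accept some other seller's ad. Thus the deviating seller $j$ retains access to the mass $\lambda/J$ of on-platform consumers for whom $j$ is the favorite, and can sell them the off-platform menu (now under symmetric information, since these consumers know $\theta_j$). This yields a strictly larger outside option: the new $\overline{\Pi}_j$ equals (\ref{eq_ou}) plus a strictly positive term of order $\lambda\int [\theta_j^2/2 - \widehat U(\theta_j)]F^{J-1}(\theta_j)f(\theta_j)\,\de\theta_j$ evaluated at the seller's privately optimal off-platform menu, which is strictly positive whenever $\theta_H>0$. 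Therefore $t^*$ strictly decreases.

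The main obstacle is the careful treatment of off-equilibrium beliefs and the precise specification of the deviating seller's outside option under complete information. I need to verify that when seller $j$ does not participate, the on-platform consumers for whom $j$ is the favorite indeed rationally choose to search off-platform to buy from $j$ — this requires that their (symmetric) belief about $j$'s off-platform menu, combined with knowledge of $\theta_j$, makes buying from $j$ off-platform weakly better than accepting whichever substitute ad the platform shows them. Since the substitute seller $k$ has $\theta_k<\theta_j$ and offers at most the efficient surplus $\theta_k^2/2$ net of a nonnegative rent, while buying from $j$ off-platform delivers $\widehat U_j(\theta_j)\ge 0$ which the consumer weighs against the substitute's rent, one must confirm the comparison goes the right way — this is where I would be most careful, possibly invoking that search costs are arbitrarily small so the consumer incurs negligible cost to reach $j$'s website. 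I would also note (as the paper's footnote suggests) that the argument can alternatively be run under passive beliefs, where the consumer simply expects the candidate equilibrium off-platform menu from $j$, making the outside-option computation immediate. The remaining steps — plugging into (\ref{obj1}) and (\ref{eq_ou}), and checking strict positivity — are routine.
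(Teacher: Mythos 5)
Your proposal follows essentially the same route as the paper: the on-path menus are unchanged because each participating seller still solves (\ref{obj1}) subject to the same showrooming constraint, and the advertising budget falls because a non-participating seller's outside option rises---under the symmetric-beliefs refinement, the on-platform consumers who rank seller $j$ first now know $\theta_j=\max_k\theta_k$ and search for $j$'s off-platform menu regardless of which ad they are shown. One substantive correction, though: the additional term you write into the outside option, $\lambda\int[\theta_j^2/2-\widehat U(\theta_j)]F^{J-1}(\theta_j)f(\theta_j)\,\de\theta_j$, is not right. A seller that rejects the platform's offer cannot make personalized on-platform offers; the poached consumers can only buy from its \emph{off-platform menu}, so the correct extra term is $\lambda\int[\theta_j\widehat q(\theta_j)-\widehat q(\theta_j)^2/2-\widehat U(\theta_j)]F^{J-1}(\theta_j)f(\theta_j)\,\de\theta_j$, as in the paper's deviation problem (\ref{pihat}). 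The deviator then re-optimizes a single menu against the mixture $\lambda F^J+(1-\lambda)G^J$, which yields the quality schedule (\ref{q0mixture}); since that schedule is pointwise above $\widehat q_j^\ast$ in (\ref{q0s}), the induced rents exceed $\widehat U_j^\ast$ and the attraction constraint (\ref{Uhat}) you worry about in your last paragraph is slack at the optimum---this is exactly how the paper disposes of the belief/participation issue you flag. With that substitution, your argument that $\hat\Pi_j>\overline{\Pi}_j$ and hence that $t^\ast$ strictly decreases goes through as in the paper; your version as written would overstate the outside option (and hence understate $t^\ast$), even though the direction of the comparison is unaffected.
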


In equilibrium, both on- and off-platform consumers have the same
information as in Section \ref{sec:epl}, where the platform has initially
exclusive information about $\theta $ (henceforth, the \textquotedblleft
baseline\textquotedblright\ model). Thus, any seller who participates in the
managed campaign mechanism offers the optimal menu in Proposition \ref{menu}%
. Facing informed consumers, however, changes the seller's value of turning
down the platform's offer, because consumers who know their values visit
their favorite seller off-platform regardless of the identity of the
sponsored seller on the platform. Suppose a consumer sees a product by a
seller they did not expect on the platform. Under our symmetry refinement,
this consumer continues to believe that all sellers offer symmetric menus
off the platform.

Therefore, each seller $j$ can choose not to participate in the managed
campaign and poach any consumer to whom they offer the highest value: $%
\theta _{j}=\max_{k}\theta _{k}$. Seller $j$ can achieves this by offering
the consumers an off-platform information rent $\widehat{U}_{j}(\theta _{j})$
above the level $\widehat{U}_{j}^{\ast }(\theta _{j})$ in (\ref{U0s}) which
is offered by the competitors in equilibrium. When contemplating such a
menu, the deviating seller $j$ solves the following problem: 
\begin{eqnarray}
\hat{\Pi} &\triangleq &\max_{\widehat{q},\widehat{U}}\int_{\theta
_{L}}^{\theta _{H}}[\theta _{j}\widehat{q}(\theta _{j})-\widehat{q}(\theta
_{j})^{2}/2-\widehat{U}(\theta _{j})]\left[ 
\begin{array}{c}
(1-\lambda )G^{J-1}(\theta _{j})g(\theta _{j}) \\ 
+\lambda F^{J-1}(\theta _{j})f(\theta _{j})%
\end{array}%
\right] \text{d}\theta _{j}  \label{pihat} \\
&&\text{s.t. }\widehat{U}(\theta _{j})\geq \widehat{U}_{j}^{\ast }(\theta
_{j}).  \label{Uhat}
\end{eqnarray}

The equilibrium rent function (\ref{U0s}) in the baseline model satisfies
the constraint (\ref{Uhat}) and yields a strictly larger profit. Therefore,
the sellers' outside option with known values exceeds the outside option $%
\overline{\Pi }$ of the baseline model characterized in (\ref{eq_ou}).

The deviating seller can do even better by offering the optimal menu of
products when consumer values are distributed according to the mixture $%
\left( 1-\lambda \right) G^{J}+\lambda F^{J}$. These quality levels are
given by%
\begin{equation}
\widehat{q}(\theta _{j})=\max \left\{ 0,\theta _{j}-\frac{1-\lambda
F^{J}(\theta _{j})-(1-\lambda )G^{J}(\theta _{j})}{\lambda JF^{J-1}(\theta
_{j})f(\theta _{j})+(1-\lambda )JG^{J-1}(\theta _{j})g(\theta _{j})}\right\}
.  \label{qmixture}
\end{equation}%
The equilibrium quality in (\ref{qmixture}) is larger for every value than
the equilibrium $\widehat{q}_{j}^{\ast }(\theta _{j})$ in (\ref{q0s}) and
yields higher utility to the consumers. Thus, constraint (\ref{Uhat}) does
not bind in the optimal deviation---the best off-platform menu for seller $j$
offers a higher utility level to $j$'s favorite consumers than all other
sellers' menus.\footnote{%
Note that the deviating seller cannot attract any consumer who values a
competitor's products more than their own. This is because those consumers
still face search costs off-platform and would not learn that the deviating
seller has lowered their prices.}

To summarize, Proposition \ref{prop_knowntype} shows that the equilibrium
advertising budgets are qualitatively different when the on-platform
consumers know their values from when they learn their values through the
platform's information. After all, the platform loses the ability to steer
the consumer. In the absence of additional information, the platform cannot
grant monopoly power to any seller by displaying their advertisement and
recommending their products. Without additional information, each consumer
evaluates the different products independently of the recommendation
implicit in the ad. In turn, the value to a seller of showing an
advertisement decreases, as does the willingness to pay for the platform's
services.

To quantify the value of the platform's steering power, fix the value
distribution $F$ and consider the distribution $G$ of expected values
generated by an imperfectly informative signal that each (on- and
off-platform) consumer observes about their value. Denote by $t^{\ast }(G)$
the equilibrium advertising budgets in (\ref{tstar}) under distribution $G$.
Now let the consumer's signal become arbitrarily precise, so that the
distribution $G$ of expected converges to the distribution $F$ of values.
The equilibrium menu for the limit case can be obtained by setting $G=F$ in (%
\ref{q0s}). Proposition \ref{prop_knowntype} then implies the following
observation.

\begin{corollary}[Value of Additional Information]
\label{cor:add}\qquad \newline
For all $J>1$, the platform gains strictly positive profit from any
information advantage:%
\begin{equation*}
\lim_{G\rightarrow F}t^{\ast }(G)>t^{\ast }\left( F\right) \text{.}
\end{equation*}
\end{corollary}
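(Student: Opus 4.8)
The plan is to exploit a discontinuity: the seller's equilibrium gross profit and the integrals defining the advertising budget vary continuously with the off-platform distribution $G$, but the seller's \emph{outside option} jumps as $G$ reaches $F$. First I would record that, for every $G$ with $F\succ G$, the budget is $t^{\ast}(G)=\Pi_{j}^{\ast}(\sigma^{\ast})-\overline{\Pi}_{j}(\sigma^{\ast})$ as in (\ref{tstar}), with $\Pi_{j}^{\ast}(\sigma^{\ast})$ given by (\ref{obj1}) and $\overline{\Pi}_{j}(\sigma^{\ast})$ by (\ref{eq_ou}). Both are continuous in $G$ (the equilibrium quality schedule (\ref{q0s}) depends continuously on $G$, and dominated convergence delivers convergence of the two integrals), so $\lim_{G\to F}t^{\ast}(G)=\Pi_{j}^{\ast}(F)-\overline{\Pi}_{j}(F)$, where the right-hand side is simply (\ref{obj1}) and (\ref{eq_ou}) evaluated at $G=F$.

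Next I would invoke Proposition \ref{prop_knowntype} and its proof. When $G=F$ exactly, on-platform consumers observe $\theta$; the equilibrium menus---and hence the equilibrium gross profit $\Pi_{j}^{\ast}(F)$---are unchanged, but a non-participating seller can now poach the on-platform consumers whose favorite product is their own, because such consumers no longer need to see an ad in order to learn that $\theta_{j}=\max_{k}\theta_{k}$. Their outside option is therefore the value $\hat{\Pi}$ of problem (\ref{pihat})--(\ref{Uhat}), so $t^{\ast}(F)=\Pi_{j}^{\ast}(F)-\hat{\Pi}$. Since the common term $\Pi_{j}^{\ast}(F)$ cancels, the corollary reduces to the strict inequality $\hat{\Pi}>\overline{\Pi}_{j}(F)$.

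To establish this I would evaluate both sides at $G=F$. The right-hand side of (\ref{eq_ou}) becomes $(1-\lambda)\,\Pi^{\mathrm{MR}}$, where $\Pi^{\mathrm{MR}}:=\max_{\widehat{q},\widehat{U}}\int_{\theta_{L}}^{\theta_{H}}[\theta_{j}\widehat{q}-\widehat{q}^{2}/2-\widehat{U}]\,F^{J-1}(\theta_{j})\,dF(\theta_{j})$ is the Mussa--Rosen monopoly profit against the top-order-statistic distribution $F^{J}$. Setting $G=F$ in (\ref{pihat}), the bracketed weight collapses to $F^{J-1}(\theta_{j})f(\theta_{j})$, so $\hat{\Pi}$ is the same Mussa--Rosen problem subject only to the floor (\ref{Uhat}); but the discussion following (\ref{qmixture}) already shows that this floor is slack at the optimal deviation (the unconstrained menu (\ref{qmixture}), which at $G=F$ is exactly the Mussa--Rosen menu for $F^{J}$, pointwise dominates $\widehat{q}_{j}^{\ast}$ and thus leaves more utility than (\ref{U0s})). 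Hence $\hat{\Pi}=\Pi^{\mathrm{MR}}$ and $\hat{\Pi}-\overline{\Pi}_{j}(F)=\lambda\,\Pi^{\mathrm{MR}}$, which is strictly positive for $\lambda>0$ since $\Pi^{\mathrm{MR}}>0$ (a single product with quality $q=\theta_{H}/2$ and price $p=\theta_{H}^{2}/4$ is bought with probability $1-F^{J}(\theta_{H}/2)>0$ at the strictly positive margin $\theta_{H}^{2}/8$, so the optimum is positive; here $\theta_{H}>\theta_{L}\geq 0$ is used).

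The main obstacle---and the only place $J>1$ is needed---is justifying that in the baseline the outside option is exactly (\ref{eq_ou}), i.e., that a non-participating seller is steered away from \emph{every} on-platform consumer. With $J>1$ the platform can display a rival's ad, and by Proposition \ref{prop_consider} the consumer then considers that rival only, so seller $j$ loses the entire on-platform pool; with $J=1$ there is no rival to display, the baseline outside option already contains the on-platform consumers, and the discontinuity---and with it the strict inequality---vanishes. The remaining ingredients (continuity of the equilibrium objects in $G$ and positivity of $\Pi^{\mathrm{MR}}$) are routine.
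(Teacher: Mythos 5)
Your proof is correct and follows essentially the same route as the paper, which derives the corollary directly from Proposition \ref{prop_knowntype} by identifying the case $G=F$ with the symmetric-information game and the limit $\lim_{G\to F}t^{\ast}(G)$ with the baseline formulas (\ref{obj1})--(\ref{eq_ou}) evaluated at $G=F$. Your only addition is to make the jump explicit---the common on-path term cancels and the gap equals $\hat{\Pi}-\overline{\Pi}_{j}=\lambda\,\Pi^{\mathrm{MR}}>0$, with the constraint (\ref{Uhat}) verified slack---together with the correct observation that $J>1$ is what forces a non-participating seller's outside option down to (\ref{eq_ou}).
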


Corollary \ref{cor:add} has important implications for a platform's choice
of information design to which we turn in Section \ref{sec_infodesign}. In
particular, the equilibrium advertising budgets jump up as soon as the
platform has any informational advantage relative to the consumers. This
suggest that some degree of information asymmetry---revealing some
additional information to consumers---is always part of the optimal design.

\subsection{Organic Links\label{sec_compdesign}}

In the equilibrium of our baseline model, the consumer chooses the
advertised seller who offers the highest value, and only considers that
seller's on- and off-platform offers. However, in practice, platforms may
also display ``organic links'' that provide additional, free information to
consumers. We extend our model to consider a scenario where the platform
shows all off-platform prices to consumers. In this setting, on-platform
consumers do not incur a search cost and can buy from any seller without
incurring search costs.

Sellers still advertise the socially efficient product varieties and set
prices to make the showrooming constraints bind. The platform assigns the
sponsored link to each consumer's favorite seller, but the off-platform
menus can now affect market shares on the platform. Sellers can attract some
of their competitors' on-platform consumers by offering lower prices off the
platform. These consumers would not learn about the lower prices without the
presence of organic links.

To calculate the sellers' market shares of on-platform consumers, we
consider the off-platform information rents $\widehat{U}_{j}(\theta _{j})$.
The outside option of the on-platform consumer $\theta $ is given by $%
\max_{j}\{\widehat{U}_{j}(\theta _{j})\}.$ For a symmetric strategy profile
by all sellers $k\neq j$, and for each value $\theta _{j}$, we define the
indifferent value $\theta _{k}^{\ast }(\theta _{j})$ as%
\begin{equation}
\widehat{U}_{k}(\theta _{k}^{\ast }(\theta _{j}))=\widehat{U}_{j}(\theta
_{j})\text{.}  \label{thetaminus}
\end{equation}

With $\theta _{k}^{\ast }=\theta _{k}^{\ast }(\theta _{j})$ defined as in (%
\ref{thetaminus}), seller $j$'s best-response problem is given by%
\begin{align}
\max_{\widehat{q},\widehat{U}}& \left( 1-\lambda \right) \int_{\theta
_{L}}^{\theta _{H}}\underset{\text{off-platform sales}}{\underbrace{[\theta
_{j}\widehat{q}(\theta _{j})-\widehat{q}(\theta _{j})^{2}/2-\widehat{U}%
(\theta _{j})]}}G^{J-1}(\theta _{j})g(\theta _{j})\text{d}\theta _{j}
\label{orgprofit} \\
& +\lambda \int_{\theta _{L}}^{\theta _{H}}\underset{\text{on-platform sales}%
}{\underbrace{(\theta _{j}^{2}/2-\widehat{U}(\theta _{j}))}}\min
\{F^{J-1}(\theta _{k}^{\ast }(\theta _{j})),F^{J-1}(\theta _{j})\}f(\theta
_{j})\text{d}\theta _{j}  \notag \\
& +\lambda \int_{\theta _{L}}^{\theta _{H}}\underset{\text{off-platform sales%
}}{\underbrace{[\theta _{j}\widehat{q}(\theta _{j})-\widehat{q}(\theta
_{j})^{2}/2-\widehat{U}(\theta _{j})]}}\max \{0,F^{J-1}(\theta _{k}^{\ast
}(\theta _{j}))-F^{J-1}(\theta _{j})\}f(\theta _{j})\text{d}\theta _{j}\text{%
.}  \notag
\end{align}

The first term in (\ref{orgprofit}) captures the off-platform consumers. The
second term captures the sales to on-platform consumers for which seller $j$
offers both the highest utility level $\widehat{U}_{j}$ and the highest
marginal value $\theta _{j}$. The third term, whenever positive, captures
on-platform consumers with $\theta _{j}<\max_{k\neq j}\theta _{k}$ to whom
seller $j$ nonetheless offers the highest utility level $\widehat{U}_{j}$.
Seller $j$ is not advertised to these consumers, who instead showroom and
buy from seller $j$ off the platform.

Seller $j$'s problem can therefore be restated as follows. Undercutting the
other sellers (so that $\theta _{k}^{\ast }(\theta _{j})>\theta _{j}$)
yields some additional on-platform consumers who buy off-platform.
Conversely, raising prices above the other sellers' (so that $\theta
_{k}^{\ast }(\theta _{j})<\theta _{j}$) causes seller $j$ to lose some
consumers who would otherwise buy on platform (i.e., a higher quality
product at a higher price, relative to off platform sales). Therefore,
relative to the baseline model with a sponsored link only, each seller $j$
has an incentive to raise $\theta _{k}^{\ast }$ through a higher $\widehat{U}%
_{j}$. This incentive, which is entirely due to organic links, explains the
result in Proposition \ref{prop_organic}, where we compare the game with
organic links to the baseline setting.

\begin{proposition}[Equilibrium with Organic Links]
\label{prop_organic}\strut

\begin{enumerate}
\item The equilibrium quality and utility levels $\widehat{q}_{j}^{\ast
}(\theta _{j})$ and $\widehat{U}_{j}^{\ast }(\theta _{j})$ are weakly higher
for all $\theta _{j}$ with organic links than without.

\item The sellers' profits are lower and their outside options are higher
with organic links than without.
\end{enumerate}
\end{proposition}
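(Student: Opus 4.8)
The plan is to prove both parts by reducing them to the baseline single‑seller problem (\ref{obj1}) and its solution (\ref{q0s})--(\ref{U0s}). Two observations drive everything. First, at a \emph{symmetric} off‑platform profile the indifferent type defined by (\ref{thetaminus}) satisfies $\widehat U_k(\theta_k^\ast(\theta_j))=\widehat U_j(\theta_j)$ with $\theta_k^\ast(\theta_j)=\theta_j$, and differentiating this identity gives $\partial\theta_k^\ast(\theta_j)/\partial\widehat U_j(\theta_j)=1/\widehat q(\theta_j)$; so a marginal increase in seller $j$'s rent schedule raises $\theta_k^\ast$ and thereby activates the third term of (\ref{orgprofit}). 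Second, along any incentive‑compatible off‑platform menu with $\widehat q$ nondecreasing, $\widehat q(\theta)\le\theta$ (no distortion above the efficient level, as usual) and $\widehat U(\theta_L)=0$, the per‑item off‑platform profit $\pi(\theta):=\theta\widehat q(\theta)-\widehat q(\theta)^2/2-\widehat U(\theta)$ is nonnegative, since $\pi(\theta_L)\ge 0$ and $\pi'(\theta)=\widehat q'(\theta)\,(\theta-\widehat q(\theta))\ge 0$, with strict inequality above the lowest served type.

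For part 1, substitute $\widehat U_j(\theta)=\int_{\theta_L}^\theta\widehat q_j$ into (\ref{orgprofit}) and treat $\widehat q_j$ as the control. Because of the $\min$ and $\max$ operators, (\ref{orgprofit}) is kinked at the symmetric point, so I would compute the right‑hand directional derivative along feasible upward perturbations $\widehat q_j=\widehat q^\ast_{\mathrm{org}}+\varepsilon\psi$, $\psi\ge 0$ (which push $\widehat U_j$, hence $\theta_k^\ast$, upward). For such perturbations the $\min$ in the second term stays equal to $F^{J-1}(\theta_j)$, so the first two terms of (\ref{orgprofit}) reproduce exactly the baseline marginal profit of quality at $x$, namely $\Delta(x;\widehat q):=(1-\la)\,(x-\widehat q(x))\,G^{J-1}(x)g(x)-[(1-\la)(1-G^J(x))+\la(1-F^J(x))]/J$, whose zero is the interior branch of (\ref{q0s}); while the third term, identically zero at the symmetric point, contributes the marginal value $\la A(x)$ per unit of $\widehat q(x)$, where $A(x):=\int_x^{\theta_H}\pi(\theta)\,(J-1)F^{J-2}(\theta)f(\theta)^2/\widehat q(\theta)\,\de\theta\ge 0$ by the per‑item‑profit observation. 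Hence optimality of $\widehat q^\ast_{\mathrm{org}}$ against upward deviations forces $\Delta(x;\widehat q^\ast_{\mathrm{org}}(x))+\la A(x)\le 0$, so $\Delta(x;\widehat q^\ast_{\mathrm{org}}(x))\le 0$ for a.e.\ $x$. Since $\Delta(x;\cdot)$ is strictly decreasing, $\widehat q^\ast_{\mathrm{org}}(x)\ge\widehat q^\ast_{\mathrm{base}}(x)$ pointwise — trivially where $\widehat q^\ast_{\mathrm{base}}(x)=0$, and because $\Delta(x;\widehat q^\ast_{\mathrm{base}}(x))=0$ otherwise — and integrating gives $\widehat U^\ast_{\mathrm{org}}\ge\widehat U^\ast_{\mathrm{base}}$. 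As in the footnote following Proposition \ref{menu}, this presumes $F,G$ regular enough that $\widehat q^\ast_{\mathrm{org}}$ is monotone (ironing otherwise) and that a symmetric equilibrium exists with local best responses global.

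For part 2, the outside‑option comparison is the easy direction. Without organic links a nonparticipating seller reaches none of the on‑platform consumers (Proposition \ref{prop_consider}), so $\overline\Pi_j$ equals the pure off‑platform optimum (\ref{eq_ou}); with organic links that same seller's off‑platform menu is visible on the platform, so by replicating its competitors' equilibrium menu $(\widehat q^\ast_{\mathrm{org}},\widehat U^\ast_{\mathrm{org}})$ (slightly sweetened to break indifference) it additionally captures every on‑platform consumer whose favorite it is, a mass $\la F^{J-1}(\theta_j)f(\theta_j)\,\de\theta_j$ at each $\theta_j$, each yielding $\pi^\ast_{\mathrm{org}}(\theta_j)\ge 0$; hence $\overline\Pi_{j,\mathrm{org}}\ge\overline\Pi_{j,\mathrm{base}}$ (strictly when $\la>0$ and the off‑platform menu serves a positive mass). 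For equilibrium gross profit, observe that at the symmetric organic‑links equilibrium $\theta_k^\ast(\theta_j)=\theta_j$, so the third term of (\ref{orgprofit}) vanishes and the $\min$ in the second term equals $F^{J-1}(\theta_j)$, whence (\ref{orgprofit}) collapses to exactly the baseline objective (\ref{obj1}) evaluated at the feasible menu $\widehat q^\ast_{\mathrm{org}}$. Since $\widehat q^\ast_{\mathrm{base}}$ is by construction the maximizer of (\ref{obj1}), it follows that $\Pi^\ast_{j,\mathrm{org}}\le\Pi^\ast_{j,\mathrm{base}}$, strictly whenever $\widehat q^\ast_{\mathrm{org}}\ne\widehat q^\ast_{\mathrm{base}}$ and the maximizer is unique. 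Since $t^\ast=\Pi^\ast_j-\overline\Pi_j$, the advertising budget and the platform's revenue then fall.

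The main obstacle is part 1: handling the non‑smoothness of (\ref{orgprofit}) at the symmetric point correctly — in particular recognizing that the binding optimality condition against the profitable ``raise $\widehat U_j$ to poach on‑platform share'' deviation is the one‑sided inequality $\Delta(x;\widehat q^\ast_{\mathrm{org}})\le 0$ rather than an Euler equation — and pinning down the sign of the extra term $\la A(x)$ via the per‑item‑profit lemma. Once part 1 is in hand, part 2 is essentially bookkeeping: the outside option rises because organic links give a nonparticipant access to the on‑platform market, and equilibrium gross profit falls because the organic‑links equilibrium menu is a feasible but suboptimal choice in each seller's baseline problem.
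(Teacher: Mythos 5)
Your proposal is correct and follows essentially the same route as the paper: your one\mbox{-}sided condition $\Delta(x)+\lambda A(x)\le 0$ is exactly the paper's Hamiltonian/costate comparison ($\gamma'\le\hat{\gamma}'$ with common transversality, hence $\hat{\gamma}\le\gamma$) with the costate integrated out, and your part~2 (on\mbox{-}path profit bounded by the maximized baseline objective; outside option raised by the poached fraction $F^{J-1}(\theta_k^{\ast})$ of on\mbox{-}platform consumers) matches the paper's argument step for step. The one small value you add is the explicit lemma that the per\mbox{-}item off\mbox{-}platform profit $\theta\widehat{q}(\theta)-\widehat{q}(\theta)^{2}/2-\widehat{U}(\theta)$ is nonnegative along any incentive\mbox{-}compatible menu with $\widehat{q}\le\theta$, a fact the paper's sign claim for the extra costate term relies on but does not verify.
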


To establish this result, we consider the symmetric equilibria of the
subgame following the platform's announcement of an advertising budget $t$.
The symmetric equilibrium quality levels of this game can be characterized
through a system of differential equations, as in \cite{bona11}. We show
that competition among the sellers is fiercer in any such equilibrium. In
particular, quality and utility levels are higher and sellers' profits are
lower than without organic links. Conversely, the sellers' outside options
in any continuation equilibrium are higher than in the baseline model, and
the platform demands a lower advertising budget.

Intuitively, the presence of organic information benefits consumers but
reduces the platform's ability to restrain competition and extract surplus
from sellers. In a symmetric equilibrium, each seller's market segment
consists of all consumers who like the products the most. These market
shares, unlike the baseline case, are endogenous to the choice of $\widehat{U%
}_{j}$. Because the off-platform menus can affect the on-platform market
shares, offering higher rents to consumers has an additional benefit. The
equilibrium utility and quality levels are then higher than without organic
links, the on-path gross profit of the sellers are lower, and consumer
surplus is higher.

However, the sellers' profits net of the advertising budget are equal to the
value of their outside option. With organic links, any seller can respond to
competitors' prices without participating in the mechanism. Let $\theta
_{k}^{\ast }$ be given by (\ref{thetaminus}), with $\widehat{U}_{k}=\widehat{%
U}_{j}^{\ast }$. Because all of the sales necessarily happen off the
platform, a deviating seller $j$ can then obtain profit of%
\begin{equation}
\widetilde{\Pi }_{j}\triangleq \max_{\widehat{q},\widehat{U}}\int_{\theta
_{L}}^{\theta _{H}}\left[ \theta _{j}\widehat{q}(\theta _{j})-\widehat{q}%
(\theta _{j})^{2}/2-\widehat{U}(\theta _{j})\right] \left[ 
\begin{array}{c}
\left( 1-\lambda \right) G^{J-1}(\theta _{j})g(\theta _{j}) \\ 
+\lambda F^{J-1}(\theta _{k}^{\ast }(\theta _{j}))f(\theta _{j})%
\end{array}%
\right] \text{d}\theta _{j}.  \label{orgdev}
\end{equation}

Unlike in the baseline model, each deviating seller has the opportunity to
win over some (but not necessarily all) on-platform consumers for which $%
\theta _{j}\geq \max_{k\neq j}\theta _{k}$. The outside option $\widetilde{%
\Pi }_{j}$ in (\ref{orgdev}) then exceeds the value $\overline{\Pi }_{j}$
defined in (\ref{eq_ou}). In other words, the sellers' outside options are
higher with organic links than without, and the equilibrium advertising
budgets are correspondingly lower.

Finally, recall that with symmetric information and no organic links, the
deviating seller wins \emph{all} on-platform consumers for which $\theta
_{j}\geq \max_{k\neq j}\theta _{k}$. Thus, the outside option $\hat{\Pi}_{j}$
defined in (\ref{pihat}) is even higher than $\widetilde{\Pi }_{j}$ in (\ref%
{orgdev}). Because the equilibrium menus with organic links do not change if
consumers know their values, it is possible that the platform might
profitably raise the requested advertising budget by showing organic links
if consumers are already fully informed about their values.

\subsection{Privacy Protection\label{privacy}}

Up to this point, we have not limited the platform's ability to share
information about the values of the buyers, $\theta $, with the sellers. In
reality, the extent of data sharing may be restricted by regulation or
design choices made by the platform. Now, we assume that the value of the
advertised product $\theta _{j}$ is shared with consumers on the platform,
but not with sellers. Sellers are only allowed to base their offers on the 
\emph{ranking} of consumer valuations $\theta _{j}$ within a \emph{cohort}
of consumers, where each consumer ranks the $J$ sellers in the same way.

Despite this change, efficient matching of sellers and consumers remains
feasible. However, consumers on the platform still have some private
information about their preferences. Unlike the baseline case where each
seller could make personalized offers to consumers, cohort-based ads mean
that seller $j$ only knows the distribution of consumer values based on the
order statistics implied by their cohort. Consequently, each seller must
screen consumers both on and off the platform.

The symmetric equilibrium menus under cohort-based ads are the solution to
two linked screening problems. In the on-platform problem, the showrooming
constraints act as value-dependent participation constraints. To solve this
problem, we strengthen the ranking of the distributions $F$ and $G$ by
assuming that\ the on-platform distribution $F$\ dominates the off-platform
distribution $G$\ in the \emph{likelihood-ratio order, }denoted $F\succ
_{lr}G$. The distribution $F$ dominates $G$ in the likelihood-ratio order if 
$g\left( \theta _{j}\right) /f\left( \theta _{j}\right) $ is decreasing in $%
\theta _{j}$ (see Definition 1.C.1 in \cite{shsh94}). We only require that
the \emph{likelihood-ratio order} is maintained\emph{\ }over the range of
values that receive a positive quality under $F$. We define the (Myerson)
virtual values for the two distributions as: 
\begin{equation*}
\phi _{F}(\theta _{j}):=\theta _{j}-\frac{1-F^{J}(\theta _{j})}{%
JF^{J-1}(\theta _{j})f(\theta _{j})}\text{, and }\phi _{G}(\theta
_{j}):=\theta _{j}-\frac{1-G^{J}(\theta _{j})}{JG^{J-1}(\theta _{j})g(\theta
_{j})}.
\end{equation*}

\begin{proposition}[Cohort Targeting]
\label{floc}\qquad \newline
In the unique symmetric equilibrium, each seller offers quality levels 
\begin{equation*}
\widehat{q}_{j}^{\ast }(\theta _{j})=q_{j}^{\ast }(\theta _{j})=\max \left\{
0,\theta _{j}-\frac{1-\lambda F^{J}(\theta _{j})-(1-\lambda )G^{J}(\theta
_{j})}{\lambda JF^{J-1}(\theta _{j})f(\theta _{j})+(1-\lambda
)JG^{J-1}(\theta _{j})g(\theta _{j})}\right\}
\end{equation*}%
if and only if $F^{J}\succ _{lr}G^{J}$ for all $\theta _{j}$ such that $\min
\{\phi _{F^{J}}(\theta _{j}),\phi _{G^{J}}(\theta _{j})\}\geq 0$.
\end{proposition}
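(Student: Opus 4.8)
The plan is to reduce the claim to a single seller's ``linked screening'' problem and to solve it by a Lagrangian argument, with the likelihood-ratio condition emerging as exactly the requirement that the multiplier on the showrooming constraints be nonnegative and monotone. As in the proof of Proposition~\ref{menu}, and because there are no organic links and the platform steers efficiently, each seller's customer base is exogenous: off the platform seller $j$ serves the consumers with $m_{j}=\max_{k}m_{k}$ (relevant type distributed $G^{J}$) and on the platform those with $\theta_{j}=\max_{k}\theta_{k}$ (distributed $F^{J}$), with the consideration-set logic of Proposition~\ref{prop_consider} still in force. A symmetric equilibrium is then a profile in which each seller independently solves
\begin{align*}
\max_{q_{j},\widehat{q}_{j}\ge 0}\ &\lambda\int_{\theta_{L}}^{\theta_{H}}\big[\theta_{j}q_{j}-q_{j}^{2}/2-U_{j}\big]\,dF^{J}(\theta_{j})\\
&+(1-\lambda)\int_{\theta_{L}}^{\theta_{H}}\big[\theta_{j}\widehat{q}_{j}-\widehat{q}_{j}^{2}/2-\widehat{U}_{j}\big]\,dG^{J}(\theta_{j})
\end{align*}
subject to the envelope conditions $U_{j}'=q_{j}$ and $\widehat{U}_{j}'=\widehat{q}_{j}$, the normalizations $\widehat{U}_{j}(\theta_{L})=U_{j}(\theta_{L})=0$, monotonicity of $q_{j}$ and $\widehat{q}_{j}$, and the family of showrooming constraints $U_{j}(\theta_{j})\ge\widehat{U}_{j}(\theta_{j})$ for all $\theta_{j}$ (which also subsume on-platform individual rationality). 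This program is strictly concave in $(q_{j},\widehat{q}_{j})$ over a convex feasible set, so it has a unique solution; establishing the stated formula thus also yields uniqueness of the symmetric equilibrium.

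\textbf{Lagrangian and candidate menu.} Integrating the rent terms by parts puts the objective in virtual-surplus form. Attaching a nonnegative Stieltjes measure $d\Lambda\ge 0$ to the showrooming constraints and writing $M(\theta_{j}):=\int_{\theta_{j}}^{\theta_{H}}d\Lambda$ --- nonincreasing and supported where showrooming binds --- pointwise optimization of the Lagrangian gives $q_{j}^{\ast}=\max\{0,\phi_{F^{J}}(\theta_{j})+M(\theta_{j})/a(\theta_{j})\}$ and $\widehat{q}_{j}^{\ast}=\max\{0,\phi_{G^{J}}(\theta_{j})-M(\theta_{j})/b(\theta_{j})\}$, where $a:=\lambda(F^{J})'$ and $b:=(1-\lambda)(G^{J})'$. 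On any interval where the constraint binds, $U_{j}=\widehat{U}_{j}$ forces $q_{j}^{\ast}=\widehat{q}_{j}^{\ast}$; equating the two first-order expressions and solving first for $M$ and then for the common quality yields, after simplification, $q_{j}^{\ast}=\widehat{q}_{j}^{\ast}=\max\{0,\phi_{H}(\theta_{j})\}$ with $\phi_{H}(\theta_{j})=\theta_{j}-(1-\lambda F^{J}(\theta_{j})-(1-\lambda)G^{J}(\theta_{j}))/(a(\theta_{j})+b(\theta_{j}))$ and $H:=\lambda F^{J}+(1-\lambda)G^{J}$ --- which is precisely the formula in the statement --- together with the implied multiplier $M=(\phi_{G^{J}}-\phi_{F^{J}})\,ab/(a+b)$.

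\textbf{Sufficiency.} Suppose $F^{J}\succ_{lr}G^{J}$ on $\mathcal{R}:=\{\theta_{j}:\min(\phi_{F^{J}}(\theta_{j}),\phi_{G^{J}}(\theta_{j}))\ge 0\}$. Likelihood-ratio dominance implies hazard-rate dominance, hence $\phi_{F^{J}}\le\phi_{G^{J}}$ on $\mathcal{R}$; since $\phi_{H}$ is a convex combination of $\phi_{F^{J}}$ and $\phi_{G^{J}}$, we get $\phi_{F^{J}}\le\phi_{H}\le\phi_{G^{J}}$ there, the unconstrained menus violate showrooming, and the constraint must bind wherever $\phi_{H}\ge 0$. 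One then takes the candidate of the previous step --- $q_{j}^{\ast}=\widehat{q}_{j}^{\ast}=\max\{0,\phi_{H}\}$, with $U_{j}=\widehat{U}_{j}\equiv 0$ on $\{\phi_{H}<0\}$, where showrooming holds trivially --- and verifies the remaining optimality conditions: $M\ge 0$ (immediate from $\phi_{F^{J}}\le\phi_{G^{J}}$); $M$ nonincreasing with $M(\theta_{H})=0$; $\max\{0,\phi_{H}\}$ nondecreasing; and complementary slackness together with the boundary behavior at $\theta^{\ast}:=\inf\{\theta_{j}:\phi_{H}(\theta_{j})\ge 0\}$. The monotonicity of $M$ and of the quality schedule is where the \emph{full} strength of the likelihood-ratio order (beyond mere hazard-rate dominance) and the regularity of $F$ and $G$ are used. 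Strict concavity then makes this candidate the unique optimum, hence the unique symmetric equilibrium.

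\textbf{Necessity and the main obstacle.} Conversely, were the menus in the statement an equilibrium, the showrooming constraint would bind throughout $\{\phi_{H}>0\}$ and its multiplier would be forced to equal $M=(\phi_{G^{J}}-\phi_{F^{J}})\,ab/(a+b)$; I would show that on $\mathcal{R}$ the joint nonnegativity and monotonicity of this $M$ is equivalent to $F^{J}\succ_{lr}G^{J}$. Hence if the likelihood-ratio order fails on $\mathcal{R}$, then either $\phi_{G^{J}}<\phi_{F^{J}}$ somewhere --- so $M<0$, and it is strictly profitable to leave showrooming slack and set $q_{j}\neq\widehat{q}_{j}$ --- or $M$ is non-monotone, so no admissible multiplier supports the candidate and a local reallocation of quality across the two channels raises profit; in either case the equilibrium menus differ from $\max\{0,\phi_{H}\}$ on a set of positive measure. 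I expect the crux of the argument to be precisely this equivalence, together with the monotonicity checks in the sufficiency step --- translating the \emph{local} density-ratio condition $F^{J}\succ_{lr}G^{J}$ into the \emph{global} admissibility (nonnegativity plus monotonicity) of the Lagrange multiplier for the showrooming state constraint --- and the delicate treatment of the endpoint $\theta^{\ast}$, where the multiplier measure may carry an atom and the floors $q_{j},\widehat{q}_{j}\ge 0$ interact with the state constraint.
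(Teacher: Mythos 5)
Your proposal follows essentially the same route as the paper: the paper likewise poses the joint on-/off-platform screening problem against $F^{J}$ and $G^{J}$ with the showrooming constraint as a type-dependent link, derives the common quality $\max\{0,\phi_{H}(\theta_{j})\}$ from the pointwise optimality conditions, and shows that the multiplier on the showrooming constraint (your $-M'$, the paper's $\bar{\gamma}$) is nonnegative precisely when $F^{J}\succ_{lr}G^{J}$. The one step you defer---differentiating $M=(\phi_{G^{J}}-\phi_{F^{J}})\,ab/(a+b)$ to verify that its monotonicity is equivalent to the likelihood-ratio order on the relevant range---is exactly the closing computation the paper carries out via its explicit costate expressions.
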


Proposition \ref{floc} shows that if the distribution of the highest $\theta
_{j}$ dominates that of the highest $m_{j}$ in likelihood ratio (over the
relevant range), then each seller offers the same menu to consumers both on
and off the platform. In this menu, the equilibrium quality schedule is the
same as in (\ref{qmixture}), i.e., the \cite{muro78} quality level for a
mixture with weights $\left( \lambda ,1-\lambda \right) $ of the
distributions of the highest order statistics of $\theta $ and $m$,
respectively. Cohort-based ads thus yield higher quality provision
off-platform but lower quality on-platform relative to the baseline model
with full disclosure of the value $\theta $.

A critical implication of Proposition \ref{floc} is that all consumers
receive higher information rents relative to the baseline setting because of
the greater quality provision off the platform. Total surplus can also be
higher as a consequence of greater off-platform quality, although
on-platform quality is lower. Finally, as the sellers' outside options are
unchanged relative to the baseline model, the equilibrium advertising
budgets are unambiguously lower.

\subsection{Information Design\label{sec_infodesign}}

In the analysis so far we have assumed that the platform reveals the true
value $\theta _{j}$ of the sponsored brand $j$ to the consumer. Proposition %
\ref{optimalmech} has shown that the optimal mechanism is then to match each
consumer with their favorite seller $j^{\ast }=\arg \max \theta _{j}$
according to their true preferences, which is the efficient managed campaign
mechanism.

In this section, we investigate the optimal information design by the
platform. We derive conditions under which full information revelation is
approximately or exactly optimal, and conditions for no information
revelation to be optimal. We then focus on the case of uninformed
off-platform consumers and analyze how the optimal information policy
changes with the platform's size $\lambda$ and the distributions of values $%
F(\theta)$. Finally, we discuss the information revealed by the matching
mechanism when the platform does not provide full information about
consumers' value for the sponsored seller.

We assume that the platform knows each consumer's value $\theta $ and their
expected value $m$. This assumption simplifies the analysis, but it is also
a reasonable approximation since if the platform knows every consumer's true
preferences, it may also have information about their past \emph{experiences}%
.\footnote{%
See \cite{lima20} for a formal distinction.} For instance, the platform may
have access to the consumer's cookies and browsing history, which would
enable it to estimate the consumer's expected value.

In a managed campaign, each seller $j$ maximizes total profit by choosing
prices and product qualities given the platform's designed information. The
platform, in turn, maximizes the sellers' profits by choosing the
distribution of expected values. Advertising budgets then extract the
sellers' willingness to pay for this information. Hence, we can think of the
platform as designing both the information and the prices to maximize the
sellers' profits.

However, revealing information to consumers presents a trade-off for the
platform. On the one hand, better information improves the efficiency of
matching between consumers and sellers and product varieties. On the other
hand, better information increases the consumer's expected rent off the
platform, tightens the showrooming constraint, and reduces the sellers'
willingness to pay. To solve the platform's problem, we first show that it
is without loss to focus on symmetric information structures.

\begin{lemma}[Symmetric Information]
\label{tan}\strut

\noindent The optimal information structure enables trade under symmetric
(possibly partial) information on the platform.
\end{lemma}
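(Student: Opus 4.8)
The plan is to establish the equivalent claim that the platform's optimal value over \emph{all} on-platform information structures is already attained within the class of \emph{symmetric} ones --- those disclosing the same signal to the advertised seller and to the consumer it is matched with. A preliminary reduction: holding off-platform menus fixed and using efficient steering together with Proposition~\ref{prop_consider}, a non-participating seller's outside option is the off-platform-only profit in~(\ref{eq_ou}), a quantity that does not depend on what the platform discloses to on-platform agents. The advertising budget therefore extracts each seller's incremental gross profit, and it suffices to show that for every on-platform information structure and every continuation equilibrium there is a symmetric structure under which a participating seller obtains at least the same total (on- plus off-platform) gross profit while facing the same outside option.

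Fix such a structure and equilibrium. Consider the advertised seller $j$ and a consumer to whom $j$'s ad is shown, and let $\nu $ be that consumer's posterior mean of $\theta _{j}$ conditional on \emph{everything the consumer observes}, the posted quality-price pair included. Since $\nu $ is a conditional expectation of $\theta _{j}$, its law has mean $\E [\theta _{j}]$ and is a mean-preserving contraction of $F$; by \cite{blac51}, Theorem~5 --- already invoked in the model --- the law of $\nu $ is the posterior-mean distribution of some signal, i.e.\ of some symmetric structure. I then bound seller $j$'s per-consumer profit by $\Phi (\nu ):=\nu ^{2}/2-\widehat U_{j}(\nu )\geq 0$, regardless of what $j$ itself knows: if the consumer buys the posted pair $(q,p)$, the showrooming constraint gives $p\leq \nu q-\widehat U_{j}(\nu )$, so $j$ earns $p-q^{2}/2\leq \nu q-q^{2}/2-\widehat U_{j}(\nu )\leq \Phi (\nu )$; and if the consumer showrooms, it takes the off-platform item optimal at perceived value $\nu $, so $j$ earns $\nu \widehat q_{j}(\nu )-\widehat q_{j}(\nu )^{2}/2-\widehat U_{j}(\nu )=\Phi (\nu )-\tfrac12(\nu -\widehat q_{j}(\nu ))^{2}\leq \Phi (\nu )$. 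Integrating over the consumers $j$ is advertised to, $j$'s on-platform profit is at most $\E [\Phi (\nu )]$.

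This upper bound is attained by a symmetric structure. Let the platform disclose, to both the consumer and the advertised seller, a signal $S$ with $\E [\theta _{j}\mid S]$ distributed as $\nu $ (one exists by the preceding paragraph), keeping the steering rule and all off-platform menus as before. Knowing the consumer's perceived value $\nu $, seller $j$ optimally posts the tailored pair $(\nu ,\nu ^{2}-\widehat U_{j}(\nu ))$: the showrooming constraint binds, the (indifferent) consumer buys, and $j$ earns $\Phi (\nu )$. No deviation helps, because the consumer's posterior is now pinned down by the disclosed signal and is insensitive to the posted pair, so $j$ can never extract more than $\nu q-\widehat U_{j}(\nu )$ from a consumer with perceived value $\nu $. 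Hence $j$ secures on-platform profit $\E [\Phi (\nu )]$ --- at least its original level --- while the off-platform menu, the off-platform profit, and the outside option are unchanged; the platform's value is therefore weakly higher. Since $\nu $ need not pin down $\theta _{j}$, the disclosed signal (and thus the information available on the platform) may be partial, which is the content of the lemma.

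The step I expect to be delicate is the signaling channel: the posted contract is observed before the purchase decision, so ``the consumer's information'' and ``the seller's information'' are endogenous. One must (i) define $\nu $ and the consumer's continuation play consistently on and off the equilibrium path, and (ii) verify that the ceiling $\Phi (\nu )$ binds however much private information the seller holds --- the crux being that a seller can lift a consumer's marginal willingness to pay for quality only up to that consumer's own conditional mean $\nu $, never above it. Two smaller points, used above, are that a seller's outside option does not vary with on-platform disclosure --- so the platform's objective reduces to a participating seller's gross profit --- and that the quadratic cost makes the posterior mean a sufficient statistic for match surplus, so that building the symmetric structure as ``disclose a signal with the right posterior-mean law'' is without loss.
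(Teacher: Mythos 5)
Your argument is correct and formalizes exactly the two-step logic the paper offers for this lemma --- fold any information signaled through the posted contract into the consumer's posterior mean $\nu$ of $\theta_j$, then disclose $\nu$ symmetrically so the sponsored seller attains the per-match ceiling $\nu^2/2-\widehat{U}_j(\nu)$; the paper itself gives only this two-point intuition and defers the formal proof to \cite{bebg22}, so your write-up is if anything more complete than the paper's treatment. The one claim worth qualifying is that the outside option is invariant to on-platform disclosure: that invariance relies on the constructed symmetric structure retaining enough informational advantage for Proposition \ref{prop_consider} to pin down consideration sets when a seller declines to participate, which holds in the cases where the lemma is deployed ($\lambda=1$ or degenerate $G$) but would fail for a fully uninformative disclosure in the general model (cf.\ Corollary \ref{cor:add}).
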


This result first appeared as Lemma 1 in \cite{bebg22}. The intuition in our
model is that (i) holding off-platform menus fixed, the platform increases
the sellers' profits by eliminating the consumers' private information; and
(ii) any private information signaled by the sellers to the consumers
through their prices can be profitably revealed up-front to the consumers.

\paragraph{Large Platform}

We begin by considering the limit case where $\lambda =1.$ In other words, a
measure one of consumers shop on the platform, and hence sellers have no
reason to post off-platform menus. In this case, we show that the platform
maximizes the advertising budgets by committing to the efficient managed
campaign and by fully revealing $\theta _{j}$ to both consumer $\theta $ and
to the sponsored seller $j$.

\begin{proposition}[Large Platform]
\label{prop_fullinfo}\strut

\noindent When $\lambda =1,$ for any number of sellers $J$ and distributions 
$F$ and $G$, it is optimal for the platform to match consumer $\theta $ to
the efficient seller $j^{\ast }$ and to fully reveal $\theta _{j^{\ast }}$.
\end{proposition}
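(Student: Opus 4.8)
The plan is to exploit the fact that, with $\lambda=1$, the platform's revenue problem collapses to a pure information-and-matching design problem with full surplus extraction, so that full revelation plus efficient steering attains the first best. First I would observe that when $\lambda=1$ there is no off-platform market, so a seller that declines the managed campaign serves no one: by the consideration-set logic of Proposition \ref{prop_consider}, no consumer buys from a non-advertised brand, and there are no off-platform consumers either. Hence every seller's outside option is zero, the advertising budget $t^\ast$ extracts each seller's entire equilibrium profit, and the platform's objective reduces to choosing the information structure and selection rule that maximize total seller profit.

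Next I would invoke Lemma \ref{tan} to restrict attention to information structures under which on-platform trade takes place under symmetric (possibly partial) information between the sponsored seller $j$ and the consumer. Fix any such structure together with a selection rule, and let $\mu_j:=\mathbb{E}[\theta_j\mid s]$ be the common posterior mean of the matched seller's value at signal realization $s$. Since information is symmetric the seller needs no screening menu: it posts a single quality $q$ at price $p$, the consumer (whose only alternative is not to buy) accepts iff $\mu_j q-p\ge 0$, and the seller optimally makes this bind and sets $q=\mu_j$, earning $\mu_j^2/2$ and leaving the consumer zero expected rent. Because the platform extracts this profit and a revenue-maximizing selection rule matches each consumer to $\arg\max_j\mu_j$, the platform's revenue under any admissible design equals $\tfrac12\,\mathbb{E}\!\left[(\max_j\mu_j)^2\right]$.

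The key step is the bound $\mathbb{E}\!\left[(\max_j\mu_j)^2\right]\le \mathbb{E}\!\left[(\max_j\theta_j)^2\right]$, which I would obtain by applying Jensen's inequality twice: conditionally on $s$, $\max_j\mu_j=\max_j\mathbb{E}[\theta_j\mid s]\le \mathbb{E}[\max_j\theta_j\mid s]$, and then $\bigl(\mathbb{E}[\max_j\theta_j\mid s]\bigr)^2\le \mathbb{E}\!\left[(\max_j\theta_j)^2\mid s\right]$ by convexity of $x\mapsto x^2$; taking expectations over $s$ gives the claim. Finally, the upper bound $\tfrac12\mathbb{E}[(\max_j\theta_j)^2]$ is attained exactly when the platform fully reveals $\theta$ to the consumer and to the sponsored seller and uses efficient steering $j^\ast=\arg\max_j\theta_j$ with quality $q_{j^\ast}=\theta_{j^\ast}$: then $\mu_j=\theta_j$, both Jensen inequalities are equalities, and the seller extracts the full efficient match surplus $(\max_j\theta_j)^2/2$. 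Hence full revelation with efficient matching is optimal, for every $J$ and every $F,G$.

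The main obstacle is conceptual rather than computational: one has to be careful that with $\lambda=1$ the seller's outside option is genuinely zero (so extraction is complete) and that Lemma \ref{tan} legitimately removes asymmetric structures from consideration, after which the problem transparently becomes ``maximize the expected squared posterior mean of the best-matched seller,'' which full information solves. A secondary subtlety worth stating carefully is that under partial information the realized quality is $\mu_j$, not the true $\theta_j$, so per-consumer expected surplus is exactly $\mu_j^2/2$ rather than something larger; the double Jensen inequality is precisely what controls this gap.
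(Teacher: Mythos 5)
Your proof is correct, and it rests on the same underlying idea as the paper's: the sponsored seller's first-best profit $\theta_j^2/2$ is convex in the revealed value, so revealing more is better. The packaging differs, however. The paper argues in two stages: it fixes an arbitrary matching rule, notes that any information structure induces a mean-preserving contraction $\hat F\prec F^{\ast}$ of the distribution of matched values, concludes by convexity that $\hat F=F^{\ast}$ (full revelation) is optimal, and only then invokes Proposition \ref{optimalmech} to pin down efficient matching. You instead collapse both choices into a single first-best upper bound, $\tfrac{1}{2}\mathbb{E}[(\max_j\mu_j)^2]\leq\tfrac{1}{2}\mathbb{E}[(\max_j\theta_j)^2]$, via the double Jensen step ($\max_j\mathbb{E}[\theta_j\mid s]\leq\mathbb{E}[\max_j\theta_j\mid s]$, then convexity of the square), and verify attainment. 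Your route is slightly more self-contained --- it does not need Proposition \ref{optimalmech} for the matching step, and it bounds arbitrary joint designs of signal and selection rule in one pass --- while the paper's decomposition connects more directly to the mean-preserving-contraction machinery it reuses for the information-design analysis of Proposition \ref{prop_id}. Your preliminary observations (zero outside option at $\lambda=1$ via Proposition \ref{prop_consider}, reduction to symmetric information via Lemma \ref{tan}, and the fact that the seller extracts the full posterior-mean surplus $\mu_j^2/2$ because no off-platform rents constrain it) are all accurate and match the paper's implicit reliance on the same facts.
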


When the platform becomes arbitrarily large $\left( \lambda \rightarrow
1\right) $, the rents off-platform vanish and the sponsored seller can
appropriate the entire surplus it generates. The sponsored seller's profit
under complete and symmetric information on each value $\theta _{j}$ is then
given by the first-best surplus $\pi _{j}^{\ast }(\theta )=\theta _{j}^{2}/2$%
. Because $\pi _{j}^{\ast }\left( \cdot \right) $ is strictly convex, the
platform-optimal information design reveals to each consumer their true
value for the sponsored seller. Furthermore, by Proposition \ref{optimalmech}%
, it is optimal to match consumers and sellers efficiently when the platform
reveals all the information.

\paragraph{Zero Private Information}

We now characterize the platform's optimal information policy in the special
case where the off-platform consumers have no private information about
their expected values. Thus, the distribution $G$ places a unit mass on the
expected value $\mu \triangleq \mathbb{E}_{F}[\theta _{j}]$ for all $j$,
thus $G(m_{j})=\mathbf{1}_{\left\{ {m_{j}\geq \mu }\right\} }$.

Because the off-platform consumers have no private information, each seller $%
j$ offers just one product of quality $\widehat{q}_{j}\in \mathbb{R}_{+}$
off-platform at a price that extracts the consumer's expected willingness to
pay, i.e., $\widehat{p}_{j}=\mu \widehat{q}_{j}$. As we know from our
baseline model, the seller's choice of off-platform quality $\widehat{q}_{j}$
directly controls the information rent of all on-platform consumers. In
particular, a consumer with an expected value $\theta _{j}$ (given the
information revealed to them by the platform) obtains a rent 
\begin{equation*}
U(\theta _{j},\widehat{q}_{j})=\max \{0,(\theta _{j}-\mu )\widehat{q}_{j}\}
\end{equation*}%
when buying from seller $j$. Consequently, seller $j$'s on-platform profit
as a function of the realized value $\theta _{j}$ are given by 
\begin{equation}
\pi (\theta _{j},\widehat{q}_{j})=\theta _{j}^{2}/2-U(\theta _{j},\widehat{q}%
_{j}).  \label{eq:persprof}
\end{equation}%
Figure \ref{fig_uninf} illustrates the profit function $\pi (\cdot ,\widehat{%
q}_{j})$ for an example where $\mu =1/2$ and $\widehat{q}_{j}=1/2.$ The
seller extracts the entire willingness to pay of all on-platform values $%
\theta _{j}\leq 1/2$ but leaves a rent to values $\theta _{j}\geq 1/2$
(hence the downward kink).

\begin{figure}[htbp]\centering%
\includegraphics[width=.53\textwidth]{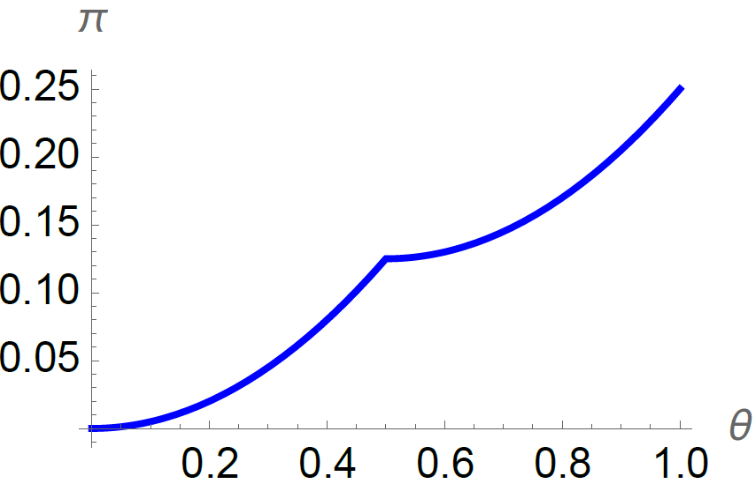}
\caption{On-platform profit levels}%
\label{fig_uninf}%
\end{figure}%

As a first step towards characterizing the optimal information design, we
establish that the platform shows each consumer an ad by their favorite
seller and reveals an informative signal about their value $\theta _{j^{\ast
}}$.

\begin{lemma}[Efficient Steering]
\strut

\noindent If consumers are uninformed, the efficient managed campaign
matching mechanism is optimal for the platform.
\end{lemma}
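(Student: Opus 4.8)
The plan is to mirror the logic of Proposition \ref{optimalmech}: write the platform's revenue as the sellers' total (gross) profit minus the sum of their outside options, show that the efficient-matching campaign simultaneously (weakly) maximizes the former and minimizes the latter, and reduce the only non-trivial comparison to a statement about \emph{which} seller's value it is best to disclose.

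First I would invoke Lemma \ref{tan} to restrict attention to information structures under which on-platform trade takes place under symmetric (possibly partial) information. A mechanism is then described by a matching rule $j=\mathbf{j}(\theta)$ together with, for each consumer, a Bayes-plausible distribution over the posterior mean $V$ the consumer holds for the sponsored seller $\mathbf{j}(\theta)$. Given such a pair, the on-platform transaction is pinned down: the sponsored seller offers the efficient quality $q=V$, the showrooming constraint binds, and by \eqref{eq:persprof} the seller's on-platform profit from that consumer is $V^{2}/2-\max\{0,(V-\mu)\widehat{q}_{j}\}$, with joint seller-plus-consumer surplus $V^{2}/2$. The crucial simplification from uninformed off-platform consumers is that the off-platform side---each seller's single product $\widehat{q}_{j}$ sold at $\mu\widehat{q}_{j}$, and hence its off-platform profit and its off-platform-only outside option---does not depend on the matching rule at all. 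So the matching rule enters the platform's payoff only through the induced joint law of (true value of the sponsored seller, disclosed posterior value).

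Next I would bound the two pieces of $\sum_{j}(\Pi_{j}^{\ast}-\overline{\Pi}_{j})$ separately. For the outside options: for every mechanism $\overline{\Pi}_{j}$ is at least the profit seller $j$ earns by serving only its off-platform consumers optimally, a quantity independent of the mechanism; and under efficient matching this lower bound is attained, because---by the consideration-set logic of Proposition \ref{prop_consider}---on-platform consumers follow the sponsored recommendation, so a non-participating seller forgoes all on-platform demand and, being no longer advertised (hence unknown to those consumers, who face a search cost), cannot poach it back. Hence efficient matching minimizes $\sum_{j}\overline{\Pi}_{j}$. For the joint-profit piece $\sum_{j}\Pi_{j}^{\ast}$: its off-platform component is matching-invariant, so it suffices to show that efficient matching weakly maximizes the on-platform component $\lambda\,\mathbb{E}\big[V^{2}/2-\max\{0,(V-\mu)\widehat{q}\}\big]$, where $\widehat{q}$ is the sellers' induced best response. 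This is where $\theta_{j^{\ast}}=\max_{k}\theta_{k}$ does the work: disclosing a garbling of the favorite seller's value $\theta_{j^{\ast}}$ gives the platform a (weakly) richer menu of inducible posterior-value distributions than disclosing about any other seller, so any outcome of a non-efficient matching can be reproduced, after further garbling, under efficient matching, with the seller's off-platform quality adjusting accordingly. Combining the two bounds gives the claim.

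The main obstacle is precisely this last replication step. One cannot take the shortcut of Proposition \ref{prop_fullinfo} and pair efficient matching with full disclosure, because the per-transaction profit $V\mapsto V^{2}/2-\max\{0,(V-\mu)\widehat{q}\}$ has a concave kink at $V=\mu$, so ``more informative'' is not uniformly better---indeed the whole point of the subsequent analysis is that partial disclosure can be optimal. The argument must therefore be carried out at the level of distributions: using the symmetry of the sellers to reduce to a single representative sponsored pair, one shows that the law of the sponsored seller's value under efficient matching stochastically dominates, and is a mean-preserving spread of, the corresponding law under any alternative matching rule, so that the sets of Bayes-plausible posterior-value distributions are nested the right way. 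A secondary technical point, to be handled under the symmetric-beliefs refinement, is the off-path step in the outside-option argument: one must verify that when a seller declines the campaign, the consumers it would have been matched to---who now see a rival's ad and receive no signal about the deviator---indeed have no profitable showrooming option back to it.
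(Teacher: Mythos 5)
Your decomposition into ``joint profits minus outside options'' and your treatment of the outside-option piece are fine and consistent with the logic of Proposition \ref{optimalmech}, and you are right that one cannot simply invoke convexity of the per-transaction profit because of the kink in \eqref{eq:persprof}. But the step you rest the whole argument on is wrong as stated. You claim that the law of the sponsored seller's value under efficient matching ``stochastically dominates, \emph{and} is a mean-preserving spread of,'' the law under any alternative matching rule, and that the feasible sets of posterior-value distributions are therefore nested. These two relations are mutually exclusive for distinct distributions: a mean-preserving spread preserves the mean, while first-order dominance of $F^{J}$ over the distribution of $\theta_{\mathbf{j}(\theta)}$ for any non-efficient $\mathbf{j}$ is strict in the mean (the max order statistic has a strictly higher expectation than any other selection). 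Consequently the Bayes-plausible sets are \emph{not} nested: full pooling at the mean of the alternative distribution is feasible under the alternative matching but is not a mean-preserving contraction of $F^{J}$, so your replication/garbling argument cannot go through.

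The idea that actually closes the argument is monotonicity, not set inclusion. The per-state profit $\pi(\theta_{j},\widehat{q}_{j})=\theta_{j}^{2}/2-\max\{0,(\theta_{j}-\mu)\widehat{q}_{j}\}$ is strictly increasing in $\theta_{j}$ (its slope is $\theta_{j}$ below $\mu$ and $\theta_{j}-\widehat{q}_{j}\geq 0$ above), so the value of the persuasion program $\max_{\hat F\prec F^{\ast}}\int\pi\,\mathrm{d}\hat F$ is monotone in the prior $F^{\ast}$ with respect to first-order stochastic dominance, even though it is not monotone in informativeness. Since any matching rule induces a distribution of the sponsored seller's value that is pointwise (hence first-order stochastically) dominated by $F^{J}$, efficient matching weakly maximizes the on-platform profit for \emph{every} subsequent choice of disclosure and of $\widehat{q}$, and in particular at the optimum. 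That is the route the paper takes; your kink observation correctly rules out the convexity shortcut but your proposed substitute does not repair it.
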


To gain intuition, observe that the seller's profit from value $\theta _{j}$
is $\pi (\theta _{j},\widehat{q}_{j})$, which is strictly increasing in $%
\theta _{j}$. Thus, the platform's payoff increases when the distribution of
the underlying \textquotedblleft state\textquotedblright\ (i.e., $\theta
_{j} $) improves in the first order stochastic sense. Because the
distribution of the highest order statistic $F^{J}\left( \theta _{j}\right) $
first-order dominates the distribution of values $\widehat{F}$ that is
induced by any other matching mechanism, the sender (i.e., the platform)
chooses to design information about the consumer's highest value component $%
\theta _{j}$.

Therefore, we can write the platform's problem as 
\begin{align}
& \max_{\widehat{q}_{j},\widehat{F}}\left[ (1-\lambda )(\mu \hat{q}_{j}-%
\widehat{q}_{j}^{2}/2)+\lambda \int_{\theta _{L}}^{\theta _{H}}(\theta
_{j}^{2}/2-\max \{0,(\theta _{j}-\mu )\widehat{q}_{j}\})\text{d}\widehat{F}%
(\theta _{j})\right]  \label{idobj} \\
& \text{s.t. }F^{J}\succ \widehat{F}.  \notag
\end{align}%
In order to solve this problem, we adapt the toolkit of \cite{dwma19} for
persuasion problems where the receiver's posterior mean is a sufficient
statistic for their beliefs. We first fix an off-platform quality level $%
\widehat{q}$ and optimize over information structures. We then characterize
the optimal quality level off platform.

\begin{proposition}[Optimal Information Design]
\label{prop_id}\strut

\noindent Fix $\widehat{q}$ and suppose the off-platform consumers have zero
private information.

\begin{enumerate}
\item There exist two thresholds $x_{1}\leq \mu \leq x_{2}$ such that the
optimal distribution of posteriors $\widehat{F}^{\ast }(\theta _{j})$
coincides with $F^{J}(\theta _{j})$ on $[0,x_{1}]$ and $[x_{2},1]$ and has
an atom at $\mu $.

\item The pair of optimal thresholds $\left( x_{1},x_{2}\right) $ are the
unique solution to%
\begin{eqnarray*}
x_{1}+2\widehat{q} &=&x_{2} \\
\mathbb{E}_{F^{J}}[\theta \mid x_{1}\leq \theta \leq x_{2}] &=&\mu .
\end{eqnarray*}
\end{enumerate}
\end{proposition}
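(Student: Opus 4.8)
The plan is to treat the platform's problem with $\widehat q$ fixed as a persuasion problem whose ``state'' is the consumer's posterior mean for the sponsored seller. By Lemma \ref{tan} and the preceding \emph{Efficient Steering} lemma, the platform sponsors each consumer's favorite seller and designs a signal about $\theta_{j^{\ast}}$, whose prior is the highest-order-statistic distribution $F^{J}$; by Blackwell's theorem a signal is feasible if and only if the induced distribution $\widehat F$ of posterior means satisfies $F^{J}\succ\widehat F$. After discarding the $\widehat q$-dependent constant in (\ref{idobj}), the platform's objective becomes $\max_{F^{J}\succ\widehat F}\int v(\theta)\,\de\widehat F(\theta)$ with $v(\theta)=\theta^{2}/2-\widehat q\max\{0,\theta-\mu\}$. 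The structural fact I would exploit is that $v$ is piecewise quadratic with $v''=1$ on $(0,\mu)$ and on $(\mu,1)$ but has a single \emph{concave} kink at $\mu$ (left derivative $\mu$, right derivative $\mu-\widehat q$): away from $\mu$, Jensen's inequality makes full disclosure strictly optimal, so the only reason to garble is to smooth this kink by pooling an interval around $\mu$.

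This suggests the candidate $\widehat F^{\ast}$: equal to $F^{J}$ on $[0,x_{1}]\cup[x_{2},1]$ and carrying an atom of mass $F^{J}(x_{2})-F^{J}(x_{1})$ at $\mu$, where $0\le x_{1}\le\mu\le x_{2}\le 1$. Mean preservation forces the atom to sit at the conditional mean of the pooled interval, i.e.\ $\mathbb{E}_{F^{J}}[\theta\mid x_{1}\le\theta\le x_{2}]=\mu$, which is the second equation and which makes $\widehat F^{\ast}$ a mean-preserving contraction of $F^{J}$, hence feasible. To verify optimality I would exhibit a convex ``price function'' $p$ in the sense of \cite{dwma19}: let $p$ coincide with $v$ on $[0,x_{1}]\cup[x_{2},1]$ and equal the chord of $v$ through $(x_{1},v(x_{1}))$ and $(x_{2},v(x_{2}))$ on $[x_{1},x_{2}]$. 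Optimality follows once $p$ is shown to be convex with $p\ge v$ pointwise, $p=v$ $\widehat F^{\ast}$-a.e., and $\int p\,\de\widehat F^{\ast}=\int p\,\de F^{J}$, since then for every feasible $\widehat F$ one has
$$\int v\,\de\widehat F\;\le\;\int p\,\de\widehat F\;\le\;\int p\,\de F^{J}\;=\;\int p\,\de\widehat F^{\ast}\;=\;\int v\,\de\widehat F^{\ast},$$
the second inequality holding because $p$ is convex and $F^{J}\succ\widehat F$.

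The verification itself is where the two equations come from. First, $\int p\,\de\widehat F^{\ast}=\int p\,\de F^{J}$ because $p$ is affine on $[x_{1},x_{2}]$ and $\mathbb{E}_{F^{J}}[\theta\mid x_{1}\le\theta\le x_{2}]=\mu$, and $p$ is globally convex precisely when $x_{1}\le\mu\le x_{2}$, which sandwiches the chord's slope $(\mu+x_{1})/2$ between the one-sided derivatives of $v$ at $x_{1}$ and $x_{2}$. The only remaining requirement is $p=v$ at the atom $\mu$ --- equivalently, that the chord of $v$ between $x_{1}$ and $x_{2}$ pass through the kink $(\mu,v(\mu))$ --- and a short computation collapses this to exactly $x_{2}-x_{1}=2\widehat q$; granting it, $p\ge v$ on all of $[x_{1},x_{2}]$ is automatic, since on $[x_{1},\mu]$ and on $[\mu,x_{2}]$ the function $v$ is convex and $p$ is one of its secants. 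Together the two equations form a $2\times 2$ system for $(x_{1},x_{2})$; it has a unique solution because $x_{1}\mapsto\mathbb{E}_{F^{J}}[\theta\mid x_{1}\le\theta\le x_{1}+2\widehat q]$ is continuous and strictly increasing (sliding the fixed-width window to the right both discards lower values and admits higher ones), so it meets the target $\mu$ at most once, and it does meet it on the admissible range since $[\mu,\mu+2\widehat q]$ has conditional mean above $\mu$ while $[\mu-2\widehat q,\mu]$ has conditional mean below it. Uniqueness of $\widehat F^{\ast}$ itself is then immediate: $v$ is strictly convex off $\mu$ and $p>v$ strictly on $(x_{1},x_{2})\setminus\{\mu\}$, so any maximizer is supported on $[0,x_{1}]\cup\{\mu\}\cup[x_{2},1]$ and, being a mean-preserving contraction of $F^{J}$, coincides with $\widehat F^{\ast}$.

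I expect the crux to be the construction above --- in particular, recognizing that the affine ``bridge'' on the pooling region must be simultaneously a chord of $v$ between the two thresholds \emph{and} pass through the kink $(\mu,v(\mu))$; this double requirement is what produces the non-obvious identity $x_{2}-x_{1}=2\widehat q$ and explains why it appears in the statement. A secondary issue is the boundary cases: if $\widehat q$ is large enough that the $2\times 2$ system would return $x_{1}<0$ or $x_{2}>1$, the pooling interval abuts an endpoint of $[0,1]$ (in the extreme it is all of $[0,1]$ and the platform discloses nothing), and these are dispatched by the same price-function argument with the relevant endpoint constraint binding.
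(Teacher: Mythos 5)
Your proposal is correct and follows essentially the same route as the paper: both arguments are the Dworczak--Martini price-function duality, with the candidate $\widehat F^{\ast}$ pooling an interval around the concave kink of the objective at $\mu$ and the affine piece of the dual certificate being the chord of the objective through that kink, which is exactly what yields $x_{2}-x_{1}=2\widehat q$ (the paper obtains the same two equations by parametrizing the affine piece by its slope $s$ and taking a first-order condition, which is equivalent to your mean-preservation condition). Your explicit verification of the complementary-slackness chain, the monotone-sliding-window argument for uniqueness of $(x_{1},x_{2})$, and the treatment of boundary cases are all sound and in fact supply details the paper leaves implicit.
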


Thus, the platform matches consumers and sellers efficiently but does not
enable efficient trade for all values.\ In particular, values closest to the
mean of the marginal distribution $\mu $ all receive the efficient quality 
\emph{for the average value. }The pooling region allows the seller to
optimally trade off higher on-platform profit with lower off platform rents.
Figure \ref{fig_uninf_sol}\ illustrates the solution for the case of $%
\lambda =3/8$, with $F\left( \theta _{j}\right) =\theta _{j}$ and $J=2.$

\begin{figure}[htbp]\centering%
\includegraphics[width=.52\textwidth]{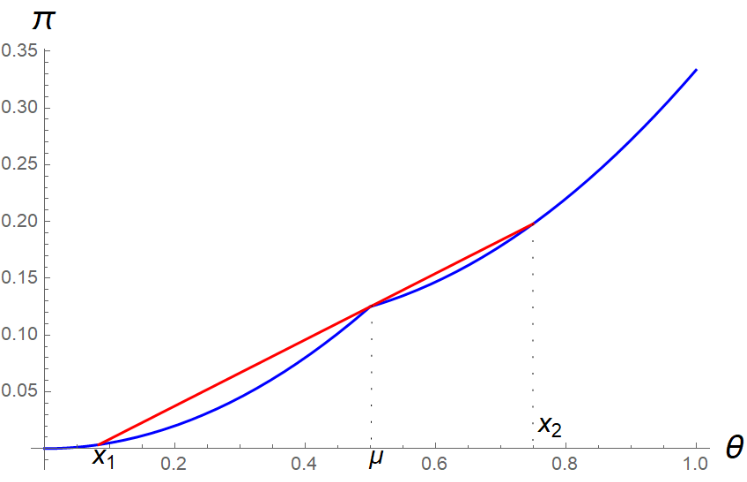}
\caption{Optimal Separating and Pooling Intervals}%
\label{fig_uninf_sol}%
\end{figure}%

Having characterized the optimal information design for any choice of
quality off the platform, we can compute the optimal $\widehat{q}^{\ast }$
from the sponsored seller's profit function.%
\begin{align*}
\Pi (\widehat{q})=& \lambda \int_{0}^{x_{1}(\widehat{q})}(\theta _{j}^{2}/2)%
\text{d}F^{J}(\theta _{j})+\left( \mu ^{2}/2\right) \left( F^{J}(x_{2}(%
\widehat{q}))-F^{J}(x_{1}(\widehat{q}))\right) \\
+& \lambda \int_{x_{2}(\widehat{q})}^{1}\left( \theta _{j}^{2}/2-\widehat{q}%
(\theta _{j}-\mu )\right) \text{d}F^{J}(\theta _{j})+\left( 1-\lambda
\right) \left( \mu \widehat{q}-\widehat{q}^{2}/2\right) .
\end{align*}

It is then immediate to show that as $\lambda \rightarrow 1$ (as in
Proposition \ref{prop_fullinfo}), the optimal $\widehat{q}^{\ast
}\rightarrow 0$ and $x_{1},x_{2}\rightarrow \mu $ so that the platform
reveals the consumer's value with probability one. In some special cases,
the solution is in closed form and yields the conclusion of Proposition \ref%
{prop_fullinfo} even if $\lambda $ is bounded away from $1$.

\paragraph{Discussion}

When $\lambda \in \left( 0,1\right) $ and the distribution $G$ of expected
values $m_{j}$ is not degenerate, the problem becomes significantly more
complex. If the platform does not know the consumer's expected value, then
it faces a persuasion problem where the receiver's private value is
correlated with the state, unlike in \cite{komy17}. A potentially fruitful
approach to this problem could be to focus attention to the case of \textit{%
public persuasion}, i.e., to signal structures that do not condition on the
consumer's expected value $m$.

\section{Platform Size and Competition\label{cs_lambdasec}}

Having examined the informational sources of the platform's bargaining
power, we now return to our baseline setting of Section \ref{sec:search} to
study the role of the size of the platform and the competition among the
sellers. We first investigate how the market share of the platform $\lambda $
affects the welfare and distribution of the social surplus. We then analyze
how an increase in competition in terms of the number of competing sellers
affects the welfare outcomes on and off the platform.

\paragraph{Platform Size}

The opportunity cost of serving consumers off the platform increases as the
platform becomes (exogenously) larger. Intuitively, the information rents of
the off-platform consumers must also be paid to a mass $\lambda $ of
on-platform consumers. This should lead to further distortions in the
off-platform quality levels. We formalize this intuition in Proposition \ref%
{compstat}.

\begin{proposition}[Platform Size]
\label{compstat}\strut \vspace*{-0.1in}

\begin{enumerate}
\item The equilibrium quality levels $\widehat{q}_{j}^{\ast }(\theta _{j})$
are decreasing in $\lambda $ for all $\theta _{j}<\theta _{H}$, and the
information rents $\widehat{U}_{j}^{\ast }(\theta _{j})$ are decreasing in $%
\lambda $ for all $\theta _{j}$.

\item For every $\theta _{j}<\theta _{H}$, there exists $\bar{\lambda}<1$
such that $\widehat{q}_{j}^{\ast }(\theta _{j})=0$ for all $\lambda \geq 
\bar{\lambda}.$
\end{enumerate}
\end{proposition}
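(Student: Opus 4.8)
The plan is to read both parts directly off the closed form \eqref{q0ss}--\eqref{U0s} of Proposition~\ref{menu}, with $\lambda$ the only moving parameter. Write the unconstrained (pre-max) off-platform schedule as
\begin{equation*}
\psi(\theta_j;\lambda)\;=\;\theta_j-\frac{1-G^{J}(\theta_j)}{JG^{J-1}(\theta_j)g(\theta_j)}-\frac{\lambda}{1-\lambda}\cdot\frac{1-F^{J}(\theta_j)}{JG^{J-1}(\theta_j)g(\theta_j)},
\end{equation*}
so that $\widehat q_j^\ast(\theta_j)=\max\{0,\psi(\theta_j;\lambda)\}$. First I would note that for every $\theta_j<\theta_H$ the first two terms do not involve $\lambda$, the coefficient $c(\theta_j):=(1-F^{J}(\theta_j))/(JG^{J-1}(\theta_j)g(\theta_j))$ is strictly positive and finite (using $F^{J}(\theta_j)<1$ together with the maintained regularity assumption $g>0$ on $[\theta_L,\theta_H]$), and $\lambda/(1-\lambda)$ is strictly increasing on $[0,1)$ with derivative $(1-\lambda)^{-2}$. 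Hence $\partial\psi/\partial\lambda=-c(\theta_j)(1-\lambda)^{-2}<0$, so $\psi(\theta_j;\cdot)$ is strictly decreasing in $\lambda$. Since the pointwise maximum of a strictly decreasing function with the constant $0$ is weakly decreasing---strictly so wherever it is positive---$\widehat q_j^\ast(\theta_j)$ is decreasing in $\lambda$ for all $\theta_j<\theta_H$. The boundary case is consistent with the statement: $F^{J}(\theta_H)=1$ kills the last term, so $\widehat q_j^\ast(\theta_H)=\theta_H$ regardless of $\lambda$.

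For the information rents I would substitute into \eqref{U0s}, $\widehat U_j^\ast(\theta_j)=\int_{\theta_L}^{\theta_j}\widehat q_j^\ast(x)\,\de x$. The integrand is weakly decreasing in $\lambda$ at every $x<\theta_H$, and the single point $x=\theta_H$ is null, so $\widehat U_j^\ast(\theta_j)$ is decreasing in $\lambda$ for every $\theta_j$---strictly so whenever $\widehat q_j^\ast$ is positive on a positive-measure subset of $[\theta_L,\theta_j]$, and identically zero otherwise. This gives part~1.

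For part~2, fix $\theta_j<\theta_H$. If the \cite{muro78} benchmark quality $A(\theta_j):=\theta_j-(1-G^{J}(\theta_j))/(JG^{J-1}(\theta_j)g(\theta_j))$ is already non-positive, then $\widehat q_j^\ast(\theta_j)=0$ for all $\lambda$ and $\bar\lambda=0$ suffices. Otherwise $A(\theta_j)>0$ is a fixed constant while $\tfrac{\lambda}{1-\lambda}c(\theta_j)\to+\infty$ as $\lambda\uparrow1$, so $\psi(\theta_j;\lambda)\to-\infty$. Solving $\psi(\theta_j;\bar\lambda)=0$ yields the explicit cutoff
\begin{equation*}
\bar\lambda(\theta_j)=\frac{A(\theta_j)}{A(\theta_j)+c(\theta_j)}=\frac{\theta_j JG^{J-1}(\theta_j)g(\theta_j)-\bigl(1-G^{J}(\theta_j)\bigr)}{\theta_j JG^{J-1}(\theta_j)g(\theta_j)-\bigl(1-G^{J}(\theta_j)\bigr)+\bigl(1-F^{J}(\theta_j)\bigr)}\in(0,1),
\end{equation*}
and by the strict monotonicity of part~1, $\psi(\theta_j;\lambda)\le0$---equivalently $\widehat q_j^\ast(\theta_j)=0$---for all $\lambda\ge\bar\lambda(\theta_j)$.

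I do not anticipate a genuine obstacle here; the whole argument is monotone-comparative-statics bookkeeping on an explicit formula. The only point deserving care is regularity: if $F$ and $G$ are not regular, then by the footnote to Proposition~\ref{menu} the equilibrium quality is the ironed version of $\max\{0,\psi(\cdot;\lambda)\}$, and since ironing is monotone with respect to the pointwise order on the underlying schedule, both the comparative statics in $\lambda$ and the existence of the cutoff $\bar\lambda(\theta_j)$ carry over unchanged. I would also flag the standing assumption $g>0$ on $[\theta_L,\theta_H]$ at the outset, since it is exactly what keeps the coefficients finite and the divergence as $\lambda\uparrow1$ genuine.
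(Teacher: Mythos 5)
Your proof is correct and follows essentially the same route as the paper's: differentiate the closed-form schedule \eqref{q0s} (equivalently the decomposition \eqref{q0ss}) in $\lambda$, observe that only the showrooming term $\tfrac{\lambda}{1-\lambda}\,\tfrac{1-F^{J}(\theta_j)}{JG^{J-1}(\theta_j)g(\theta_j)}$ moves and is strictly increasing in $\lambda$, and pass the monotonicity through the $\max\{0,\cdot\}$ and the integral \eqref{U0s} defining the rents. You are in fact more explicit than the paper on part~2 (the closed-form cutoff $\bar\lambda(\theta_j)=A/(A+c)$ and the divergence as $\lambda\uparrow 1$), but the underlying argument is the same.
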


In Figure \ref{lcs}, we illustrate how the off-platform quality provision
changes as the market share $\lambda $ of the platform increases.

\begin{figure}[htbp]\centering%
\includegraphics[width=.90\textwidth]{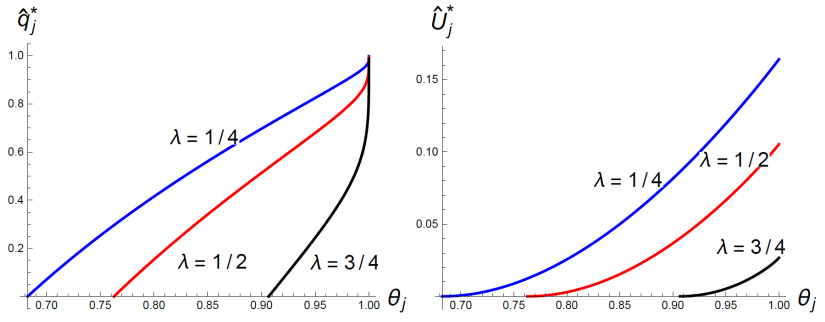}
\caption{Off-Platform Menus, $J=3,G(m_j)=m_j,F(\theta_j)=\textrm{Beta}(\theta_j,1/4,1/4)$.}%
\label{lcs}%
\end{figure}%

The on-platform allocation remains unchanged and is given by the socially
efficient quality provision. However, as the platform grows larger, each
seller attempts to minimize the information rents on the platform and in
turn renders the menu off the platform less attractive. Thus, for every
value $\theta _{j}$, the equilibrium quality-match off the platform $%
\widehat{q}_{j}^{\ast }(\theta _{j})$\ decreases, the price per unit of
quality increases, and the consumer surplus $\widehat{U}_{j}^{\ast }$\ off
the platform decreases as the size of the platform increases.\footnote{%
Recent work by \cite{vale22} considers many-to-many matching with congestion
effects on each side in the \cite{gopa16} framework. In our model,
congestion effects arise endogenously on the consumers' side, because the
more consumers visit the platform the fewer options are available
off-platform, which drives information rents down.}

\paragraph{Number of Sellers}

As the number of sellers increases, a larger number $J$\ of draws for each
value $\theta _{j}$ improves the distributions $F^{J}$ and $G^{J}$ in the
likelihood-ratio order. This leads to lower information rents. In the limit,
a seller will know that every consumer who shops on their site, or receives
their ads, has a value near $\theta _{H}$ with probability close to $1$, and
therefore information rents vanish. This result is a direct implication of
the \cite{diamond71} model adapted to our setting. We illustrate this result
for the benchmark case of an off-platform market only (i.e., $\lambda =0$)
in Figure \ref{rcs}.

\begin{figure}[htbp]\centering%
\includegraphics[width=.90\textwidth]{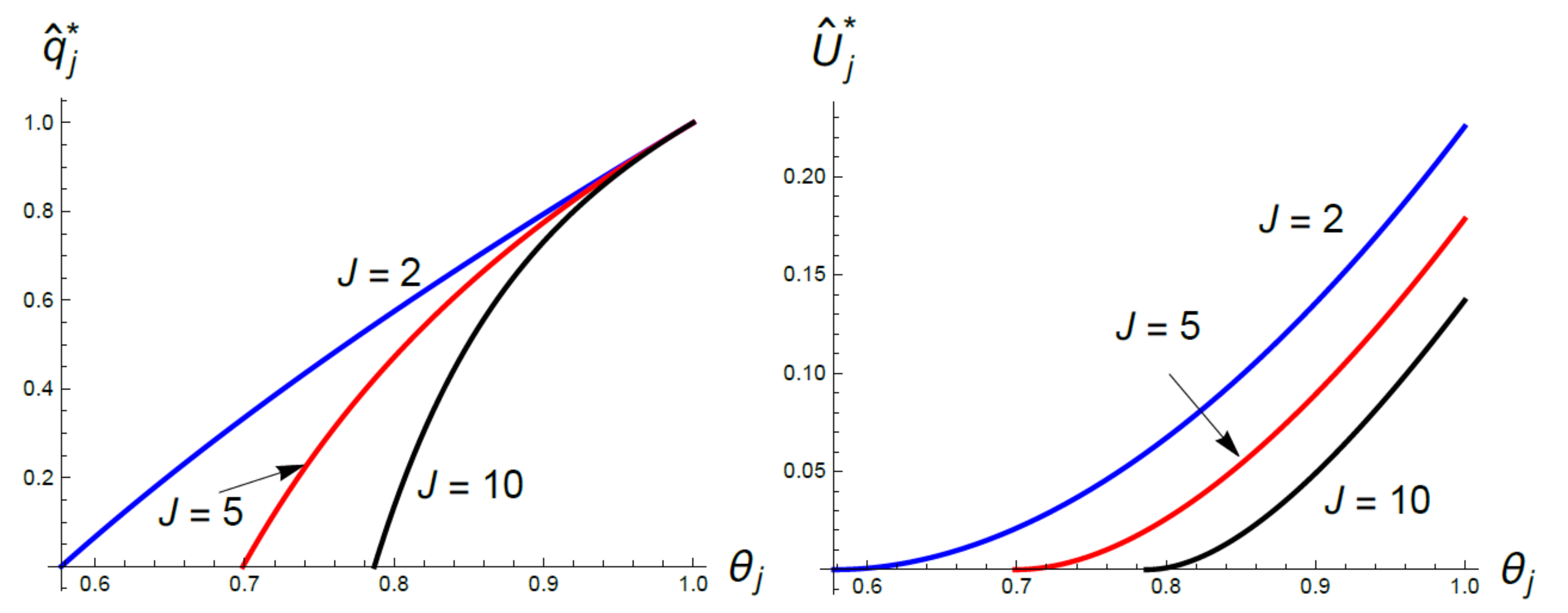}
\caption{Off-Platform Menus, $\lambda=0,G(m_j)=m_j,F(\theta_j)=\textrm{Beta}(\theta_j,1/4,1/4)$.}%
\label{rcs}%
\end{figure}%

Relative to the \cite{diamond71} model, quality distortions decrease faster
in our setting for lower values and slower for higher values. This effect is
due to the interaction of showrooming and the different distributions of
values. In particular, one can show that the additional distortion term in
the equilibrium quality (\ref{q0ss}), i.e.,%
\begin{equation*}
\frac{\lambda }{1-\lambda }\frac{1-F^{J}(\theta _{j})}{JG^{J-1}(\theta
_{j})g(\theta _{j})}
\end{equation*}%
is decreasing in $J$ when $\theta _{j}$ is close to $\theta _{H}$. Thus, for
a small number of sellers, high values of $\theta _{j}$\ receive a higher
quality as $J$ increases while low values of $\theta _{j}$\ receive a lower
quality.

As Figure \ref{jcs} illustrates, this effect may not be sufficient to
generate a larger rent for any value. Furthermore, Proposition \ref{nJ}
shows that as $J$ grows large, every value's quality allocation eventually
decreases in the number of sellers.

\begin{figure}[htbp]\centering%
\includegraphics[width=.90\textwidth]{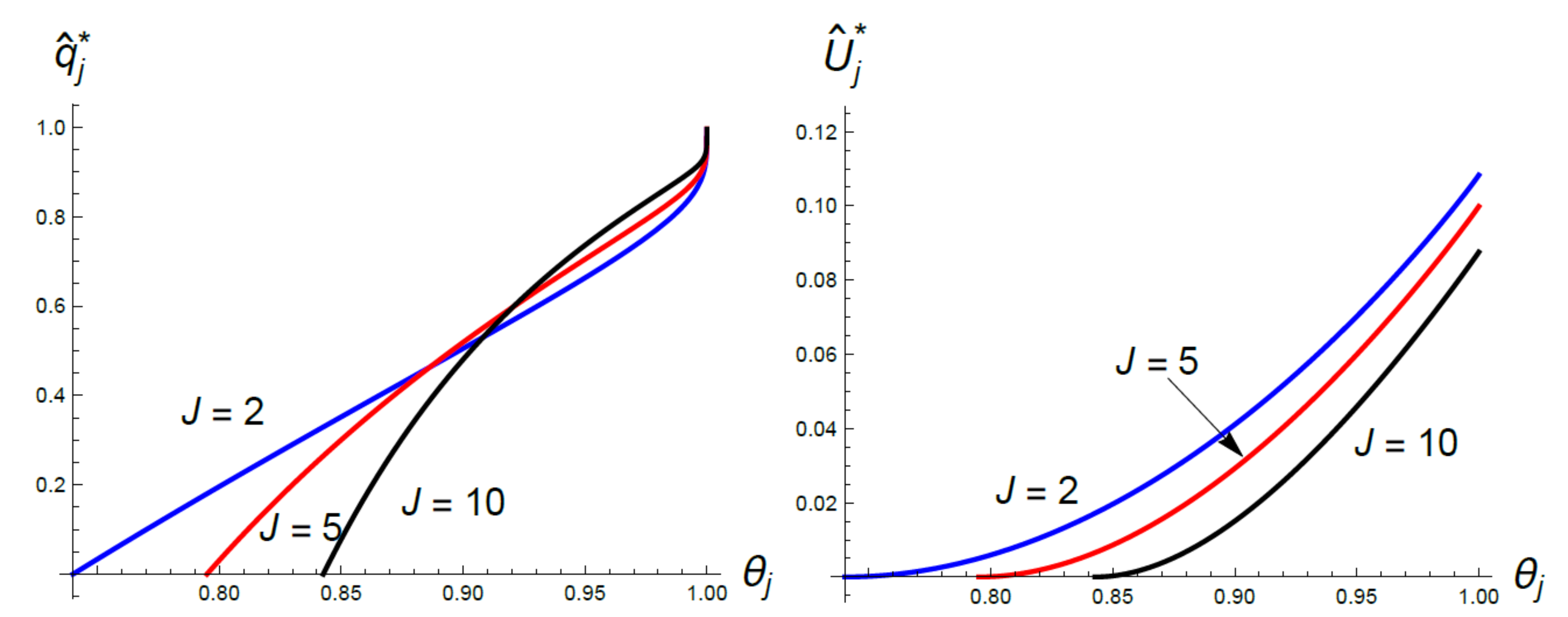}
\caption{Off-Platform Menus, $\lambda=1/2,G(m_j)=m_j,F(\theta_j)=\textrm{Beta}(\theta_,1/4,1/4)$.}%
\label{jcs}%
\end{figure}%

\begin{proposition}[Number of Sellers]
\label{nJ}$\strut $\vspace*{-0.1in}

\begin{enumerate}
\item For every $\theta _{j}<\theta _{H}$, the equilibrium quality $\widehat{%
q}_{j}^{\ast }(\theta _{j})$ and information rent $\widehat{U}_{j}^{\ast
}(\theta _{j})$ are decreasing in $J$ if $J$ is large enough.

\item For every $\theta _{j}<\theta _{H}$, there exists $\widehat{J}$ such
that $\widehat{q}_{j}^{\ast }(\theta _{j})=0$ for all $J\geq \widehat{J}$.
\end{enumerate}
\end{proposition}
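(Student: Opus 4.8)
The plan is to work directly with the closed-form equilibrium quality schedule of Proposition~\ref{menu} — in the rewritten form \eqref{q0ss} — together with the rent formula \eqref{U0s}, and to establish part~2 first and then part~1.

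\emph{Part 2.} Fix $\ta_j<\ta_H$. Since $F$ and $G$ are supported on $[\ta_L,\ta_H]$, we have $F(\ta_j),G(\ta_j)<1$, so $F^J(\ta_j)\to 0$ and $G^J(\ta_j)\to 0$, and hence the numerator $1-\la F^J(\ta_j)-(1-\la)G^J(\ta_j)$ in \eqref{q0s} tends to $1$. A ratio test gives $\tfrac{(J+1)G^{J}(\ta_j)}{JG^{J-1}(\ta_j)}\to G(\ta_j)<1$, so $JG^{J-1}(\ta_j)\to 0$ and the denominator $(1-\la)JG^{J-1}(\ta_j)g(\ta_j)$ vanishes. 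Thus the subtracted fraction diverges to $+\infty$, the bracket in \eqref{q0s} eventually turns negative, and $\widehat q_j^{\ast}(\ta_j)=0$ for all $J\ge\widehat J(\ta_j)$. For the information rent, I would use that (under the regularity maintained for Proposition~\ref{menu}, or after ironing) $\widehat q_j^{\ast}$ is nondecreasing on $[\ta_L,\ta_H]$, so by \eqref{U0s} the function $\widehat U_j^{\ast}$ is increasing and convex with $\widehat U_j^{\ast}(\ta_L)=0$ and, once $J\ge\widehat J(\ta_j)$, slope $\widehat U_j^{\ast\prime}(\ta_j)=\widehat q_j^{\ast}(\ta_j)=0$; an increasing convex function that vanishes at $\ta_L$ and has zero slope at $\ta_j$ is identically zero on $[\ta_L,\ta_j]$, which gives the claim.

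\emph{Part 1.} Write the unconstrained schedule from \eqref{q0ss} as $\widetilde q_J(x)=x-A_J(x)-B_J(x)$, where
\begin{equation*}
A_J(x)=\frac{1-G^J(x)}{JG^{J-1}(x)g(x)},\qquad
B_J(x)=\frac{\la}{1-\la}\,\frac{1-F^J(x)}{JG^{J-1}(x)g(x)}.
\end{equation*}
It suffices to show that $A_J(x)+B_J(x)$ is increasing in $J$ for all large $J$, uniformly over $x\le\ta_j$: then $\widehat q_j^{\ast}(x)=\max\{0,\widetilde q_J(x)\}$ is decreasing in $J$, and so is $\widehat U_j^{\ast}(\ta_j)=\int_{\ta_L}^{\ta_j}\widehat q_j^{\ast}(x)\,\de x$. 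With $r(x):=1/G(x)>1$ one has $A_J(x)=\tfrac{r(x)^{J-1}-G(x)}{Jg(x)}$, and a short algebraic reduction yields $\tfrac{A_{J+1}(x)}{A_J(x)}=\tfrac{J}{J+1}\cdot\tfrac{r(x)^{J+1}-1}{r(x)^{J}-1}$. Since $r\mapsto\tfrac{r^{J+1}-1}{r^{J}-1}$ is increasing on $(1,\infty)$ (its derivative has the sign of $r^{J+1}-(J+1)r+J\ge 0$) and tends to $r$, this ratio exceeds $1$ for all large $J$, uniformly over $x\le\ta_j$ since $r(x)\ge r(\ta_j)>1$ there. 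An analogous computation, using $\tfrac{1-F^{J+1}(x)}{1-F^J(x)}>1$, shows $B_{J+1}(x)>B_J(x)$ for large $J$, again uniformly over $x\le\ta_j$. Summing yields the monotonicity (and incidentally re-establishes the divergence used in part~2).

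\emph{Main obstacle.} The delicate point is precisely the uniformity over $x\le\ta_j$ in part~1: both $r(x)=1/G(x)$ and the density $g(x)$ may blow up as $x\downarrow\ta_L$, so one cannot crudely bound $A_J(x)$ below by $A_J(\ta_j)$. The resolution is structural — the density factors cancel in the ratios $A_{J+1}/A_J$ and $B_{J+1}/B_J$, and a larger $r(x)$ only \emph{accelerates} the growth of $A_J(x)$ in $J$, so the threshold on $J$ is governed by the smallest value $r(\ta_j)$. If $F$ or $G$ violate the regularity needed for \eqref{q0s}, the analogous conclusions hold for the ironed schedule, which inherits the monotonicity properties used above and still vanishes on a neighborhood of $\ta_L$ that exhausts $[\ta_L,\ta_j]$ as $J\to\infty$.
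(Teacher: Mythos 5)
Your proposal is correct and follows essentially the same route as the paper's proof: both work directly from the closed-form schedule (\ref{q0s})/(\ref{q0ss}) and show that the subtracted distortion term grows without bound in $J$ (the paper by differentiating in $J$, you by the discrete ratios $A_{J+1}/A_{J}$ and $B_{J+1}/B_{J}$), with the rent claim following by integrating the quality schedule. Your explicit handling of the uniformity over $x\leq \theta _{j}$ --- needed so that a single threshold on $J$ works inside the integral defining $\widehat{U}_{j}^{\ast }(\theta _{j})$ --- makes rigorous a step the paper's one-line argument leaves implicit.
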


We can then examine the impact of the size of the platform $\lambda $ and of
the number of sellers $J$ on all parties' surplus levels. An immediate
consequence of Propositions \ref{compstat} and \ref{nJ} is that expected
consumer surplus on-platform and off-platform is always decreasing in $%
\lambda $, and is eventually decreasing in $J$ too. At the same time, the
platform's revenue is increasing in both $\lambda $ and $J.$ Furthermore, as 
$J$ grows without bound, the platform captures the entire (first best)\
social surplus it creates. Intuitively, the consumers have no information
rents (as the highest value component is converging in probability to 1),
and therefore sellers need not distort the off-platform menus when they
participate in the platform's mechanism. Figure \ref{fig_alls} illustrates
the platform revenue, consumer surplus, seller surplus (i.e., the outside
option $\overline{\Pi }_{j}$), and the profit generated on the platform for
various numbers of sellers.

\begin{figure}[htbp]\centering%
\includegraphics[width=.6\textwidth]{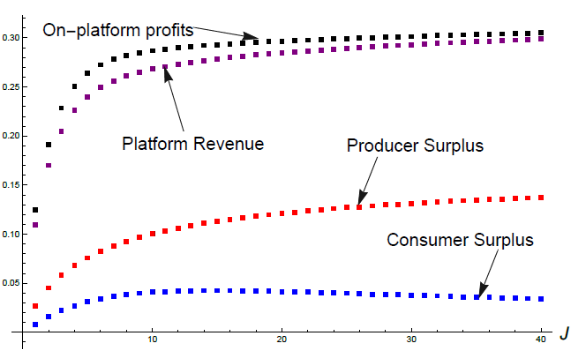}
\caption{Surplus Levels, $\lambda=2/3,G(m_j)=m_j,F(\theta_j)=\textrm{Beta}(\theta_j,1/3,1/3)$.}%
\label{fig_alls}%
\end{figure}%

\section{Conclusion}

We have developed a model that considers competition in the digital economy,
taking into account heterogeneous consumer preferences and products. A
digital platform serves as an intermediary between buyers and sellers and
utilizes its superior information to form matches between them, generating
revenue through digital advertising campaigns. The platform monetizes its
advantage by presenting consumers with their preferred products and earns
revenue by selling access to their attention. However, the ability for
sellers to showcase their products off the platform limits their ability to
price discriminate on the platform and their willingness to pay for
advertising, leading to higher prices on both sales channels as the
platform's user base grows.

Our model is simplified, but it can be extended to consider differentiated
products with varying on- and off-platform presences. This may introduce
distortions in the managed-campaign allocation of advertising space, as
smaller sellers exploit higher margins and consumer search costs.\footnote{%
The recent evidence in \cite{muaa22} is consistent with a mechanism like the
one we outlined.} Our model also assumes \textquotedblleft perfect
steering,\textquotedblright\ but it can be expanded to incorporate multiple
related advertisements to each consumer. Overall, our paper highlights the
role of data in shaping competition and allocating surplus in the digital
economy.

\newpage

\section*{Appendix}

This Appendix contains the proofs of all our results.\bigskip

\begin{proof}[Proof of Proposition \protect\ref{prop_single}]
To derive the low type's optimal quality, we substitute the expression for
the binding incentive compatibility constraint for the high type (\ref{ICH})
into both the on-platform and off-platform profit in objective (\ref{ss_prob}%
). Differentiating with respect to $\widehat{q}\left( \theta _{L}\right) $
yields the result in (\ref{qlow}).\bigskip
\end{proof}

\begin{proof}[Proof of Proposition \protect\ref{prop_consider}]
In any symmetric equilibrium, each consumer $\theta $ with expected value $m$
learns their true value $\theta _{j^{\ast }}$ for the advertised seller $%
j^{\ast }$ and believes that $j^{\ast }=\arg \max_{j}\theta _{j}$ with
probability one. Moreover, each consumer expects identical menus to be
posted off-platform and knows that the rent function $\widehat{U}_{j}(\theta
_{j})$ is strictly increasing in $\theta _{j}$ for all $j$. Therefore, the
consumer searches for the advertised seller's off-platform prices. She does
not search any further and does not learn any other seller's prices. Because
the menu off the platform is incentive compatible, it is sufficient for
consumer $\theta $ to compare the two items $q_{j^{\ast }}(\theta _{j^{\ast
}})$ and $\widehat{q}_{j^{\ast }}(\theta _{j^{\ast }})$.

This holds both on and off the equilibrium path. Indeed, suppose a seller $%
\hat{\jmath}$ deviates and does not participate in the platform's mechanism.
In this case, all consumer with $\hat{\jmath}=\arg \max_{j}\theta _{j}$ are
shown an advertisement by a different seller. These consumers are unable to
detect seller $\hat{\jmath}$'s deviation,.and hence search for the sponsored
seller's menu only.\bigskip
\end{proof}

\begin{proof}[Proof of Proposition \protect\ref{menu}]
Seller $j$'s gross profit (i.e., after paying the platform's required
advertising budget) can be written as%
\begin{align}
\max_{\widehat{q},\widehat{U}}\,& (1-\lambda )\int_{\theta _{L}}^{\theta
_{H}}[\theta _{j}\widehat{q}(\theta _{j})-\widehat{q}(\theta _{j})^{2}/2-%
\widehat{U}(\theta _{j})]G^{J-1}(\theta _{j})\text{d}G(\theta _{j})
\label{eq_prob} \\
& +\lambda \int_{\theta _{L}}^{\theta _{H}}[\theta _{j}^{2}/2-\widehat{U}%
(\theta _{j})]F^{J-1}(\theta )\text{d}F(\theta ),  \notag \\
\text{s.t.}& \text{ }\widehat{U}^{\prime }(\theta _{j})=\widehat{q}(\theta
_{j}),\text{ and }\widehat{U}(\theta _{j})\geq 0\text{ for all }\theta
_{j}\in \left[ \theta _{L},\theta _{H}\right] \text{.}  \notag
\end{align}%
The necessary pointwise conditions for $\widehat{q}$ and $\widehat{U}$ can
be obtained from the control problem with the associated Hamiltonian with
costate variable $\hat{\gamma}(\theta _{j})$:%
\begin{eqnarray*}
H(\theta _{j},\widehat{q},\widehat{U},\hat{\gamma}) &=&(1-\lambda )\left(
\theta _{j}\widehat{q}(\theta _{j})-\widehat{q}(\theta _{j})^{2}/2-\widehat{U%
}(\theta _{j})\right) G^{J-1}(\theta _{j})g(\theta _{j}) \\
&&+\lambda \left( \theta _{j}^{2}/2-\widehat{U}(\theta _{j})\right)
F^{J-1}(\theta _{j})f(\theta _{j})+\hat{\gamma}(\theta _{j})\widehat{q}%
(\theta _{j}).
\end{eqnarray*}%
At a symmetric equilibrium, the optimality conditions are given by 
\begin{eqnarray}
\left( 1-\lambda \right) \left( \theta _{j}-\widehat{q}(\theta _{j})\right)
G^{J-1}(\theta _{j})g(\theta _{j})+\hat{\gamma}(\theta _{j}) &=&0,
\label{eq:qbase} \\
-\left( 1-\lambda \right) G^{J-1}(\theta _{j})g(\theta _{j})-\lambda
F^{J-1}(\theta _{j})f(\theta _{j})+\hat{\gamma}^{\prime }(\theta _{j}) &=&0,
\notag \\
\hat{\gamma}\left( 1\right) &=&0.  \notag
\end{eqnarray}%
Integrating, we obtain%
\begin{equation}
\hat{\gamma}(\theta _{j})=\frac{1}{J}\left( \left( 1-\lambda \right)
G^{J}(\theta _{j})+\lambda F^{J}(\theta _{j})-1\right) .  \label{gammasp}
\end{equation}%
Therefore, the equilibrium quality level is given by%
\begin{equation}
\widehat{q}_{j}^{\ast }(\theta _{j})=\theta _{j}-\frac{1-\left( 1-\lambda
\right) G^{J}(\theta _{j})-\lambda F^{J}(\theta _{j})}{\left( 1-\lambda
\right) JG^{J-1}(\theta _{j})g(\theta _{j})}  \label{q0sproof}
\end{equation}%
if the right-hand side is nonnegative, and nil otherwise, as in (\ref{q0s}).
This ends the proof.\bigskip
\end{proof}

\begin{proof}[Proof of Proposition \protect\ref{optimalmech}]
We argue the optimality of the managed-campaign mechanism in two steps.
First, by Proposition \ref{prop_consider}, any seller $j$ that does not
participate in the mechanism does not make any sales to the consumers on the
platform. This is because every consumer will see an ad by a different
seller $k\neq j$ and will only consider on- and off-platform offers by
seller $k$. Therefore, every seller's outside option consists of offering
the optimal \cite{muro78} menu to their off-platform consumers. This yields
the profit level $\overline{\Pi }_{j}$ in (\ref{eq_ou}), which is a fixed
outside option independent of the menus posted by the other sellers.

Second, consider the coalition of all sellers and the platform. The
coalition's profits are maximized by matching each on-platform consumer $%
\theta $ to seller $j^{\ast }=\arg \max_{j}\theta _{j}$, by matching each
off-platform consumer $m$ to $\hat{\jmath}=\arg \max_{j}m_{j}$, and by
maximizing the seller's profit with respect to the on-platform offers $%
\left( q,U\right) $ and the off-platform menus $(\widehat{q},\widehat{U})$.

The solution to the coalition profit-maximization problem coincides with the
equilibrium outcome of the managed campaign mechanism. In this mechanism,
each seller maximizes profit by offering the socially efficient quality $%
q_{j}\left( \theta \right) =\theta _{j}$ to each on-platform consumer $%
\theta .$ As a result, the platform assigns each seller to the consumers
that value their products the most. Thus, the sellers-platform coalition's
profits are maximized by the equilibrium menus that solve problem (\ref%
{eq_prob}). Because the platform extracts the entire seller surplus in
excess of the fixed outside option (\ref{eq_ou}), no mechanism generates
greater revenue for the platform.\bigskip
\end{proof}

\newpage

\begin{proof}[Proof of Proposition\ \protect\ref{prop_knowntype}]
Suppose on-platform consumers know their value $\theta $. If seller $j$
participates in the mechanism but offers an out-of-equilibrium menu off the
platform, only consumers who search for seller $j$ in equilibrium observe
this deviation. Therefore, every seller that participates in the mechanism
can do no better than to advertise the efficient quality levels and post the
off-platform menus that solve (\ref{eq_prob}). However, if seller $j$ does
not participate, it can match the competitors' information rents $\widehat{U}%
_{k\neq j}$ and attract all the consumers on the platform who value their
products the most. (Under the symmetric beliefs refinement, these consumers
search for seller $j$'s off-platform offer regardless of the ads shown to
them by the platform.) Furthermore, if a nonparticipating seller $j$
maximizes profit with respect to $(\widehat{q},\widehat{U})$ over the
combined off- and on-platform market segments, it solves the problem in (\ref%
{pihat}). This is a standard second-degree price discrimination problem
where consumer values are distributed according to $\lambda F^{J}+(1-\lambda
)G^{J}.$ The optimal quality provision in such a deviation is given by 
\begin{equation}
\widehat{q}(\theta _{j})=\max \bigg\{0,\theta _{j}-\frac{1-\left( 1-\lambda
\right) G^{J}(\theta _{j})-\lambda F^{J}(\theta _{j})}{\lambda
F^{J-1}(\theta _{j})f(\theta _{j})+\left( 1-\lambda \right) G^{J-1}(\theta
_{j})g(\theta _{j})}\bigg\}.  \label{q0mixture}
\end{equation}

Because the quality level $\widehat{q}$ in (\ref{q0mixture}) is pointwise
larger than $\widehat{q}_{j}^{\ast }$ in (\ref{q0sproof}), the resulting
information rent is correspondingly higher for each $\theta _{j}$. Thus, the
deviating seller's optimal choice of menu yields an outside option $\hat{\Pi}%
_{j}$ larger than $\overline{\Pi }_{j}$ in (\ref{eq_ou}). In addition,
because the on-path profit are unchanged relative to the case of
asymmetrically informed consumers, the advertising budget requested by the
platform must decrease.\bigskip
\end{proof}

\begin{proof}[Proof of Proposition \protect\ref{prop_organic}]
We characterize all symmetric equilibria with full participation of the
subgame following the platform's announcement of the required advertising
budget. We first rewrite problem (\ref{orgprofit}) as follows: 
\begin{align*}
\max_{\widehat{q},\widehat{U}}& \left( 1-\lambda \right) \int_{\theta
_{L}}^{\theta _{H}}[\theta _{j}\widehat{q}(\theta _{j})-\widehat{q}(\theta
_{j})^{2}/2-\widehat{U}(\theta _{j})]G^{J-1}(\theta _{j})g(\theta _{j})\text{%
d}\theta _{j} \\
& +\lambda \int_{\theta _{L}}^{\theta _{H}}[\theta _{j}\widehat{q}(\theta
_{j})-\widehat{q}(\theta _{j})^{2}/2-\widehat{U}(\theta _{j})]F^{J-1}(\theta
_{k}^{\ast }(\theta _{j}))f(\theta _{j})\text{d}\theta _{j} \\
& +\lambda \int_{\theta _{L}}^{\theta _{H}}(\theta _{j}^{2}/2-\theta _{j}%
\widehat{q}(\theta _{j})+\widehat{q}(\theta _{j})^{2}/2)\min
\{F^{J-1}(\theta _{k}^{\ast }(\theta _{j})),F^{J-1}(\theta _{j})\}f(\theta
_{j})\text{d}\theta _{j},
\end{align*}%
where, as in (\ref{thetaminus}), $\theta _{k}^{\ast }$ satisfies%
\begin{equation*}
\widehat{U}_{k}\left( \theta _{k}^{\ast }(\theta _{j})\right) =\widehat{U}%
_{j}\left( \theta _{j}\right) .
\end{equation*}%
The associated Hamiltonian can be written as%
\begin{eqnarray*}
H(\theta _{j},\widehat{q},\widehat{U},\hat{\gamma}) &=&\left( 1-\lambda
\right) \left( \theta _{j}\widehat{q}(\theta _{j})-\widehat{q}(\theta
_{j})^{2}-\widehat{U}(\theta _{j})\right) G^{J-1}(\theta _{j})g(\theta _{j})
\\
&&+\lambda \left( \theta _{j}\widehat{q}(\theta _{j})-\widehat{q}(\theta
_{j})^{2}/2-\widehat{U}(\theta _{j})\right) F^{J-1}(\theta _{k}^{\ast
}(\theta _{j}))f(\theta _{j}) \\
&&+\lambda (\theta _{j}^{2}/2-\theta _{j}\widehat{q}(\theta _{j})+\widehat{q}%
(\theta _{j})^{2}/2)\min \{F^{J-1}(\theta _{k}^{\ast }(\theta
_{j})),F^{J-1}(\theta _{j})\}f(\theta _{j}) \\
&&+\gamma (\theta _{j})\widehat{q}(\theta _{j}).
\end{eqnarray*}%
Totally differentiating (\ref{thetaminus}), we obtain 
\begin{equation*}
\frac{dF^{J-1}\left( \theta _{k}^{\ast }\left( \theta _{j}\right) \right) }{d%
\hat{U}_{j}\left( \theta _{j}\right) }=\frac{\left( J-1\right)
F^{J-2}(\theta _{j})f(\theta _{j})}{\widehat{q}(\theta _{j})}\geq 0.
\end{equation*}%
Because seller $j$'s market share is increasing in $\widehat{U}_{j}$ and $%
\theta _{j}^{2}/2\geq \theta _{j}\widehat{q}(\theta _{j})-\widehat{q}(\theta
_{j})^{2}/2$, the Hamiltonian $H$ has a downward kink in $\widehat{U}_{j}$
at $\theta _{k}^{\ast }=\theta _{j}.$ Therefore, every symmetric symmetric
equilibrium satisfies the following necessary conditions,%
\begin{eqnarray}
\left( 1-\lambda \right) \left( \theta _{j}-\widehat{q}(\theta _{j})\right)
G^{J-1}(\theta _{j})g(\theta _{j})+\gamma (\theta _{j})=0\text{,} &&
\label{eq_qorg} \\
-\lambda \frac{\left( J-1\right) F^{J-2}(\theta _{j})f^{2}(\theta _{j})}{%
\widehat{q}(\theta _{j})}\left( \alpha \theta _{j}^{2}/2+\left( 1-\alpha
\right) \left( \theta _{j}\widehat{q}(\theta _{j})-\widehat{q}(\theta
_{j})^{2}/2\right) -\widehat{U}(\theta _{j})\right) &&  \notag \\
+\left( 1-\lambda \right) G^{J-1}(\theta _{j})g(\theta _{j})+\lambda
F^{J-1}(\theta _{j})f(\theta _{j})=\gamma ^{\prime }(\theta _{j})\text{,} &&
\label{eq_costate}
\end{eqnarray}%
for some $\alpha \in \left[ 0,1\right] $.

Now recall the costate equation (\ref{eq:qbase}) in the baseline model.
Because the last two terms on the left-hand side of (\ref{eq_costate}) are
nonnegative, we have 
\begin{equation*}
\gamma ^{\prime }(\theta _{j})\leq \hat{\gamma}^{\prime }(\theta _{j})\text{
for all }\theta _{j}\text{.}
\end{equation*}%
Furthermore, the transversality conditions in the two problems require 
\begin{equation*}
\gamma \left( 1\right) =\hat{\gamma}\left( 1\right) =0.
\end{equation*}%
We can then conclude that%
\begin{equation*}
\hat{\gamma}(\theta _{j})\leq \gamma (\theta _{j})\text{ for all }\theta _{j}%
\text{.}
\end{equation*}%
Together with (\ref{eq:qbase}) and (\ref{eq_qorg}), this implies that the
quality and utility levels $\widehat{q}_{j}(\theta _{j})$ and $\widehat{U}%
_{j}(\theta _{j})$ are weakly higher for all $\theta _{j}$ \emph{with}
organic links than without.

We now show that the sellers' profits are lower and their outside options
are higher with organic links than without. In a symmetric equilibrium with
off-platform quality and rent functions $(\widehat{q},\widehat{U})$, the
seller's profits are given by%
\begin{eqnarray}
\Pi _{j}(\widehat{q},\widehat{U}) &=&\left( 1-\lambda \right) \int_{\theta
_{L}}^{\theta _{H}}\left[ \theta _{j}\widehat{q}(\theta _{j})-\widehat{q}%
(\theta _{j})^{2}/2-\widehat{U}(\theta _{j})\right] G^{J-1}(\theta _{j})%
\text{d}G(\theta _{j})  \label{pi_onpath} \\
&&+\lambda \int_{\theta _{L}}^{\theta _{H}}(\theta _{j}^{2}/2-\widehat{U}%
(\theta _{j}))F^{J-1}(\theta _{j})\text{d}F(\theta _{j}).  \notag
\end{eqnarray}%
The equilibrium menu in the baseline model $(\widehat{q}_{j}^{\ast },%
\widehat{U}_{j}^{\ast })$ maximizes (\ref{pi_onpath}), while the equilibrium
menu maximizes (\ref{orgprofit}) and hence achieves a weakly lower profit
level. Now consider the deviating seller's profit. For any choice of $(%
\widehat{q},\widehat{U})$ off path, the deviation profit are weakly larger
with organic links than without. Without organic links, the deviating seller
posts the \cite{muro78} menu and makes no sales on the platform. With
organic links and posting the same menu, the seller wins a fraction $%
F^{J-1}(\theta _{k}^{\ast }(\theta _{j}))\in \left[ 0,1\right] $ of
on-platform values $\theta _{j}$. Consequently, we have $\widetilde{\Pi }%
_{j}\geq \overline{\Pi }_{j}$, which implies \emph{a fortiori} that the
advertising budgets are lower.\bigskip
\end{proof}

\begin{proof}[Proof of Proposition \protect\ref{floc}]
We construct an equilibrium where each seller $j$ sets on- and off-platform
menus to maximize profit, given that seller $j$\ expects to face all
consumers that rank $j$ the highest. Therefore, consider the joint
optimization problem over menus $\left( q,U\right) $ and $(\widehat{q},%
\widehat{U})$ when facing distributions $F^{J}$ and $G^{J},$ respectively,
under the showrooming constraint. Seller $j$ solves:%
\begin{eqnarray}
&&\max_{q,\widehat{q},U,\widehat{U}}\left[ 
\begin{array}{c}
\lambda \int_{\theta _{L}}^{\theta _{H}}\left( \theta _{j}q(\theta
_{j})-q(\theta _{j})^{2}/2-U(\theta _{j})\right) F^{J-1}(\theta _{j})\text{d}%
F(\theta _{j}) \\ 
+\left( 1-\lambda \right) \int_{\theta _{L}}^{\theta _{H}}\left( \theta _{j}%
\widehat{q}_{j}(\theta _{j})-\widehat{q}(\theta _{j})^{2}/2-\widehat{U}%
(\theta _{j})\right) G^{J-1}(\theta _{j})\text{d}G(\theta _{j})%
\end{array}%
\right]  \label{prob_floc} \\
&\text{s.t.}&\text{~}U^{\prime }(\theta _{j})=q(\theta _{j})  \notag \\
&&\widehat{U}^{\prime }(\theta _{j})=\widehat{q}(\theta _{j})  \notag \\
&&U(\theta _{j})\geq \widehat{U}(\theta _{j})\geq 0.  \notag
\end{eqnarray}%
We now show that the solution to (\ref{prob_floc}) is given by%
\begin{equation}
q_{j}^{\ast }(\theta _{j})=\widehat{q}_{j}^{\ast }(\theta _{j})=\theta _{j}-%
\frac{1-\left( 1-\lambda \right) G^{J}(\theta _{j})-\lambda F^{J}(\theta
_{j})}{J\lambda F^{J-1}(\theta _{j})f(\theta _{j})+J\left( 1-\lambda \right)
G^{J-1}(\theta _{j})g(\theta _{j})}  \label{qbar}
\end{equation}%
if and only if the $F^{J}$ likelihood-ratio dominates $G^{J}$. To this end,
consider the necessary conditions for optimality. These conditions are
sufficient because the problem is linear in $q$, concave in $U$, and
additively separable in these two variables. In particular, the Hamiltonian
is given by%
\begin{eqnarray*}
H &=&\lambda \left( \theta _{j}q(\theta _{j})-q(\theta _{j})^{2}/2-U(\theta
_{j})\right) F^{J-1}(\theta _{j})f(\theta _{j}) \\
&&+\left( 1-\lambda \right) \left( \theta _{j}\widehat{q}(\theta _{j})-%
\widehat{q}(\theta _{j})^{2}/2-\widehat{U}(\theta _{j})\right)
G^{J-1}(\theta _{j})g(\theta _{j}) \\
&&+\gamma (\theta _{j})q(\theta _{j})+\hat{\gamma}(\theta _{j})\widehat{q}%
(\theta _{j})+\bar{\gamma}(U(\theta _{j})-\widehat{U}(\theta _{j})).
\end{eqnarray*}%
The pointwise necessary conditions for this problem are the following:%
\footnote{%
The last condition $\left( \bar{\gamma}\geq 0\right) $ is analogous the one
in \cite{jullien}, Theorem 2. There, the shadow cost of the value-dependent
participation constraint is a cumulative distribution function, i.e., it is
nondecreasing. The multiplier $\bar{\gamma}\left( \theta _{j}\right) $ in
our formulation can be interpreted as the corresponding density function.}%
\begin{eqnarray*}
\left( \theta _{j}-q(\theta _{j})\right) \lambda F^{J-1}(\theta
_{j})f(\theta _{j})+\gamma (\theta _{j}) &=&0 \\
\left( \theta _{j}-\widehat{q}(\theta _{j})\right) \left( 1-\lambda \right)
G^{J-1}(\theta _{j})g(\theta _{j})+\hat{\gamma}(\theta _{j}) &=&0 \\
-\lambda F^{J-1}(\theta _{j})f(\theta _{j})+\bar{\gamma}(\theta _{j})+\gamma
^{\prime }(\theta _{j}) &=&0 \\
-\left( 1-\lambda \right) G^{J-1}(\theta _{j})g(\theta _{j})-\bar{\gamma}%
(\theta _{j})+\hat{\gamma}^{\prime }(\theta _{j}) &=&0 \\
\bar{\gamma}(\theta _{j})\cdot (U(\theta _{j})-\widehat{U}(\theta _{j})) &=&0
\\
\bar{\gamma}(\theta _{j}) &\geq &0.
\end{eqnarray*}

If $q(\theta _{j})=\widehat{q}(\theta _{j})$ as in (\ref{qbar}), we obtain
the following expressions for the costate variables:%
\begin{eqnarray*}
\gamma (\theta _{j}) &=&-\frac{\lambda JF^{J-1}(\theta _{j})f(\theta
_{j})\cdot \left( 1-\left( 1-\lambda \right) G^{J}(\theta _{j})-\lambda
F^{J}(\theta _{j})\right) }{\left( 1-\lambda \right) JG^{J-1}(\theta
_{j})g(\theta _{j})+\lambda JF^{J}(\theta _{j})f(\theta _{j})} \\
\hat{\gamma}(\theta _{j}) &=&-\frac{\left( 1-\lambda \right) JG^{J-1}(\theta
_{j})g(\theta _{j})\cdot \left( 1-\left( 1-\lambda \right) G^{J}(\theta
_{j})-\lambda F^{J}\right) }{\left( 1-\lambda \right) JG^{J-1}(\theta
_{j})g(\theta _{j})+\lambda JF^{J}(\theta _{j})f(\theta _{j})}.
\end{eqnarray*}%
Differentiating both expressions with respect to $\theta _{j}$ and using the
necessary conditions above, we can solve for the multiplier on the
showrooming constraint $\bar{\gamma}.$ We obtain%
\begin{eqnarray*}
\bar{\gamma}(\theta _{j}) &=&\frac{J\lambda \left( 1-\lambda \right) \left(
1-\left( 1-\lambda \right) G^{J}(\theta _{j})-\lambda F^{J}(\theta
_{j})\right) }{\left( \left( 1-\lambda \right) JG^{J-1}(\theta _{j})g(\theta
_{j})+\lambda JF^{J-1}(\theta _{j})f(\theta _{j})\right) ^{2}}\cdot \\
&&\cdot \left( \frac{\text{d}F^{J-1}(\theta _{j})f(\theta _{j})}{\text{d}%
\theta _{j}}G^{J}(\theta _{j})-\frac{\text{d}G^{J-1}(\theta _{j})g(\theta
_{j})}{\text{d}\theta _{j}}F^{J}(\theta _{j})\right) ,
\end{eqnarray*}%
which is positive if and only if $dF^{J}/dG^{J}$ is increasing in $\theta
_{j}$, i.e., if and only if the $F^{J}$ likelihood-ratio dominates $G^{J}.$%
\bigskip
\end{proof}

\begin{proof}[Proof of Proposition \protect\ref{prop_fullinfo}]
When the platform becomes arbitrarily large $\left( \lambda \rightarrow
1\right) $, rents off-platform vanish and the sponsored seller appropriates
the entire surplus it generates. The sponsored seller's profit under
complete and symmetric information on each value $\theta _{j}$ is then given
by the first-best surplus $\pi _{j}^{\ast }(\theta )=\theta _{j}^{2}/2$. Fix
any matching mechanism, and let $F^{\ast }$ denote the distribution of $%
\theta _{j}$ that are matched to seller $j$. Therefore, the information
design problem of the platform is given by%
\begin{equation*}
\max_{\hat{F}\prec F^{\ast }}\int_{\theta _{L}}^{\theta _{H}}\pi _{j}^{\ast
}\left( \theta _{j}\right) \text{d}\hat{F}\left( \theta _{j}\right) .
\end{equation*}%
Because $\pi _{j}^{\ast }\left( \cdot \right) $ is strictly convex, the
platform-optimal information design sets $\hat{F}=F^{\ast }$, i.e., it
reveals to each consumer their true value for the sponsored seller.
Furthermore, by Proposition \ref{optimalmech}, it is optimal to match
consumers and sellers efficiently (i.e., to further let $F^{\ast }=F^{J}$)
when the platform reveals all the available information.\bigskip
\end{proof}

\begin{proof}[Proof of Proposition \protect\ref{prop_id}]
Fix $\widehat{q}$ and let $\pi (\theta _{j})$ denote the online profit
function. By \cite{dwma19} (Theorem 1), if there exists a distribution $\hat{%
F}\prec F^{J}$ and a convex supporting function $y(\theta _{j})$ such that 
\begin{align*}
y(\theta _{j})& \geq \pi (\theta _{j}) \\
\int_{\theta _{L}}^{\theta _{H}}y(\theta _{j})\text{d}F^{J}(\theta _{j})&
=\int_{\theta _{L}}^{\theta _{H}}y(\theta _{j})\text{d}\hat{F}(\theta _{j})
\\
\text{supp}(\hat{F})& \subseteq \{\theta _{j}\in \left[ \theta _{L},\theta
_{H}\right] :y(\theta _{j})=\pi (\theta _{j})\},
\end{align*}%
then $\hat{F}$ solves our problem for the given $\widehat{q}$.

Because our function $\pi \left( \theta _{j}\right) $ satisfies the \cite%
{dwma19} regularity conditions, computing the supporting function and the
associated distribution is relatively easy: by \cite{dwma19} (Proposition
1), the support of the optimal $\hat{F}$ can be found by solving 
\begin{align*}
& \min_{y}\int_{\theta _{L}}^{\theta _{H}}y(\theta _{j})\text{d}F^{J}(\theta
_{j}) \\
\text{s.t. }& y(\theta _{j})\geq \pi (\theta _{j})\ \forall \theta _{j} \\
& y\text{ convex}.
\end{align*}%
Moreover, the optimal $\hat{F}$ is then supported only on points where $%
y^{\ast }=\pi $. This result allows us to compute the supporting function
independently of the distribution.

To solve the problem, we first reduce it to the choice of one variable,
namely the slope $s$ of the affine function $y$ (when $y\neq \pi $). Call $%
x_{1},x_{2}$ the intersection points of $y$ and $\pi $. Then it holds that 
\begin{align*}
x_{2}^{2}/2-(x_{2}-\mu )\widehat{q}& =\mu ^{2}/2+s(x_{2}-\mu ) \\
\mu ^{2}/2-s(\mu -x_{1})& =x_{1}^{2}/2.
\end{align*}%
Therefore, solving we have 
\begin{equation*}
x_{1}=2s-\mu \text{ and }x_{2}=2s-\mu +2\hat{q},
\end{equation*}%
and we can write the objective as 
\begin{align*}
\min_{s}& \bigg[\int_{0}^{x_{1}(s)}(\theta _{j}^{2}/2)\text{d}F^{J}(\theta
_{j})+\int_{x_{1}(s)}^{x_{2}(s)}\left( x_{1}^{2}(s)/2+s(\theta
_{j}-x_{1}(s))\right) \text{d}F^{J}(\theta _{j}) \\
& +\int_{x_{2}(s)}^{1}\left( \theta _{j}^{2}/2-\widehat{q}(\theta _{j}-\mu
)\right) \text{d}F^{J}(\theta _{j})\bigg].
\end{align*}%
Solving the first-order condition for $s$ in the problem above yields 
\begin{equation*}
\int_{x_{1}(s)}^{x_{2}(s)}(\theta _{j}-\mu )\text{d}F^{J}(\theta _{j})=0.
\end{equation*}%
Finally, from \cite{dwma19} (Theorem 1), we know the support of $\hat{F}%
^{\ast }$ is $[0,x_{1}]\cup \{\mu \}\cup \lbrack x_{2},1]$. Moreover,
duality ensures that the \textit{optimal} supporting function $y^{\ast
}(\theta _{j})$ yields a mean-preserving contraction of $F^{J}$.

Finally, note that%
\begin{align*}
\Pi ^{\prime }(\widehat{q})& =-\lambda \left( \left( x_{2}^{2}-\mu
^{2}\right) /2-\widehat{q}(x_{2}-\mu )\right) dF^{J}(x_{2}(s))x_{2}^{\prime
}\left( q\right) \\
& -\lambda \int_{x_{2}(s)}^{1}(\theta _{j}-\mu )\text{d}F^{J}(\theta
_{j})+\left( 1-\lambda \right) \left( \mu -\widehat{q}\right)
\end{align*}%
hence%
\begin{eqnarray*}
\widehat{q} &=&\mu -\frac{\lambda }{1-\lambda }\left[ \int_{x_{2}(s)}^{1}(%
\theta _{j}-\mu )\text{d}F^{J}(\theta _{j})+\left( \frac{x_{2}^{2}-\mu ^{2}}{%
2}-\widehat{q}(x_{2}-\mu )\right) JF^{J-1}(x_{2})f\left( x_{2}\right)
x_{2}^{\prime }\left( q\right) \right] \\
\widehat{q} &=&\frac{\mu -\frac{\lambda }{1-\lambda }\left[
\int_{x_{2}(s)}^{1}(\theta _{j}-\mu )\text{d}F^{J}(\theta _{j})+\frac{%
x_{2}^{2}-\mu ^{2}}{2}JF^{J-1}(x_{2})f\left( x_{2}\right) x_{2}^{\prime
}\left( q\right) \right] }{1-\frac{\lambda }{1-\lambda }(x_{2}-\mu
)JF^{J-1}(x_{2})f\left( x_{2}\right) x_{2}^{\prime }\left( q\right) }.
\end{eqnarray*}%
Therefore $\widehat{q}$ is decreasing in $\lambda $ and consequently $x_{2}$
is decreasing and $x_{1}$ increasing in $\lambda $.\bigskip
\end{proof}

\begin{proof}[Proofs of Proposition \protect\ref{compstat}]
These results are obtained by differentiating expression (\ref{q0s}) for the
equilibrium quality off-platform with respect to $\lambda $. In particular,
whenever it is strictly positive, the equilibrium $\widehat{q}_{j}^{\ast
}(\theta _{j})$ is strictly decreasing in $\lambda $. Because the
equilibrium quality provision is equal to the marginal information rent, the
comparative statics of quality $\widehat{q}_{j}^{\ast }$ immediately extend
to the information rent $\widehat{U}_{j}^{\ast }$.\bigskip
\end{proof}

\begin{proof}[Proof of Proposition \protect\ref{nJ}]
These results are obtained by differentiating expression (\ref{q0s}) with
respect to $J.$ Whenever it is strictly positive, the equilibrium $\widehat{q%
}_{j}^{\ast }(\theta _{j})$ is strictly decreasing in $J$ for $J$ large
enough. Because the equilibrium quality provision is equal to the marginal
information rent, the comparative statics of quality $\widehat{q}_{j}^{\ast
} $ immediately extend to the information rent $\widehat{U}_{j}^{\ast }$.
\end{proof}

\newpage

\bibliographystyle{apalike}
\bibliography{ale2,general}

\begin{thebibliography}{}

\bibitem[Acemoglu et~al., 2022]{ammo21}
Acemoglu, D., Makhdoumi, A., Malekian, A., and Ozdaglar, A. (2022).
\newblock Too much data: Prices and inefficiencies in data markets.
\newblock {\em American Economic Journal: Microeconomics}, 14(4):218--256.

\bibitem[Admati and Pfleiderer, 1990]{adpf90}
Admati, A.~R. and Pfleiderer, P. (1990).
\newblock Direct and indirect sale of information.
\newblock {\em Econometrica}, 58(4):901--928.

\bibitem[Aggarwal et~al., 2019]{agga19}
Aggarwal, G., Badanidiyuru, A., and Mehta, A. (2019).
\newblock Autobidding with constraints.
\newblock In {\em International Conference on Web and Internet Economics},
  pages 17--30. Springer.

\bibitem[Anderson and Bedre-Defolie, 2021]{anbe21}
Anderson, S.~P. and Bedre-Defolie, {\"O}. (2021).
\newblock Hybrid platform model.
\newblock Technical Report 16243, CEPR.

\bibitem[Anderson and Renault, 1999]{anre99}
Anderson, S.~P. and Renault, R. (1999).
\newblock Pricing, product diversity, and search costs: A
  bertrand-chamberlin-diamond model.
\newblock {\em RAND Journal of Economics}, 30(4):719--735.

\bibitem[Armstrong and Zhou, 2011]{arzh11}
Armstrong, M. and Zhou, J. (2011).
\newblock Paying for prominence.
\newblock {\em The Economic Journal}, 121(556):F368--F395.

\bibitem[Balseiro et~al., 2021]{bals21}
Balseiro, S.~R., Deng, Y., Mao, J., Mirrokni, V.~S., and Zuo, S. (2021).
\newblock The landscape of auto-bidding auctions: Value versus utility
  maximization.
\newblock In {\em Proceedings of the 22nd ACM Conference on Economics and
  Computation}, pages 132--133.

\bibitem[Bar-Isaac and Shelegia, 2020]{bash20}
Bar-Isaac, H. and Shelegia, S. (2020).
\newblock Search, showrooming, and retailer variety.
\newblock Technical report, CEPR Discussion Paper No. DP15448.

\bibitem[Baye and Morgan, 2001]{baye2001}
Baye, M. and Morgan, J. (2001).
\newblock {Information gatekeepers on the internet and the competitiveness of
  homogeneous product markets}.
\newblock {\em American Economic Review}, 91(3):454--474.

\bibitem[Bergemann and Bonatti, 2019]{bebo19}
Bergemann, D. and Bonatti, A. (2019).
\newblock Markets for information: An introduction.
\newblock {\em Annual Review of Economics}, 11:85--107.

\bibitem[Bergemann et~al., 2022]{bebg22}
Bergemann, D., Bonatti, A., and Gan, T. (2022).
\newblock The economics of social data.
\newblock {\em RAND Journal of Economics}, 53(2):263--296.

\bibitem[Bergemann et~al., 2023]{bebw23}
Bergemann, D., Bonatti, A., and Wu, N. (2023).
\newblock Managed campaigns and data-augmented auctions for digital
  advertising.
\newblock Technical report, Cowles Foundation for Research in Economics.

\bibitem[Bergemann et~al., 2015]{bebm15}
Bergemann, D., Brooks, B., and Morris, S. (2015).
\newblock The limits of price discrimination.
\newblock {\em American Economic Review}, 105:921--957.

\bibitem[Bergemann et~al., 2021]{bebm21}
Bergemann, D., Brooks, B., and Morris, S. (2021).
\newblock Search, information, and prices.
\newblock {\em Journal of Political Economy}, 129(8):2275--2319.

\bibitem[Blackwell, 1951]{blac51}
Blackwell, D. (1951).
\newblock Comparison of experiments.
\newblock In {\em Proceedings of the Second Berkeley Symposium in Mathematical
  Statistics and Probability}, pages 93--102. University of California Press,
  Berkeley.

\bibitem[Bonatti, 2011]{bona11}
Bonatti, A. (2011).
\newblock Brand-specific tastes for quality.
\newblock {\em International Journal of Industrial Organization}, 29:562--575.

\bibitem[Calzolari and Denicolo, 2015]{cade15}
Calzolari, G. and Denicolo, V. (2015).
\newblock Exclusive contracts and market dominance.
\newblock {\em American Economic Review}, 105(11):3321--51.

\bibitem[Chen, 2022]{chen22}
Chen, D. (2022).
\newblock The market for attention.
\newblock Technical report, Stanford GSB.

\bibitem[Choi et~al., 2019]{chjk19}
Choi, J., Jeon, D., and Kim, B. (2019).
\newblock Privacy and personal data collection with information externalities.
\newblock {\em Journal of Public Economics}, 173:113--124.

\bibitem[Condorelli and Szentes, 2022]{cosz22}
Condorelli, D. and Szentes, B. (2022).
\newblock Bargaining rents vs. efficient sorting in two-sided matching by an
  informed platform.
\newblock Technical report, University of Warwick and LSE.

\bibitem[Cr\'{e}mer et~al., 2019]{crem19}
Cr\'{e}mer, J., de~Montjoye, Y.-A., and Schweitzer, H. (2019).
\newblock Competition policy for the digital era.
\newblock Technical report, European Commission.

\bibitem[de~Corni\`{e}re and de~Nijs, 2016]{dede16}
de~Corni\`{e}re, A. and de~Nijs, R. (2016).
\newblock Online advertising and privacy.
\newblock {\em Rand Journal of Economics}, 47(1):48--72.

\bibitem[De~Corniere and Taylor, 2019]{deta19}
De~Corniere, A. and Taylor, G. (2019).
\newblock A model of biased intermediation.
\newblock {\em The RAND Journal of Economics}, 50(4):854--882.

\bibitem[Decarolis et~al., 2023]{decaro23}
Decarolis, F., Rovigatti, G., Rovigatti, M., and Shakhgildyan, K. (2023).
\newblock Artificial intelligence, algorithmic bidding and collusion in online
  advertising.
\newblock Technical Report 18009, CEPR.

\bibitem[Deng et~al., 2021]{deng21}
Deng, Y., Mao, J., Mirrokni, V., and Zuo, S. (2021).
\newblock Towards efficient auctions in an auto-bidding world.
\newblock In {\em Proceedings of the Web Conference 2021}, pages 3965--3973.

\bibitem[Diamond, 1971]{diamond71}
Diamond, P.~A. (1971).
\newblock A model of price adjustment.
\newblock {\em Journal of Economic Theory}, 3(2):156--168.

\bibitem[Donnelly et~al., 2022]{dokm22}
Donnelly, R., Kanodia, A., and Morozov, I. (2022).
\newblock Welfare effects of personalized rankings.
\newblock {\em Available at SSRN 3649342}.

\bibitem[Dworczak and Martini, 2019]{dwma19}
Dworczak, P. and Martini, G. (2019).
\newblock The simple economics of optimal persuasion.
\newblock {\em Journal of Political Economy}, 127(5):1993--2048.

\bibitem[Elliott et~al., 2020]{elgk20}
Elliott, M., Galeotti, A., and Koh, A. (2020).
\newblock Market segmentation through information.
\newblock Technical report, Cambdrige University.

\bibitem[Fuchs and Skrzypacz, 2015]{fusk15}
Fuchs, W. and Skrzypacz, A. (2015).
\newblock Government interventions in a dynamic market with adverse selection.
\newblock {\em Journal of Economic Theory}, 158:371--406.

\bibitem[Gomes and Pavan, 2016]{gopa16}
Gomes, R. and Pavan, A. (2016).
\newblock Many-to-many matching and price discrimination.
\newblock {\em Theoretical Economics}, 11(3):1005--1052.

\bibitem[Gomes and Pavan, 2022]{gopa22}
Gomes, R. and Pavan, A. (2022).
\newblock Price customization and targeting in matching markets.
\newblock {\em RAND Journal of Economics}, forthcoming.

\bibitem[Gur et~al., 2022]{gur22}
Gur, Y., Macnamara, G., Morgenstern, I., and Saban, D. (2022).
\newblock Information disclosure and promotion policy design for platforms.
\newblock Technical report, Stanford University.

\bibitem[Gutierrez, 2022]{guti22}
Gutierrez, G. (2022).
\newblock The welfare consequences of regulating amazon.
\newblock Technical report, University of Washington.

\bibitem[Hagiu and Jullien, 2011]{haju11}
Hagiu, A. and Jullien, B. (2011).
\newblock Why do intermediaries divert search?
\newblock {\em RAND Journal of Economics}, 42(2):337--362.

\bibitem[Hagiu et~al., 2022]{hatw20}
Hagiu, A., Teh, T.-H., and Wright, J. (2022).
\newblock Should platforms be allowed to sell on their own marketplaces?
\newblock {\em RAND Journal of Economics}, forthcoming.

\bibitem[Hidir and Vellodi, 2021]{hive21}
Hidir, S. and Vellodi, N. (2021).
\newblock Privacy, personalization, and price discrimination.
\newblock {\em Journal of the European Economic Association}, 19(2):1342--1363.

\bibitem[Ichihashi, 2020]{ichi20aer}
Ichihashi, S. (2020).
\newblock Online privacy and information disclosure by consumers.
\newblock {\em American Economic Review}, 110(2):569--595.

\bibitem[Ichihashi, 2021]{ichi21}
Ichihashi, S. (2021).
\newblock The economics of data externalities.
\newblock {\em Journal of Economic Theory}, 196:105316.

\bibitem[Idem, 2021]{idem20}
Idem, B. (2021).
\newblock Coexistence of centralized and decentralized markets.
\newblock Technical report, Pennsylvania State University.

\bibitem[Inderst and Ottaviani, 2012a]{inot12a}
Inderst, R. and Ottaviani, M. (2012a).
\newblock Competition through commissions and kickbacks.
\newblock {\em American Economic Review}, 102(2):780--809.

\bibitem[Inderst and Ottaviani, 2012b]{inot12b}
Inderst, R. and Ottaviani, M. (2012b).
\newblock How (not) to pay for advice: A framework for consumer financial
  protection.
\newblock {\em Journal of Financial Economics}, 105(2):393--411.

\bibitem[Jullien, 2000]{jullien}
Jullien, B. (2000).
\newblock {Participation Constraints in Adverse Selection Models}.
\newblock {\em Journal of Economic Theory}, 93(1):1--47.

\bibitem[Kang and Muir, 2021]{muir21}
Kang, Z.~Y. and Muir, E. (2021).
\newblock Contracting and vertical control by a dominant platform.
\newblock Technical report, Stanford University.

\bibitem[Ke et~al., 2022]{tony22}
Ke, T.~T., Ling, S., and Lu, M.~Y. (2022).
\newblock Information design of online platforms.
\newblock Technical report, Chinese University of Hong Kong.

\bibitem[Kirpalani and Philippon, 2021]{kiph21}
Kirpalani, R. and Philippon, T. (2021).
\newblock Data sharing and market power with two-sided platforms.
\newblock Technical Report 28023, NBER.

\bibitem[Kolotilin et~al., 2017]{komy17}
Kolotilin, A., Li, M., Mylovanov, T., and Zapechelnyuk, A. (2017).
\newblock Persuasion of a privately informed receiver.
\newblock {\em Econometrica}, 85(6):1949--1964.

\bibitem[Lam, 2021]{lam21}
Lam, H.~T. (2021).
\newblock Platform search design and market power.
\newblock Technical report, Northwestern University.

\bibitem[Lee, 2021]{lee21}
Lee, C. (2021).
\newblock Optimal recommender system design.
\newblock Technical report, University of Pennsylvania.

\bibitem[Lee and Musolff, 2021]{muso21}
Lee, K.-H. and Musolff, L. (2021).
\newblock Entry into two-sided markets shaped by platform-guided search.
\newblock Technical report, Princeton University.

\bibitem[Liang and Madsen, 2020]{lima20}
Liang, A. and Madsen, E. (2020).
\newblock Data and incentives.
\newblock Technical report, Northwestern University.

\bibitem[Maskin and Riley, 1984]{MR84}
Maskin, E. and Riley, J. (1984).
\newblock {Monopoly with Incomplete Information}.
\newblock {\em Rand Journal of Economics}, 15(2):171--196.

\bibitem[Mekonnen et~al., 2023]{memupa23}
Mekonnen, T., Murra-Anton, Z., and Pakzad-Hurson, B. (2023).
\newblock Persuaded search.
\newblock Technical report, Brown University.

\bibitem[Mikl{\'o}s-Thal and Shaffer, 2021]{mish19}
Mikl{\'o}s-Thal, J. and Shaffer, G. (2021).
\newblock Input price discrimination by resale market.
\newblock {\em RAND Journal of Economics}, 52:727--757.

\bibitem[Mussa and Rosen, 1978]{muro78}
Mussa, M. and Rosen, S. (1978).
\newblock {Monopoly and Product Quality}.
\newblock {\em Journal of Economic Theory}, 18(2):301--317.

\bibitem[Mustri et~al., 2022]{muaa22}
Mustri, E. A.~S., Adjerid, I., and Acquisti, A. (2022).
\newblock Behavioral advertising and consumer welfare: An empirical
  investigation.
\newblock Technical report, Carnegie Mellon University.

\bibitem[Padilla et~al., 2020]{papp20}
Padilla, J., Perkins, J., and Piccolo, S. (2020).
\newblock Self-preferencing in markets with vertically-integrated gatekeeper
  platforms.
\newblock {\em Available at SSRN 3701250}.

\bibitem[Philippon and Skreta, 2012]{phsk12}
Philippon, T. and Skreta, V. (2012).
\newblock Optimal interventions in markets with adverse selection.
\newblock {\em American Economic Review}, 102(1):1--28.

\bibitem[Raval, 2020]{raval20}
Raval, S. (2020).
\newblock ebay makes search more efficient through personalization.
\newblock {\em Available at
  \url{https://tech.ebayinc.com/product/ebay-makes-search-more-efficient-throu%
gh-personalization/}}.

\bibitem[Rayo and Segal, 2010]{rase10}
Rayo, L. and Segal, I. (2010).
\newblock Optimal information disclosure.
\newblock {\em Journal of Political Economy}, 118(5):949--987.

\bibitem[Shaked and Shanthikumar, 1994]{shsh94}
Shaked, M. and Shanthikumar, J. (1994).
\newblock {\em Stochastic Orders and their Applications}.
\newblock Academic Press, San Diego.

\bibitem[Shi, 2022]{shi22}
Shi, P. (2022).
\newblock Optimal match recommendations in two-sided marketplaces with
  endogenous prices.
\newblock Technical report, University of Souther California.

\bibitem[Teh and Wright, 2022]{tewr22}
Teh, T.-H. and Wright, J. (2022).
\newblock Intermediation and steering: Competition in prices and commissions.
\newblock {\em American Economic Journal: Microeconomics}, 14(2):281--321.

\bibitem[Tirole, 2012]{tiro12}
Tirole, J. (2012).
\newblock Overcoming adverse selection: How public intervention can restore
  market functioning.
\newblock {\em American Economic Review}, 102(1):29--59.

\bibitem[Valenzuela-Stookey, 2022]{vale22}
Valenzuela-Stookey, Q. (2022).
\newblock Platform-mediated competition.
\newblock Technical report, UC Berkeley.

\bibitem[Wang and Wright, 2020]{wawr20}
Wang, C. and Wright, J. (2020).
\newblock Search platforms: Showrooming and price parity clauses.
\newblock {\em RAND Journal of Economics}, 51(1):32--58.

\bibitem[Yang, 2022]{yang22}
Yang, K.~H. (2022).
\newblock Selling consumer data for profit: Optimal market-segmentation design
  and its consequences.
\newblock {\em American Economic Review}, 112(4):1364--93.

\bibitem[Zhu et~al., 2022]{zhmosh22}
Zhu, R., Moorthy, S., and Shi, X. (2022).
\newblock Advertising platforms and privacy.
\newblock Technical report, University of Toronto.

\end{thebibliography}

\end{document}